\algrenewcommand\algorithmicrequire{\textbf{Input:}}
\algrenewcommand\algorithmicensure{\textbf{Output:}}
\newcommand{\cost}{\mathrm{cost}}
\newcommand{\obj}{\mathrm{obj}}
\DeclareMathOperator*{\E}{{\mathbb{E}}}
\newcommand{\calC}{{\mathcal{C}}}
\newcommand{\poly}{{\mathrm{poly}}}
\newcommand{\adm}{\mathrm{adm}}
\newcommand{\dc}{d_{\mathrm{cross}}}
\newcommand{\core}{\mathrm{core}}
\newcommand{\supp}{\mathrm{supp}}
\newcommand{\covers}{\mathrm{cover}}
\newcommand{\opt}{\textsc{Opt}}
\newcommand{\eps}{\varepsilon}
\newcommand{\covereps}{\gamma}
\newcommand{\coverClusterLP}{{covering cluster LP}\xspace}
\newcommand{\dadm}{d_{\mathrm{adm}}}
\newcommand{\Nadm}{N_{\mathrm{adm}}}
\newcommand{\Ncand}{N_{\mathrm{cand}}}
\newcommand{\hatp}{\hat{p}}
\newcommand{\zt}{z^{(t)}}
\newcommand{\pt}{p^{(t)}}
\newcommand{\zmwu}{z^{\star}}
\newcommand{\tmwu}{T_{\mathrm{MW}}}
\newcommand{\pureclusterlpratio}{1.485}
\newcounter{enumcount}
\newcommand\drop[1]{}
\title{Static to Dynamic Correlation Clustering}
\author{Nairen Cao}{New York University, United States of America}{nc1827@nyu.edu}{https://orcid.org/0000-0002-4961-763X}{}
\author{Vincent Cohen-Addad}{Google Research, United States of America}{cohenaddad@google.com}{https://orcid.org/0000-0002-0779-8962}{}
\author{Euiwoong Lee}{University of Michigan, United States of America}{euiwoong@umich.edu}{https://orcid.org/0000-0003-1454-7587}{Supported in part by NSF grant CCF-2236669 and Google.}
\author{Shi Li}{Nanjing University, China}{shili@nju.edu.cn}{https://orcid.org/0000-0001-9140-9415}{Affiliated with the School of Computer Science in Nanjing University, and supported by the State Key Laboratory for Novel Software Technology, and the New Cornerstone Science Laboratory.}
\author{David Rasmussen Lolck}{University of Copenhagen, Denmark}{dalo@di.ku.dk}{https://orcid.org/0000-0001-8835-0926}{Supported by VILLUM Foundation Grant 54451, Basic Algorithms Research Copenhagen (BARC).}
\author{Alantha Newman}{Université Grenoble Alpes, France}{alantha.newman@grenoble-inp.fr}{https://orcid.org/0009-0009-7353-7734}{}
\author{Mikkel Thorup}{University of Copenhagen, Denmark}{mthorup@di.ku.dk}{https://orcid.org/0000-0001-5237-1709}{Supported by VILLUM Foundation Grant 54451, Basic Algorithms Research Copenhagen (BARC).}
\author{Lukas Vogl}{EPFL, Switzerland}{lukas.vogl@epfl.ch}{https://orcid.org/0000-0002-8241-536X}{Supported by the Swiss National Science Foundation project 200021-184656 "Randomness in Problem Instances and Randomized Algorithms".}
\author{Shuyi Yan}{University of Copenhagen, Denmark}{shya@di.ku.dk}{https://orcid.org/0000-0001-9439-8942}{Supported by VILLUM Foundation Grant 54451, Basic Algorithms Research Copenhagen (BARC).}
\author{Hanwen Zhang}{University of Copenhagen, Denmark}{hazh@di.ku.dk}{https://orcid.org/0000-0002-3149-7799}{Supported by VILLUM Foundation Grant 54451, Basic Algorithms Research Copenhagen (BARC),  Independent Research Fund Denmark, grant 1054-00032B, and the Carlsberg Foundation, grant CF24-1929. }
\authorrunning{Cao et al.} %
\keywords{Dynamic Algorithms, Correlation Clustering, Approximation Algorithms} %
\begin{document}

\maketitle

\begin{abstract}
    Correlation clustering is a well-studied problem, first proposed by Bansal, Blum, and Chawla [Mach. Learn. '04]. 
    The input is an unweighted, undirected graph. The problem is to cluster
    the vertices so as to minimize the number of edges between vertices in different clusters and missing edges between vertices inside the same cluster. This problem has a wide application in data mining and machine learning. 
    We introduce a general framework that transforms  existing static correlation clustering algorithms into fully-dynamic ones that work against an adaptive adversary. 

    We show how to apply our framework to known efficient correlation clustering algorithms, starting from
    the classic 3-approximate Pivot algorithm from Ailon, Charikar and Newman [JACM'08]. Applied to the most recent sublinear 
    $1.485$-approximation algorithm from Cao, Cohen-Addad, Lee, Li, Lolck, Newman, Thorup, Vogl, Yan and Zhang [STOC'25] \footnote[1]{\label{foot:1.485} The conference paper from Cao, Cohen-Addad, Lee, Li, Lolck, Newman, Thorup, Vogl, Yan and Zhang [STOC'25] claimed an approximation factor of $1.437$, based on result from Cao, Cohen-Addad, Lee, Li, Newman and Vogl [STOC'24]. However, the STOC'24 paper has a subtle bug, which was fixed in the arXiv version with the correct ratio of $1.485$. }, we get an $1.485$-approximation fully-dynamic algorithm that works with worst-case constant update time. The original static algorithm gets its approximation factor with constant probability, and we get the same against an adaptive adversary in the sense that for any given update step, not known to our algorithm, our solution is an $1.485$-approximation with constant probability when we reach this update.
     
    Most of previous dynamic algorithms, including the celebrated result from Behnezhad, Charikar, Ma and Tan [FOCS'19], had approximation factors around $3$ in expectation, and they could only handle an oblivious adversary. 
    A recent algorithm by Braverman, Dharangutte, Pai, Shah, and Wang [AISTATS'25] handles an adaptive adversary, but it has a large unspecified constant approximation ratio. 
    This contrasts with our general transformation, which works with all the best approximation factors known for the static case.
\end{abstract}

\clearpage

\section{Introduction}

Correlation clustering is a classic clustering problem. Given an undirected unweighted graph $G=(V,E)$, our goal is to compute a clustering of the vertices that minimizes the number of edges between vertices in different clusters and missing edges between vertices inside the same cluster. 

In this work, we address correlation clustering in the dynamic setting, where the graph $G$ is updated by inserting and deleting edges.\footnote{In other works, it is usual to describe the input as being a complete graph with edges of weight either $+1$ or $-1$. Since we are working in a dynamic setting, and care about the running time, it makes more sense to use the convention that $+1$ edges are edges in the input graph and $-1$ edges are the non-edges.} 
Our goal is to maintain a clustering that is good with respect to the correlation clustering problem for the current graph. 
We present a framework that transforms efficient algorithms for the static correlation clustering problem into a dynamic algorithm that can handle edge updates against an adaptive adversary, at almost no cost to the approximation ratio. However, the efficiency has to be relative to a compressed graph representation that we shall introduce later.
We will apply this transformation to all the best static near-linear time algorithms.

\begin{figure}[hb]
    \centering
    \begin{subfigure}{0.45\linewidth}
        \begin{tikzpicture}[
  scale=0.6,
  every node/.style={font=\tiny},
  nodeStyle/.style={
    circle, draw=black, thick, inner sep=1.8pt,
    minimum size=4mm, fill=white
  },
  posEdge/.style={thick},     %
  posEdgeBad/.style={draw=red!70!black, very thick}, %
  negEdge/.style={draw=red!70!black, dashed, very thick}, %
  clusterBox/.style={
    rounded corners=6mm, draw=#1!60, fill=#1!20,
    fill opacity=0.15, very thick
  },
  peel/.style={rounded corners=6mm, draw=green!60!black,
               fill=green!15, fill opacity=0.18, very thick},
  super/.style={circle, double, draw=black,
                minimum size=6mm, very thick},
]

\node[nodeStyle,fill=blue!15] (a1) at (1.3,1.4)  {1};
\node[nodeStyle,fill=blue!15] (a2) at (0.9,2.4){2};
\node[nodeStyle,fill=blue!15] (a3) at (0,2.7){3};
\node[nodeStyle,fill=blue!15] (a4) at (-0.9,1.9){4};
\node[nodeStyle,fill=blue!15] (a5) at (-0.6,0.9){5};
\node[nodeStyle,fill=blue!15] (a6) at (0.8,0.8) {6};

\node[clusterBox=blue, fit=(a1) (a2) (a3) (a4) (a5) (a6), inner sep=10pt] (boxA) {};

\draw[posEdge] (a1) -- (a2) -- (a3) -- (a4) -- (a5) -- (a6) -- (a1);
\draw[posEdge] (a2) -- (a4) -- (a6) -- (a2);
\draw[posEdge] (a3) -- (a6);
\draw[posEdge] (a3) -- (a5) -- (a1);

\node[nodeStyle,fill=green!15] (b1) at (4.0,2.5) {7};
\node[nodeStyle,fill=green!15] (b2) at (5.0,2.1) {8};
\node[nodeStyle,fill=green!15] (b3) at (4.6,1.3) {9};
\node[nodeStyle,fill=green!15] (b4) at (3.4,1.2) {10};

\node[clusterBox=green, fit=(b1) (b2) (b3) (b4), inner sep=10pt] (boxB) {};

\draw[posEdge] (b1) -- (b2) -- (b3) -- (b4) -- (b1);
\draw[posEdge] (b2) -- (b4);

\node[nodeStyle,fill=orange!15] (c1) at (-0.5,-2.0) {11};
\node[nodeStyle,fill=orange!15] (c2) at (0.6,-1.6)  {12};

\node[clusterBox=orange, fit=(c1) (c2), inner sep=8pt] (boxC) {};

\draw[posEdge] (c1) -- (c2);

\node[nodeStyle,fill=violet!15] (d1) at (3.6,-1.9) {13};
\node[nodeStyle,fill=violet!15] (d2) at (4.6,-1.4) {14};
\node[nodeStyle,fill=violet!15] (d3) at (5.1,-2.3) {15};

\node[clusterBox=violet, fit=(d1) (d2) (d3), inner sep=8pt] (boxD) {};

\draw[posEdge] (d1) -- (d2) -- (d3) -- (d1);

\draw[posEdge] (a2) -- (b1);
\draw[posEdge] (a6) -- (b4);
\draw[posEdge] (a5) -- (c1);
\draw[posEdge] (b4) -- (d1);
\draw[posEdge] (c2) -- (d2);
\draw[posEdge] (a2) -- (d2);
\draw[posEdge] (b3) -- (d2);

\end{tikzpicture}
    \end{subfigure}
    \begin{subfigure}{0.45\linewidth}
        \begin{tikzpicture}[
  scale=0.6,
  every node/.style={font=\tiny},
  nodeStyle/.style={
    circle, draw=black, thick, inner sep=1.8pt,
    minimum size=4mm, fill=white
  },
  posEdge/.style={thick},     %
  posEdgeBad/.style={draw=red!70!black, very thick}, %
  negEdge/.style={draw=red!70!black, dashed, very thick}, %
  clusterBox/.style={
    rounded corners=6mm, draw=#1!60, fill=#1!20,
    fill opacity=0.15, very thick
  },
  peel/.style={rounded corners=6mm, draw=green!60!black,
               fill=green!15, fill opacity=0.18, very thick},
  super/.style={circle, double, draw=black,
                minimum size=6mm, very thick},
]

\node[nodeStyle,fill=blue!15] (a1) at (1.3,1.4)  {1};
\node[nodeStyle,fill=blue!15] (a2) at (0.9,2.4){2};
\node[nodeStyle,fill=blue!15] (a3) at (0,2.7){3};
\node[nodeStyle,fill=blue!15] (a4) at (-0.9,1.9){4};
\node[nodeStyle,fill=blue!15] (a5) at (-0.6,0.9){5};
\node[nodeStyle,fill=blue!15] (a6) at (0.8,0.8) {6};

\node[clusterBox=blue, fit=(a1) (a2) (a3) (a4) (a5) (a6), inner sep=10pt] (boxA) {};

\draw[posEdge] (a1) -- (a2) -- (a3) -- (a4) -- (a5) -- (a6) -- (a1);
\draw[posEdge] (a2) -- (a4) -- (a6) -- (a2);
\draw[posEdge] (a3) -- (a6);
\draw[posEdge] (a3) -- (a5) -- (a1);

\node[nodeStyle,fill=green!15] (b1) at (4.0,2.5) {7};
\node[nodeStyle,fill=green!15] (b2) at (5.0,2.1) {8};
\node[nodeStyle,fill=green!15] (b3) at (4.6,1.3) {9};
\node[nodeStyle,fill=green!15] (b4) at (3.4,1.2) {10};

\node[clusterBox=green, fit=(b1) (b2) (b3) (b4), inner sep=10pt] (boxB) {};

\draw[posEdge] (b1) -- (b2) -- (b3) -- (b4) -- (b1);
\draw[posEdge] (b2) -- (b4);

\node[nodeStyle,fill=orange!15] (c1) at (-0.5,-2.0) {11};
\node[nodeStyle,fill=orange!15] (c2) at (0.6,-1.6)  {12};

\node[clusterBox=orange, fit=(c1) (c2), inner sep=8pt] (boxC) {};

\draw[posEdge] (c1) -- (c2);

\node[nodeStyle,fill=violet!15] (d1) at (3.6,-1.9) {13};
\node[nodeStyle,fill=violet!15] (d2) at (4.6,-1.4) {14};
\node[nodeStyle,fill=violet!15] (d3) at (5.1,-2.3) {15};

\node[clusterBox=violet, fit=(d1) (d2) (d3), inner sep=8pt] (boxD) {};

\draw[posEdge] (d1) -- (d2) -- (d3) -- (d1);

\draw[posEdgeBad] (a2) -- (b1);
\draw[posEdgeBad] (a6) -- (b4);
\draw[posEdgeBad] (a5) -- (c1);
\draw[posEdgeBad] (b4) -- (d1);
\draw[posEdgeBad] (c2) -- (d2);
\draw[posEdgeBad] (a2) -- (d2);
\draw[posEdgeBad] (b3) -- (d2);
\draw[negEdge] (b1) -- (b3);
\draw[negEdge] (a3) -- (a1) -- (a4);
\draw[negEdge] (a2) -- (a5);

\end{tikzpicture}
    \end{subfigure}
    \caption{A clustering of a graph and its cost. Red solid lines marks all edges between vertices in different clusters. Red dashed lines mark all the missing edges between vertices in the same cluster. This formulation is equivalent to the signed graph version, where $+$ edges are treated as edges and $-$ edges are treated as non-edges. }
    \label{fig:cctikz}
\end{figure}

\subsection{Prior work} 
The correlation clustering problem was first studied by Bansal, Blum, and Chawla in \cite{BBC04}. 
The model has a number of applications such as clustering ensembles \cite{bonchi2013overlapping}, duplicate detection \cite{arasu2009large}, community mining \cite{chen2012clustering}, disambiguation tasks \cite{kalashnikov2008web}, and automated labeling \cite{agrawal2009generating, chakrabarti2008graph}. Correlation clustering is of fundamental importance for the machine learning and data mining communities and a large body of work for solving correlation clustering in practice keeps on appearing at flagship conferences in these areas~\cite{DMM2024prunedPivot,BravermanDPSW25,ChakrabartyM-NIPS23,PanPORRJ15,ShiDELM21,pmlr-v162-veldt22a}. 

\paragraph*{Polynomial time approximation}
The first (large) constant approximation algorithm is due to \cite{BBC04}. The constant was then improved to $4$ by \cite{CGW05} who also proved that the problem is APX-hard. Later, Ailon, Charikar, and Newman \cite{ACN08} introduced the crucial idea of pivot-based algorithms, where in each round, the algorithm picks a random unclustered vertex and creates a new cluster consisting of this vertex and some of its neighbors. 
They gave a combinatorial 3-approximation algorithm and improved the approximation ratio to 2.5 by rounding a standard linear program (LP) which has an integrality gap of at least 2. \cite{CMSY15} improved the approximation ratio to 2.06 using a more sophisticated rounding scheme of the same LP.

To bypass the integrality gap of 2, Cohen-Addad, Lee, and Newman \cite{CLN22} used the Sherali-Adams hierarchy and achieved a $1.995$-approximation, which was then improved to a $1.73$-approximation by combining the pivot-based rounding with the newly-developed set-based rounding \cite{CLLN23+}. The best approximation algorithm currently known is a $\pureclusterlpratio^{\,\ref{foot:1.485}}$-approximation due to \cite{cao2024understanding} which proposed the Cluster LP to achieve this result. They
also showed that obtaining a 25/24-approximation is NP-hard.

\paragraph*{Deterministic algorithms} 
When it comes to deterministic algorithms, the fastest one with (large) constant factor approximation is Algorithm 1 in \cite{CLMNP21} which has a trivial deterministic implementation is $O(nm)$ time. A deterministic factor $(3+\eps)$- approximation is presented in \cite{fischer2025faster} running in $\tilde O(n^3)$ time. One can obtain a deterministic 2.5-approximation in (large) polynomial time by first solving the $O(n^3)$-sized LP with the Ellipsoid method and rounding it with the work of~\cite{DBLP:journals/mor/ZuylenW09}.
\footnote{In the conference version of \cite[Proposition 2.1]{BravermanDPSW25}, the authors claim that the existence proof of \cite{DBLP:conf/innovations/Assadi022} gives a deterministic $O(m)$ time constant-approximate algorithm. The authors have since retracted their claim in the most recent arXiv version \cite{braverman2024fully}. 
}

\paragraph*{Near-linear time}
For the purpose of transforming static algorithms into dynamic ones, our best hope are those with near-linear running time, that is $\tilde O(m)$-time, where $m$ is the number of edges. 
We remark that while sublinear time algorithms exist, their sublinear time holds only when $m = \omega(n)$. 
As we have to deal with the case $m = O(n)$, these algorithms are not different from those with a near-linear running time for our purpose.
The best deterministic running time of $O(nm)$ from \cite{CLMNP21} is thus far too slow.

Among all near-linear time algorithms, the first one is the classical linear time $3$-approximate Pivot algorithm of \cite{ACN08}. 
The first sublinear time algorithm in $O(n\log^2 n)$ time with constant approximation ratio is from \cite{DBLP:conf/innovations/Assadi022}, which is based on the parallel algorithm of \cite{CLMNP21}. 
We also have the more recent local-search based $1.847$-approximation of \cite{CLMTYZ24} which runs in sublinear time. 
Recently, in \cite{cao2025fastLP}, it was shown that the $\pureclusterlpratio$-approximation algorithm based on the Cluster LP from \cite{cao2024understanding} can be implemented in sublinear time. 
All of these near-linear time algorithms are Monte Carlo algorithms.

\paragraph*{Dynamic clustering with low approximation ratio} 
The dynamic setting has also been explored. 
Much of the work here has been on how to maintain a pivot-based clustering under these changes. One result is that of \cite{Behnezhad2019FullyDM}, which shows how to maintain a Maximal Independent Set with an update time of $O(\log^2 n\log^2\Delta)$ where $\Delta$ is the maximum degree of any vertex. 
This can be used to maintain the $3$-approximate Pivot algorithm for a fixed permutation of pivots. This was since improved in \cite{DMM2024prunedPivot} to maintaining a $(3+\eps)$-approximation with an update time of $O(1/\eps)$. 
Finally, it has recently been shown that the barrier of $3$-approximation indeed can be broken in the dynamic setting in \cite{behnezhad2024fullydynamicCC}. 
Here they give a $2.997$-approximation algorithm, again based on maintaining the pivot under additions and deletions with a polylogarithmic update time. 
\emph{All of these pivot-based results only work with an oblivious adversary}: they fix the random order of the pivots in advance, maintaining it for all updates. 
All of these results only work with an oblivious adversary: they fix the random order of the pivots in advance, maintaining it for all updates. 

\paragraph*{Dynamic against adaptive adversary}
For all the above pivot-based dynamic algorithms, an \emph{adaptive adversary} could easily learn the order of the pivots, and then make them perform very badly afterward by constructing an input graph that is bad for this pivot order. 
Recently, \cite{BravermanDPSW25} presented a dynamic large constant factor approximation algorithm supporting updates in $O(\log^2 n)$ amortized time against an adaptive adversary. 
It is based on the non-pivot-based $O(n\log^2 n)$ time randomized algorithm from \cite{DBLP:conf/innovations/Assadi022}, but at the cost of a large approximation factor. They also mention ``\textit{It is unclear how the pivot-based algorithms could be made to work in the adversarially robust setting}''\cite[2nd page]{BravermanDPSW25}.  In particular, this concerns the classic 3-approximate Pivot algorithms.

The general issue of getting a Monte Carlo randomized dynamic algorithm to work against an adaptive adversary is addressed in \cite{10.1145/3519935.3520064}. 
They show a generic transformation from the oblivious setting to the adaptive setting using tools from differential privacy, but the transformation has an extra polynomial factor on the update time even in the few cases studied in their paper. In this paper, however, we are aiming for update times that are polylogarithmic or even constant.

\paragraph*{Clustering for dynamic streams}
A recent result from \cite{assadi2025cc}
shows that for a fully-dynamic graph, we can maintain a linear sketch of sublinear $\tilde O(n)$ size from which we can derive a correlation clustering in randomized polynomial time. 
The basic idea is that they desparsify the sketch into a graph $G'$ that is similar to the current graph $G$ in the sense that any correlation clustering will have almost the same cost in $G$ and $G'$. 
They can therefore apply any polynomial time correlation clustering algorithm to $G'$ and get an almost as good correlation clustering for $G$. 
This result follows the streaming tradition and is very interesting from an information theoretic perspective. 
However, spending polynomial time whenever we want a clustering of the current graph is not good from our perspective of dynamic graph algorithms, where we want to maintain a concrete clustering in constant or logarithmic time per edge update. 

The interesting aspect from \cite{assadi2025cc} is that their sketch is of $\tilde O(n)$ size. 
A fundamental issue from a fully-dynamic perspective is that a randomized sketch of sublinear size appears to sacrifice the ability to handle a non-oblivious adaptive adversary. 
The issue is that if we use a randomized sketch of the current graph, and if we after each update reveal a clustering based on the random choices in the sketch, then the adversary can learn about these random choices and then make updates that destroy the quality of the sketch. 
This is not so much an issue for the streaming scenario in \cite{assadi2025cc} where they just maintain the sketch, and only in the end construct the clustering in polynomial time. However, it is a major issue if we want to maintain a clustering throughout the updates against an adaptive adversary, and this is the setting we are considering in this paper.

\subsection{Our results and techniques}

In this paper, we introduce a general
framework that transforms efficient static algorithms for the static correlation clustering problem into dynamic algorithms that work against an adaptive adversary. However, the efficiency has to relate to a compressed graph representation that we shall introduce later. In particular, this will work for the pivot-based algorithms that, as mentioned above, have so far failed against adaptive adversaries. 

We apply our framework to all near-linear time correlation clustering algorithms with low approximation ratio mentioned before:
the classical 3-approximate Pivot algorithm from \cite{ACN08}, the $1.847$-approximation local search algorithm from \cite{CLMTYZ24} and the most recent \pureclusterlpratio-approximation Cluster LP algorithm from \cite{cao2025fastLP}. 
The latter gives us our main fully-dynamic result:
\begin{theorem}\label{thm:dyn-clusterLP}
    For any $\delta \in (0, 1)$ we can maintain clustering for a fully-dynamic graph in $O(\log 1/\delta)$ worst-case time per edge update. 
    Against an adaptive adversary, for each $i$ the maintained clustering is a $\pureclusterlpratio$-approximation with probability at least $1 - \delta$ at update $i$. 
\end{theorem}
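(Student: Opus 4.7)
The plan is to instantiate the paper's general framework (developed in the earlier sections) with the $1.437$-approximation static algorithm of \cite{cao2025fastLP} as the base, and to amplify its success probability by running $O(\log 1/\delta)$ independent copies in parallel. The framework promises that any near-linear-time static algorithm with constant-probability approximation ratio $\alpha$ can be converted into a fully-dynamic algorithm with worst-case constant update time whose maintained clustering is an $\alpha$-approximation with constant probability at every single update step, even against an adaptive adversary.

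The heart of the framework is an epoch-based, deamortized recomputation scheme. Time is partitioned into epochs whose length $T$ is comparable to the near-linear cost of one run of the static algorithm. At the start of each epoch the algorithm takes a snapshot of the current graph, draws fresh randomness, and begins running the static algorithm on that snapshot in the background, spreading the work uniformly over the $T$ updates of the epoch so that only constant work is done per update. Throughout an epoch the algorithm displays the clustering produced during the previous epoch; the randomness that produced this clustering is already fully exposed to the adversary, but because the snapshot on which it was computed was frozen before its randomness was revealed, the adaptive adversary could not tailor that snapshot to the randomness, so the static approximation guarantee applies. Edge updates during an epoch change $\opt$ and the cost of the displayed clustering by at most one each, giving an additive epoch-length slack; when $\opt$ is much larger than $T$ this slack is negligible and the $1.437$ ratio transfers, while the small-$\opt$ regime is treated separately by a routine tailored to sparse or nearly-clusterable instances. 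To amplify the static success probability $p_0$ to $1-\delta$, run $k = O(\log 1/\delta)$ independent copies of the static algorithm on the common snapshot inside each epoch and, when their computations complete, display the clustering of lowest cost on the snapshot; per-update work scales by $k$, giving the claimed $O(\log 1/\delta)$ worst-case update time, and joint failure of all copies has probability at most $(1-p_0)^k \leq \delta$.

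The main obstacle I expect is the adaptive-adversary analysis. Specifically, one must argue that nothing about the randomness of the currently-running background computation leaks to the adversary before the computation finishes — otherwise the snapshot, which the adversary is actively shaping through its updates, could become correlated with that randomness and the static guarantee would fail. This requires a clean information-theoretic separation between the displayed clustering (whose randomness is from a past epoch and fully public) and the in-progress clustering (whose randomness is entirely hidden). Beyond this, one must verify that the additive slack introduced by deamortization and by drift of the graph during an epoch does not push any part of the argument across the boundary at which the $1.437$ ratio breaks down — in particular, reconciling the small-$\opt$ regime with the main argument is where the quantitative care has to be invested.
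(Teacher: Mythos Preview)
Your proposal has a genuine gap: you assume the framework converts a constant-probability static $\alpha$-approximation into a dynamic one with constant failure probability at each update, but the paper shows this is \emph{false}. Because epoch length is $\Theta(|D|)$ (it must be, since the static algorithm needs $\Theta(|D|)$ work and you spread it over the epoch), a failed rebuild that leaves $|D'|$ large also makes the next epoch long, so the bad clustering persists for more updates. The paper gives an explicit construction (Section~2.2) showing this coupling between failure and epoch duration forces a $\Theta(\log n)$ blow-up in the per-update failure probability, even against an oblivious adversary. With only the Cluster LP algorithm and $k=O(\log 1/\delta)$ repetitions, Theorem~\ref{thm:dynamic-random-intro}(a) would give failure probability $O(p^k\log n)$, which is $\le \delta$ only if $k=\Omega(\log\log n+\log 1/\delta)$, not $O(\log 1/\delta)$.

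The paper's actual proof avoids the $\log n$ factor via the \emph{mixed-case} framework (Theorem~\ref{thm:dynamic-mixed-case}): at each rebuild it runs both the Cluster LP algorithm (Theorem~\ref{thm:clusterLP}, $1.437$-approximate with constant probability) \emph{and} the local search (Theorem~\ref{thm:local-search}, $O(1)$-approximate with failure probability $\exp(-\sqrt{|D|}/\log^3|D|)$, which falls faster than inversely in $|D|$), keeping the better output. The local search acts as a safety net: because its failure probability shrinks with $|D|$, Lemma~\ref{lem:risky-with-approx} bounds the number of ``risky'' rebuilds by a constant $2\hat c$ rather than $O(\log n)$, yielding dynamic failure $O(p+q(1))$ with no $\log n$ term. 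Only then does repeating $O(\log 1/\delta)$ times give the claimed bound. Your identified ``main obstacle'' (randomness leaking to the adversary mid-epoch) is not the real issue here; the real obstacle is this failure/duration coupling, and you need the second high-probability algorithm to break it.
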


Another way to explain \cref{thm:dyn-clusterLP} is that the adversary can choose any $i$ hidden to the algorithm before the updates start and adaptively choose the first $i$ updates. 

A particular situation of \cref{thm:dyn-clusterLP} is when $\delta = \frac{1}{\poly(n)}$. 
Then we get polynomially small error probability with logarithmic update time. 
This matches the setting of \cite{BravermanDPSW25}. 

Some modifications we make to the static algorithms are of independent interest. 
Currently, the best implementations of the algorithms from \cite{CLMTYZ24,cao2025fastLP} are sublinear in $\tilde O(n)$ time, but this includes logarithmic factor in the running time even for constant error probability. 
As a result, the algorithm does not run in $O(m)$ time if $m=o(n\log n)$.

We show that these algorithms can be implemented in strictly linear time, that is, $O(m)$ time.
We further show that the static algorithm from \cite{CLMTYZ24} can be implemented in $O(m)$ time, yielding a 1.847-approximation with probability at least $1-\exp(-\sqrt{m}/\log^3m)$. 
For such an exponentially low error probability,
we would normally have used $\Theta(\sqrt{m}/\log^3m)$ repetitions, leading to a corresponding polynomial blow-up in time. 
The same exponentially low error probability can be achieved for the computation of the Cluster LP in \cite{cao2025fastLP}, but the LP rounding brings the error probability up to an arbitrarily small constant if we want $O(m)$ time.

In the rest of this subsection we will introduce the framework and later sketch how the different static algorithms can be modified to work within it.

\paragraph*{Graph representation and compression via correlation clustering}
We will now be more precise about how we want to represent a graph $G=(V,E)$.
Instead of storing the edges $E$, we will store a clustering $\mathcal{C}$ together with the set $D$ of \emph{violated pairs}, that is, unordered vertex pairs $(u,v)$ such that either $(u,v)\in E$ but $u$ and $v$ are in different clusters in $\mathcal{C}$, or $(u,v)\not\in E$ but $u$ and $v$ are in the same cluster in $\mathcal{C}$. 
The correlation clustering cost of $\mathcal{C}$ is exactly $|D|$. 
We shall refer to set $D$ as the \emph{violation} of the clustering $\mathcal C$, and to $(\mathcal C,D)$ as the \emph{cluster representation}, noting that it uniquely defines the edge set $E$. 
We say that the cluster representation $(\mathcal C,D)$ is $c$-approximate if the clustering is $c$-approximate, that is, if $|D|$ is at most $c$ times the minimum correlation clustering cost. In addition, it is simple to reconstruct $E$ from $(\calC,D)$.

We can think of the cluster representation as a graph compression: It is never worse than the actual graph representation by more than the $O(n)$ space required to store $\mathcal{C}$, since we can always choose the clustering to be a singleton cluster for each vertex. 
Then $D$ is just the set $E$ of edges in the graph. 
In the other extreme, if we have all vertices forming a single cluster, then the violation $D$ is exactly the set of non-edges $\binom{V}{2}\setminus E$. 
However, our representation could be much better in both of these extremes if the clustering $\mathcal C$ has small cost. 
We can in fact bound the space required by $O(n + |D|)$, from which we get these properties. 
It is worth mentioning that one can ensure $O(n)$ additional space compared to the normal graph representation, with two global counters on the size of $D$ and $E$. 
However, this is not necessary to achieve our main results.

\paragraph*{Fully-dynamic framework}
To get a fully dynamic correlation clustering algorithm, assume that we have a static algorithm that takes as input an arbitrary cluster representation $(\mathcal{C}, D)$ and outputs a $c$-approximate cluster representation $(\mathcal{C}', D')$ in $O(t|D|)$ time for some parameter $t$.
Since $D$ may be much smaller than the edge set $E$, we can think of using $(\mathcal{C}, D)$ as a warm start. 
We also note that $|D|$ might be much smaller
than the clustering $\mathcal{C}$, which is of size $\Theta(n)$. 
Therefore, the static algorithm can only produce a clustering $\mathcal{C'}$ with limited modification of the input clustering $\mathcal C$.

Now, in correlation clustering, each edge update will only change the cost of a given clustering by $1$. Thus, if at some point, we have a good clustering $\mathcal{C}$ with violation $D$, then the same clustering will remain good for $O(\eps|D|)$ edge updates. 
More precisely, we will show that if $\mathcal{C}$ was a $c$-approximation, then $\mathcal{C}$ will remain a $(1+\eps)c$-approximation after $\mu|D|$ edge updates, where $\mu \le \frac{\eps}{2(1+\eps)c}$. 
Therefore, it suffices to reconstruct a new $c$-approximate clustering after $\mu|D|$ edge updates. 

In the period between reconstructions, we will just store the sequence $U$ of $\mu|D|$ edge updates (so the violation is not updated explicitly as that would involve hashing). 
For a fixed clustering $\mathcal C$, we will show how to update our violation $D$ according to the edge updates in $U$ deterministically in $O(|D|+|U|)=O(|D|)$ time. 
Next, we apply the assumed static algorithm to our updated cluster representation $(\mathcal{C}, D)$ to reconstruct a new $c$-approximate cluster representation $(\mathcal{C}', D')$ in $O(t|D|)$ time. 
The new clustering $\mathcal{C}'$ will again remain good for the next $\mu|D'|$ edge updates.

The result is a $(1+\eps)c$-approximate fully-dynamic algorithm with $O(t/\mu)$ amortized update time. 
With standard background rebuilding, we can
de-amortize and get $O(t/\mu)$ worst-case update time. Summing up, we will show
\begin{theorem}
\label{thm:dynamic-intro}
Suppose we have a static correlation clustering algorithm that given any cluster representation $(\mathcal{C},D)$, produces a $c$-approximate cluster representation $(\mathcal{C'},D')$ in $O(t|D|)$ time.
Then we have a fully-dynamic algorithm that maintains a $(1+\eps)c$-approximate correlation clustering in worst-case $O(t/\mu)$ update time per edge, where $\mu \le \frac{\eps}{2(1+\eps)c}$.
\end{theorem}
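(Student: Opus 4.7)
The plan is to run the assumed static algorithm periodically with period $\mu|D|$, carrying the previously produced clustering $\mathcal{C}$ forward between rebuilds, and to deamortize using the standard background-rebuilding technique.

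First I would establish the \emph{slow-decay} claim that underpins the choice of $\mu$: given a $c$-approximate cluster representation $(\mathcal{C},D)$ of a graph $G$, and any graph $G'$ obtained from $G$ by at most $\mu|D|$ edge insertions or deletions, $\mathcal{C}$ remains a $(1+\eps)c$-approximate clustering of $G'$. The argument is elementary: each update shifts the cost of any fixed clustering by at most $1$ and shifts $\opt$ by at most $1$, so after $k\le\mu|D|$ updates,
\[
\frac{\cost_{G'}(\mathcal{C})}{\opt(G')}\le\frac{|D|+k}{|D|/c-k}\le\frac{c(1+\mu)}{1-\mu c},
\]
and the upper bound $\mu\le\eps/(2(1+\eps)c)$ together with $c\ge 1$ makes the right-hand side at most $(1+\eps)c$.

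Next I would describe the amortized algorithm. We keep the cluster representation $(\mathcal{C},D)$ explicitly, together with a batch $U$ of pending edge updates received since the last rebuild. When $|U|$ reaches $\mu|D|$ we first fold $U$ into $D$ in deterministic $O(|D|+|U|)=O(|D|)$ time (the deterministic merge procedure promised in the excerpt), then call the static algorithm on $(\mathcal{C},D)$ to produce a fresh $c$-approximate $(\mathcal{C}',D')$ in $O(t|D|)$ time, and finally overwrite $(\mathcal{C},D):=(\mathcal{C}',D')$ and reset $U:=\emptyset$. The slow-decay claim guarantees that the output clustering is $(1+\eps)c$-approximate at every update in the batch, and the total work per batch of $\mu|D|$ updates is $O(t|D|)$, giving amortized $O(t/\mu)$ time per edge.

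Finally I would deamortize by running the next static call in the background over the updates following completion of the previous one: a scheduler performs $O(t/\mu)$ steps of the ongoing static call per incoming edge update, and in parallel maintains the batch $U$ so that the background output can be swapped in with its violation already up to date. The main obstacle here is the invariance argument that must accompany the schedule: one has to check both that the currently output clustering stays $(1+\eps)c$-approximate throughout the entire window during which the next rebuild is being incrementally computed, and that the newly produced clustering, which is only $c$-approximate for the graph at the moment the background call was started, is still within the $(1+\eps)c$-approximation threshold when the swap occurs. I plan to handle both simultaneously by tightening the intermediate target (for example to $(1+\eps/2)c$ just after a swap, using a correspondingly smaller period) so that a portion of each slow-decay window is reserved as buffer; the constant-factor slack this introduces can be absorbed into the stated bound $\mu\le\eps/(2(1+\eps)c)$ by adjusting constants, yielding the claimed worst-case $O(t/\mu)$ update time.
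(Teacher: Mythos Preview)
Your proposal is correct and follows essentially the same approach as the paper: prove the slow-decay lemma, batch $\mu|D|$ updates between rebuilds, then deamortize via background rebuilding. Your slow-decay calculation is actually more direct than the paper's (which chains several triangle-inequality style bounds rather than writing the ratio explicitly), and your deamortization sketch matches the paper's pipeline/epoch scheme in spirit, though the paper absorbs the delay as an extra $(1+\eps)$ factor per pipeline stage rather than by tightening the intermediate target.
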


\paragraph*{Tricky randomization}
The above approach works perfectly if our algorithm for the static case is deterministic, but all the known near-linear time static constant approximate algorithms are Monte Carlo algorithms. 
They only produce a $c$-approximations in expectation or with some error probability bounded by $p<1$. 

Naively, the above should be fine, even against an adaptive adversary, because our current clustering is good if the last rebuild was a $c$-approximation as in the deterministic case, and this happened with probability at least $1-p$. 
The next rebuild is done using its own independent randomness, so an adaptive adversary cannot do anything to break a guarantee that holds for any input cluster representation, e.g., that it is a $c$-approximation with probability at least $1-p$. 
Nevertheless, there is an issue since the new cluster representation $(\mathcal C',D')$ will be kept for $\mu|D'|$ updates. 
Hence, if it is bad in the sense that $D'$ is large, then it will survive for longer time. 
We will present a concrete example about this effect, showing that we get a logarithmic factor more chances of failing at a particular update. 
In fact, this does not even depend on the adversary being adaptive. 
The same construction can be done by an oblivious adversary.

\paragraph*{Error probability against adaptive adversaries}
It is a bit subtle what we mean by saying that the dynamic algorithm is correct with some probability against an adaptive adversary, since this adversary knows our current clustering and can see exactly how bad it is compared with the optimum. 
It is therefore never a probabilistic statement whether the current clustering is good to the adversary. 
However, we can study statements of this kind: for any given update step $i$, what is the probability $P_i$ that we have a bad clustering just after update $i$. 
An adaptive adversary (1) can choose $i$ in advance, (2) knows our current clustering at any time, and (3) can adaptively pick the first $i$ updates so as to maximize $P_i$. 
Yet we want to bound $P_i$. 

The above definition of $P_i$ may
seem a bit cryptic, but this is important when we want constant error probability with constant update time.
More specifically, with a long update sequence, we can create and remove a linear number of different constant sized graphs. Our dynamic algorithm will cluster all of them independently, so if it fails with constant probability, then we expect it to fail on a constant fraction of them, and the adaptive adversary will see when it fails. Nevertheless we promise that it is correct just after a predetermined update $i$ with constant probability $P_i$. If we increase the update time to $O(\log n)$, then we can make the error probability so small that we do not expect any errors over a polynomially long update sequence, and then we do not need the special definition of $P_i$.

\paragraph*{Fully-dynamic framework for Monte Carlo algorithms}
What we first show is that if the static
algorithm gives a $c$-approximation with probability at least $1-p$, then our fully dynamic algorithm gives an $(1+\eps)c$-approximation with error probability $P_i=O(p\log n)$. 

We can avoid losing the factor $\log n$ if the static algorithm for input clustering $(\mathcal C,D)$, $|D|\geq 1$, produces a $c$-approximation with probability at least $1-q(|D|)$ where $q$ falls at least inversely in $|D|$. 
Then $P_i$ is bounded by $O(q(1))$. 

Technically, the assumption of $|D|\geq 1$ means that we avoid dividing by $0$. 
However, if our input cluster representation had $|D|=0$, then it would be a zero-cost optimal solution, and then we would not involve the static algorithm to try computing a better solution.

Finally, we have a combined result.
Suppose the static algorithm for input clustering
$(\mathcal C,D)$, $|D|\geq 1$ produces a $c$-approximation with probability at least $1-p$ and an $O(c)$ approximation with probability at least $1-q(|D|)$ where again $q$ falls at least inversely in $|D|$. 
Then $P_i$ is bounded by $O(p+q(1))$.

We shall see how all these bounds play together with existing static algorithms, but first we summarize them in the theorem below.
In this theorem, we consider randomized Monte-Carlo algorithms that \emph{aim} at certain targets that they may \emph{fail} to achieve. 

\begin{theorem}
\label{thm:dynamic-random-intro}
Suppose we have a static correlation clustering algorithm that, given
any cluster representation $(\mathcal{C},D)$ with $|D|\geq 1$, aims to produce
a $c$-approximate 
cluster representation $(\mathcal{C'},D')$ in $O(t|D|)$ time. 
Then we have a fully-dynamic
algorithm that aims to maintain
a $(1+\eps)c$-approximate correlation clustering in worst-case $O(t/\mu)$ time per update where $\mu =\min\left\{\frac{\eps}{2(1+\eps)c},1/6\right\}$.

\begin{itemize}
    \item[(a)] Suppose that with probability at most $p$, the static algorithm \emph{fails} to produce a $c$-approximate clustering. 
    Then, against an adaptive adversary, for any fixed $i$ unknown to the algorithm, the clustering maintained by the fully-dynamic algorithm fails to be $(1+\eps)c$-approximate at update $i$ with probability $P_i = O(p\log n)$.
    \item[(b)] Suppose that with probability at most $q(|D|)$, which falls at least inversely with $|D|\geq 1$, the static algorithm \emph{fails} to produce a $c$-approximate clustering. 
    Then, against an adaptive adversary, for any fixed $i$ unknown to the algorithm the clustering maintained by the fully-dynamic algorithm fails to be $(1+\eps)c$-approximate at update $i$ with probability $P_i = O(q(1))$.
\end{itemize}
\end{theorem}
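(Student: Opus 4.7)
The plan is to implement the dynamic algorithm by periodic reclustering: an epoch $j$ begins with representation $(\mathcal{C}_j,D_j)$, records the next $\mu|D_j|$ edge updates in a list $U_j$, at its end folds these updates into $D_j$ deterministically in $O(|D_j|+|U_j|)$ time to produce $(\mathcal{C}_j,D_j')$, and invokes the static algorithm on $(\mathcal{C}_j,D_j')$ to obtain $(\mathcal{C}_{j+1},D_{j+1})$. To guard against pathological runs of the static algorithm I would always keep the smaller of $D_j'$ and $D_{j+1}$, so even when the static call fails we have $|D_{j+1}|\le|D_j'|\le(1+\mu)|D_j|$, a property that the probability analysis will lean on. The $O(t|D_j|)$ cost of the static call is spread over the $\mu|D_j|$ updates of epoch $j$ via double-buffering so that the current clustering stays live until the new one is ready, giving the claimed $O(t/\mu)$ worst-case update time. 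The deterministic approximation claim is the arithmetic check that if reclustering $j$ succeeded, i.e.\ $|D_j|\le c\,\opt_{t_j}$, then at every $s\in[t_j,t_j+\mu|D_j|)$ the clustering $\mathcal{C}_j$ has cost at most $(1+\mu)|D_j|\le(1+\mu)c\,\opt_{t_j}$ while $\opt_s\ge(1-\mu c)\opt_{t_j}$; the choice $\mu\le\eps/(2(1+\eps)c)$ makes $(1+\mu)c/(1-\mu c)\le(1+\eps)c$, so every successful epoch is $(1+\eps)c$-approximate throughout.

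For the probability analyses, let $J$ denote the random index of the epoch containing the fixed update $i$, and let $f_j$ be the indicator that reclustering $j$ failed. The essential difficulty in both parts is length bias: failed epochs may extend to $\mu(1+\mu)|D_j|$ rather than the $\mu c\,\opt$ of a successful epoch, so $J$ is not uniformly distributed over epochs but is skewed toward failed ones. For (a), I would bucket epochs by $\lfloor\log_2|D_j|\rfloor\in\{0,\dots,O(\log n)\}$ and decompose $P_i=\sum_k\Pr[J\in B_k,\,f_J=1]$. Within bucket $B_k$ all potential failed epochs have length $\Theta(\mu 2^k)$; by charging each such epoch to a disjoint interval of this length on the timeline, conditioning on the pre-reclustering state, and invoking the freshness of the static algorithm's randomness, one obtains $\Pr[J\in B_k,f_J=1]=O(p)$, and summation over the $O(\log n)$ scales yields $O(p\log n)$.

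For (b), the inverse decay $q(|D|)\le q(1)/|D|$ precisely cancels the length bias. Conditioning on the state just before reclustering $j$, the best-of-input/output cap gives $\ell_j\le\mu|D_{\text{in}_j}|$ whenever $f_j=1$, so the expected length contribution of a failed epoch is $\E[\mathbb{1}[f_j]\cdot\ell_j\mid\text{state}]\le q(|D_{\text{in}_j}|)\cdot\mu|D_{\text{in}_j}|\le q(1)\mu$, independent of the input size. A more refined accounting that uses the fact that $\{J=j,f_j=1\}$ forces $|D_{\text{in}_j}|>(i-t_j)/\mu$, together with the minimum epoch spacing and the structural constraint that exactly one $j$ contributes to $J$, then collapses $\sum_j\Pr[J=j,f_j=1]$ to $O(q(1))$.

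The main technical obstacle is making the bucketing and charging argument for (a) airtight against an adaptive adversary: the adversary can steer both the timing of reclusterings and the input sizes $|D_{\text{in}_j}|$ from the observable clustering history, and bucket membership of $J$ is correlated with the very randomness of the static call that determines $f_J$. The conceptual move that threads the needle in both (a) and (b) is to condition on the state just before each reclustering so that $t_j$ and $|D_{\text{in}_j}|$ are fixed before releasing the independent fresh randomness of the static call at step $j$; the best-of-input/output cap then couples the failure probability to the covering length in precisely the way required by each of the two hypotheses.
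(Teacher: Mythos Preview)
Your high-level plan coincides with the paper's: fix the target update $i^*$, note that the maintained clustering is bad at $i^*$ only if the last rebuild before $i^*$ failed, condition on the state just before each rebuild so the static call's coins are fresh, and control how many rebuilds can possibly be ``the last one.'' The gap is in that last step. In (a), the claim $\Pr[J\in B_k,\,f_J=1]=O(p)$ does not follow from ``charging each such epoch to a disjoint interval of length $\Theta(\mu 2^k)$'': which epochs fall in $B_k$, and where they sit on the timeline, are themselves random, and in particular the bucket of $J$ is determined by $|D_J|$, the output of the very static call whose failure you are trying to bound. What is actually needed (and what your charging gestures at) is a deterministic count: after translating $J=j$ and $j\in B_k$ into the pre-call conditions $t_j\in(i^*-\mu 2^{k+1},\,i^*]$ and $|D^{\mathrm{in}}_j|\ge 2^k$, one must argue that on every sample path only $O(1)$ rebuilds satisfy both.

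The paper supplies this via a clean device. Call a rebuild at time $\tau$ with input size $d$ \emph{risky} when $i^*-\tau\le\mu d$; riskiness depends only on the input, hence is fixed before the fresh coins are tossed, and only a risky rebuild can be the last one before $i^*$. Using just $\mu<1/6$ and the cap $|D^{\mathrm{out}}|\le|D^{\mathrm{in}}|$, the paper proves the deterministic lemma: from any risky rebuild with input $d$, at most three rebuilds up to $i^*$ have input exceeding $d/2$. This one lemma yields (a) directly (risky rebuilds halve in input size every three steps, so there are $O(\log n)$ of them; union bound) and also (b), since the risky input sizes shrink geometrically and hence $\sum_{\text{risky }r}q(|D^{\mathrm{in}}_r|)=O(q(1))$ when $q$ decays at least inversely; the paper packages (b) as the recurrence $Q(d)\le 3q(d/2)+Q(d/2)$. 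Your ``more refined accounting'' in (b) and your charging in (a) are both stand-ins for this missing lemma.
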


By combining the two cases above, we can get a best of both worlds statement, namely that we can get the approximation ratio of \cref{thm:dynamic-random-intro}(a) case while avoiding the logarithmic blow-up in the probability, by using the properties of \cref{thm:dynamic-random-intro}(b).
\begin{theorem}\label{thm:dynamic-mixed-case}
    Let $\mathcal A$ and $\mathcal {\hat A}$ be two static correlation clustering algorithms that take as input any cluster representation $(\mathcal{C},D)$ with $|D|\geq 1$ and output a cluster representation $(\mathcal{C}', D')$, both in $O(t|D|)$ time. 
    Furthermore, let the probability of $\mathcal A$ failing to produce a $c$-approximate solution be bounded by $p=1-\Omega(1)$ and the probability of $\mathcal {\hat A}$ failing to produce a $\hat c$ approximate solution be bounded by $q(|D|)$, where $q$ falls at least inversely with $|D|$.

    Then we have a fully-dynamic algorithm that aims to maintain a $(1+\eps)c$-approximate correlation clustering in worst-case $O(t/\mu)$ time per update where $\mu=\min\left\{\frac{\eps}{2(1+\eps)c},\frac{1}{6},\frac{1}{2\hat c}\right\}$. 
    Furthermore, against an adaptive adversary, for any fixed $i$ unknown to the algorithm, the clustering maintained by the fully-dynamic algorithm fails to be $(1+\eps)c$-approximate at update $i$ with probability $P_i = O(p+q(1))$.
\end{theorem}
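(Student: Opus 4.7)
The plan is to combine the two static algorithms into a single composite algorithm $\mathcal{B}$, plug $\mathcal{B}$ into the fully-dynamic framework, and then refine the probabilistic analysis by conditioning on the $\hat c$-approximation guarantee to avoid the logarithmic blow-up that would otherwise come from applying \cref{thm:dynamic-random-intro}(a) directly.

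First, define $\mathcal{B}$ on input $(\mathcal C, D)$ with $|D|\ge 1$ as follows: run both $\mathcal A$ and $\hat{\mathcal A}$ on $(\mathcal C, D)$ with independent randomness, obtaining cluster representations $(\mathcal C_1', D_1')$ and $(\mathcal C_2', D_2')$, and output whichever has the smaller violation set. This still takes $O(t|D|)$ time, and since selecting the smaller violation never worsens the approximation ratio, $\mathcal{B}$ inherits both guarantees: it fails to produce a $c$-approximation with probability at most $p$ and fails to produce a $\hat c$-approximation (up to a constant-factor slack absorbed into $\hat c$) with probability at most $q(|D|)$. The dynamic algorithm is obtained by applying the framework of \cref{thm:dynamic-random-intro} to $\mathcal B$ with the stated $\mu$; the three terms of the minimum correspond respectively to the $(1+\eps)c$-persistence required for (a), the generic $1/6$ bound, and the $O(\hat c)$-persistence constraint $1/(2\hat c)$ needed so that $\hat c$-approximate rebuild outputs still yield $O(\hat c)$-approximate clusterings at the end of the rebuild interval.

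For the analysis, let $X_i^{(\hat c)}$ be the event that the maintained clustering fails to be $O(\hat c)$-approximate at some update up to $i$. Applying \cref{thm:dynamic-random-intro}(b) to $\mathcal{B}$'s $\hat c$-approximation guarantee yields $\Pr[X_i^{(\hat c)}]=O(q(1))$. On the complementary event $\neg X_i^{(\hat c)}$, every rebuild in the ancestor chain reaching update $i$ produced a cluster representation whose violation set has size $\Theta(\hat c\cdot \mathrm{OPT})$ up to constants, where $\mathrm{OPT}$ refers to the optimum at the time of each such rebuild and changes by at most $1$ per update. The logarithmic factor in the proof of \cref{thm:dynamic-random-intro}(a) stems from a union bound over $O(\log n)$ potential scales for the violation set up to $\Theta(n^2)$; conditioning on $\neg X_i^{(\hat c)}$ pins all relevant scales to within a constant factor of $\hat c\cdot\mathrm{OPT}$, leaving only $O(1)$ ancestor generations that could possibly be alive at step $i$. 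A union bound over these $O(1)$ generations, each failing $\mathcal A$'s $c$-guarantee with probability at most $p$ on fresh independent coins, contributes at most $O(p)$. Summing gives $P_i \le \Pr[X_i^{(\hat c)}] + O(p) = O(p + q(1))$, as required.

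The main obstacle will be formalizing the ``$O(1)$ ancestor generations under conditioning'' step. Under $\neg X_i^{(\hat c)}$, I will need to argue that the sizes $|D|$ at consecutive rebuilds along the ancestor chain of update $i$ are all within a constant factor of each other, essentially tied to $\hat c\cdot \mathrm{OPT}$, so that the geometric sum over scales that produced the $\log n$ factor in (a) collapses to $O(1)$ terms. I will also need to verify that conditioning on $\hat{\mathcal A}$'s outputs leaves the random coins of $\mathcal A$ at each rebuild untouched, preserving the per-generation failure probability of $p$ in the union bound.
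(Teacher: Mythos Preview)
Your overall strategy coincides with the paper's: compose the two algorithms, bound the probability that $\hat{\mathcal A}$ fails on the relevant rebuilds by $O(q(1))$, and then argue that under success of $\hat{\mathcal A}$ only $O(\hat c)$ rebuilds can precede update $i$, so $\mathcal A$ contributes $O(p)$. However, there is a genuine gap in how you set up the conditioning event.

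Your event $X_i^{(\hat c)}$ is ``the maintained clustering fails to be $O(\hat c)$-approximate at some update up to $i$''. This is problematic in two ways. First, \cref{thm:dynamic-random-intro}(b) only bounds failure at a single fixed update; a union over all $j\le i$ would blow up, so you cannot deduce $\Pr[X_i^{(\hat c)}]=O(q(1))$ from (b) as stated. Second, the maintained clustering is the output of $\mathcal B$, which depends on $\mathcal A$'s coins as well as $\hat{\mathcal A}$'s, so conditioning on $\neg X_i^{(\hat c)}$ \emph{does} touch $\mathcal A$'s randomness, contradicting the independence claim you flag at the end.

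The paper fixes both issues by defining the bad event purely in terms of $\hat{\mathcal A}$: it is ``$\hat{\mathcal A}$ fails on some rebuild that is risky with respect to $i^*$'', where a rebuild at time $j$ with input violation $D$ is risky if $i^*-j\le\mu|D|$. This is exactly the event bounded by $O(q(1))$ in the proof of part (b), via \cref{lem:risky-limited-layers} and the recurrence $Q(d)\le 3q(d/2)+Q(d/2)$, and it depends only on $\hat{\mathcal A}$'s coins at each rebuild, leaving $\mathcal A$'s coins fresh. For the counting step, the paper proves a dedicated \cref{lem:risky-with-approx}: if at a risky rebuild $\hat{\mathcal A}$'s output $(\hat{\mathcal C},\hat D)$ is $\hat c$-approximate and $\mu\le 1/(2\hat c)$, then the optimum at $i^*$ is at least $|\hat D|/(2\hat c)$, hence every subsequent rebuild output has violation at least $|\hat D|/(2\hat c)$, and there are at most $2\hat c$ further rebuilds before $i^*$. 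A union bound over these $2\hat c$ applications of $\mathcal A$ then gives the $O(p)$ term. Your ``scale collapse'' intuition is precisely this lemma, but it requires the explicit lower bound on $\mathrm{OPT}$ at $i^*$ (not just that all violation sizes along the chain are comparable to each other) to make the count go through.
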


It should be mentioned that while this theorem does give results against an adaptive adversary, this does not mean that it can be strengthened against an oblivious adversary. 
We are in fact later going to show a matching lower bound for the probability of failure, which contains a strategy that can be implemented both by an adaptive and an oblivious adversary.

It is also worth noting that the probability $P_i$ in the statement is not independent for different values of $i$. You only get independence between two updates $i$ and $j$ if between processing them the clustering is rebuild. The frequency of this is however dependent on the cost of the clustering, a value that is very easy to influence by the adversary.

One may wonder if one could get the error probabilities to work simultaneously for a sequence of $T$ updates without resorting to a union bound, multiplying the error probability by $T$. As a base state, consider an arbitrarily large graph consisting of disjoint cliques, including singleton vertices. The optimal correlation clustering cost is zero, so any approximation algorithm has to agree on the clustering into cliques. 
With $s$ updates, the adversary can create a nontrivial instance of correlation clustering over some of the vertices. 
And with $s$ more updates, the adversary can change the graph back to the base state. 
Therefore, starting from the base state, in $T$ updates, the adversary can create $T/(2s)$ nontrivial and independent instances to challenge the algorithm, so getting global error bound better than a union bound is not likely.

\paragraph*{Consequence if a deterministic constant approximate algorithm is found}
Our result will be improved if there exists a deterministic $O(|D|)$ time algorithm $\mathcal{\hat{A}}$ with constant approximation ratio.
According to \cref{thm:dynamic-intro}, this would give a deterministic dynamic algorithm that maintains a clustering of $(1+\eps)c$ approximation at every update.
In addition, the existence would also imply that the dynamic algorithm would inherit any expected approximation ratio from algorithm $\mathcal{A}$, instead of only achieving this approximation ratio with constant probability, since we can get a deterministic upper bound on the cost anytime we rebuild the clustering. 
This would also give a worst-case guarantee on the quality of the clustering even when $\mathcal{A}$ always fails. 

\begin{theorem}\label{thm:dynamic-deterministic-upper-bound}
    Let $\mathcal A$ and $\mathcal {\hat A}$ be two static correlation clustering algorithms that take as input any cluster representation $(\mathcal{C},D)$ with $|D|\geq 1$ and output a cluster representation $(\mathcal{C}', D')$, both in $O(t|D|)$ time. 
    Furthermore, let the probability of $\mathcal A$ failing to produce a $c$-approximate solution be bounded by $p=1-\Omega(1)$ and $\mathcal {\hat A}$ be a deterministic $\hat c$-approximation algorithm.

    Then we have a fully-dynamic algorithm that aims to maintain a $(1+\eps)c$-approximate correlation clustering in worst-case $O(t/\mu)$ time per update where $\mu=\min\left\{\frac{\eps}{2(1+\eps)c},\frac{1}{6},\frac{1}{2\hat c}\right\}$. 
    Furthermore, against an adaptive adversary, for any fixed $i$ unknown to the algorithm, the clustering maintained by the fully-dynamic algorithm is an $(1+\eps)c$-approximation in expectation at update $i$.
\end{theorem}

\paragraph*{Existing static algorithms modified for our dynamic framework}
We will now discuss how the known near-linear static  algorithms can be modified and used in our dynamic framework.  

First, as a general standard note. Suppose we have a Monte-Carlo algorithm that, given
a cluster representation $(\mathcal C,D)$, aims to produce
a $c$-approximate cluster representation $(\mathcal C',D')$ and fails with probability at most $p$. 
We switch to the new clustering $\mathcal C'$ only if it has lower cost, that is, if $|D'|< |D|$.
This means that if we make $k$ iterations of the algorithm, then the probability that we do not end up with a $c$-approximation drops to $p^k$. 

The first algorithm we consider is the classical 3-approximate Pivot algorithm from \cite{ACN08}. It takes
$O(m)$ time on a given graph, and we show how to implement it in $O(|D|)$ time for a given input clustering $(\mathcal C,D)$. This ends up requiring weighted sampling of vertices instead of the normal uniform one. In addition, this also has the consequence of showing how we can 
speed up the running time of the Pivot algorithm with a ``hot start''.
The approximation factor of the Pivot algorithm is in expectation, but making 
$O(\log\log n)$ iterations, we
get a $(3+o(1))$-approximation with failure probability $o(1/\log n)$. Thus, we have 
\begin{theorem}\label{thm:3pivot}
We have a static correlation clustering algorithm that given
any cluster representation $(\mathcal{C},D)$,  in 
$O(|D|)$ time produces
an expected 3-approximate 
cluster representation $(\mathcal{C'},D')$. Repeating
$O(\log\log n)$ times, we
get a $(3+o(1))$-approximate
solution with probability at least $1-o(1/\log n)$.
\end{theorem}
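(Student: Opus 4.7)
The plan is to prove the theorem in three stages: (i) adapt the classical Pivot algorithm of \cite{ACN08} so that it runs in $O(|D|)$ time on a cluster representation, (ii) verify that the adapted algorithm preserves the expected $3$-approximation, and (iii) boost to a $(3+o(1))$-approximation with failure $o(1/\log n)$.

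For (i), the implicit graph $G$ induced by $(\mathcal{C},D)$ has edge $uv$ iff $u,v$ share an input cluster XOR $(u,v)\in D$. Let $V_D=\{v:\exists u,(u,v)\in D\}$ (so $|V_D|\le 2|D|$) and call $C\in\mathcal{C}$ \emph{affected} iff $C\cap V_D\neq\emptyset$. A crucial structural observation is that any unaffected cluster $C$ is isolated in $G$: no external vertex is $G$-adjacent to any $v\in C$ (such an adjacency would require a $D$-pair incident to $v$), so Pivot on $G$ outputs $C$ unchanged, and we can skip unaffected clusters entirely. The ``weighted sampling'' in the theorem statement is the following: draw an independent rank $r(v)\sim U[0,1]$ for each $v\in V_D$, and draw for each affected $C$ a single ``boring event time'' $m_C\sim\operatorname{Beta}(1,|C\setminus V_D|)$ representing the minimum rank among the boring vertices $C\setminus V_D$. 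This collapses $\sum_C|C\setminus V_D|$ individual ranks without changing the output distribution, since boring vertices all have $D$-degree $0$ and are exchangeable within their cluster—only the minimum of their ranks can ever trigger a pivot action.

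I would then process these $O(|D|)$ events in rank order (bucket-sort on $[0,1]$ in expected $O(|D|)$ time), maintaining, for each affected cluster $C$, a \emph{default pointer} $\delta_C$ to the output cluster absorbing the bulk of $C\setminus V_D$, and, for each $v\in V_D$, an \emph{override} (either a specific output cluster id, or the marker ``exempt'' meaning $v$ is not absorbed by $\delta_{C_v}$). An event for an unassigned $v\in V_D$ spawns a fresh output cluster $\kappa_v$, sets $\delta_{C_v}\leftarrow\kappa_v$ if still null, and in $O(\deg_D(v))$ time walks $N_D(v)$, marking same-input-cluster neighbors as exempt and assigning unclaimed different-cluster neighbors to $\kappa_v$. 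A boring event for $C$ just sets $\delta_C$ to a fresh id if still null. Total work sums to $O(|D|)$. To output $D'$ in $O(|D|)$ expected time, I would scan each $(u,v)\in D$ and check in $O(1)$ whether it is still a violation, then enumerate new violations by walking $C_u$ and $u$'s output cluster for each moved $u\in V_D$; the cost of this enumeration is $O\!\bigl(\sum_{u\text{ moved}}|C_u|\bigr)$, which in expectation is $O(|D|+|D'|)=O(|D|)$ because almost every boring $v\in C_u$ encountered contributes a fresh split-type violation counted in $|D'|\le 3\,\opt\le 3|D|$.

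For (ii), since only distributionally-equivalent information has been collapsed (unaffected clusters are deterministic under Pivot on $G$; boring vertices of an affected cluster are exchangeable), the output distribution matches that of standard Pivot on $G$, so the expected $3$-approximation of \cite{ACN08} transfers. For (iii), Markov's inequality gives that one run produces a $(3+\varepsilon)$-approximation with probability $\ge\varepsilon/(3+\varepsilon)$; running $k=O(\log\log n)$ independent copies and, under the ``keep new only if $|D'|<|D|$'' discard rule from the paragraph preceding the theorem, selecting the best run reduces the failure probability to $(1-\Omega(\varepsilon))^k$, and letting $\varepsilon=\varepsilon(n)\to 0$ sufficiently slowly yields the claimed $(3+o(1))$-approximation with failure probability $o(1/\log n)$. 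The hardest step will be verifying that the override-plus-default bookkeeping in (i) is faithfully equivalent to classical Pivot in the subtle case where an ``exempt'' $V_D$-vertex $w$ is later either promoted to pivot or stolen by a still-earlier external pivot, without ever materializing individual ranks for boring vertices of $C_w$.
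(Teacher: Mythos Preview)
Your high-level plan and parts (ii)--(iii) are fine, and the rank-based simulation with a $\mathrm{Beta}(1,|C\setminus V_D|)$ draw for boring vertices is an elegant equivalent of the paper's contraction-plus-weighted-sampling. The gap is in the event-processing bookkeeping of part (i). Consider an affected cluster $C=\{u,x,y\}$ with $(u,x),(u,y)\in D$ but $(x,y)\notin D$, and ranks $r(u)<r(x)<r(y)$. Standard Pivot outputs $\{u\}$ then $\{x,y\}$ (since $xy$ is a $G$-edge). In your scheme, $u$ pivots, sets $\delta_C=\kappa_u$, walks $N_D(u)=\{x,y\}$ and marks both exempt. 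Next $x$ (exempt, hence unassigned) pivots; $\delta_C$ is already set so you do not touch it, and you walk only $N_D(x)=\{u\}$, never reaching $y$ since $(x,y)\notin D$. Then $y$ (still exempt) pivots as a singleton. You output $\{u\},\{x\},\{y\}$, which is wrong. The issue is exactly the one you flagged as ``hardest'': once a second pivot arises inside a cluster whose default pointer is already taken, that pivot must still collect its \emph{same-cluster non-$D$-neighbors} among the exempt vertices, and walking $N_D(\cdot)$ alone cannot see them.

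The paper sidesteps this by keeping, for each active cluster $C$, an explicit list $L(C)$ of its surviving (contracted) vertices and forming the pivot's cluster as $T=(N^+(v)\cap V')\cup(L(C(v))\setminus N^-(v))$, then updating $L(C(v))\gets L(C(v))\cap N^-(v)$. Because after each pivot the surviving part of $L(C(v))$ has size at most $|N^-(v)|$, the total work telescopes to $O(|V'|+\sum_v(|N^+(v)|+|N^-(v)|))=O(|D|)$. Your default-pointer idea is essentially trying to avoid maintaining $L(C)$, but the counterexample shows you cannot: you need to track the \emph{surviving exempt set} per cluster and scan it at each subsequent in-cluster pivot, which is exactly $L(C)$. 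A smaller point: your $D'$ enumeration via $\sum_{u\text{ moved}}|C_u|$ is not obviously $O(|D|+|D'|)$ when many moved vertices share a cluster; the paper's SymmetricDifference routine (enumerating $C(v)\triangle C'(v)$ and $C(v)\cap C'(v)$ and charging each step to a pair in $\mathcal E(\mathcal C)\triangle\mathcal E(\mathcal C')$) gives a clean deterministic $O(|D|+|D'|)$ bound, and the final $O(|D|)$ worst-case time follows by aborting and returning $(\mathcal C,D)$ once the work exceeds $c|D|$.
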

Plugging \cref{thm:3pivot} into \cref{thm:dynamic-random-intro}(a) using a sufficiently small $\eps>0$, we get
\begin{corollary}\label{cor:3pivot}
For any $\eps>0$, we have a fully-dynamic
algorithm that aims to maintain
a $(3+\eps)$-approximate correlation clustering in 
$O(\frac{1}{\eps}\log\log n)$ worst-case time per edge update.
Against an adaptive adversary, for any $i$ the clustering fails to be $(3+\eps)$-approximate at update $i$ with probability $P_i = o(1)$. 
The failure probability can be reduced to any  $\delta$ if we spend $O(\frac{1}{\eps}(\log\log n + \log 1/\delta))$ worst-case time per edge update.
\end{corollary}
The main advantage of \cref{cor:3pivot} over the previous pivot-based dynamic algorithms with approximation factors around 3 is that it works against an adaptive adversary. 
Compared to \cite{BravermanDPSW25}, by setting $\delta = 1/\poly(n)$, \cref{cor:3pivot} improves both the approximation ratio from a large constant to $3 + \epsilon$ and the update time from amortized $O(\log^2 n)$ to worst-case $O(\log n)$, with the same error probability.

Next, we consider the 1.847-approximate local search algorithm from \cite{CLMTYZ24}.
As discussed earlier, it uses $\Omega(n \log n)$ time even if we just want constant error probability, but this is not $O(m)$ time if $m=o(n\log n)$. 
In this paper, we show an $O(m)$ time implementation of the local search algorithm with exponentially small error probability of $\exp(-\sqrt{m}/\log^3m)$.
The previous static algorithm tries to find clusters with roughly the same probability regardless of their sizes. 
However, we notice that each small cluster only has a small random contribution to the total cost, so their total contribution is strongly concentrated and we do not need to find all of them. 
On the other hand, for large clusters, we can afford to spend more time and make sure to find all of them with high probability. 
Our full implementation is based on a smooth sampling distribution according to the cluster sizes. 
Moreover, our implementation can also be made efficient in cluster representation, as stated below.

\begin{theorem}\label{thm:local-search}
We have a static correlation clustering algorithm that given any cluster representation $(\mathcal{C},D)$, in $O(|D|)$ time produces
a cluster representation $(\mathcal{C'},D')$ that is below $1.847$-approximate with probability at least $1-\exp(-\sqrt{|D|}/\log^3|D|)$. 
\end{theorem}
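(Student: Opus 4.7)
The plan is to revisit the local-search $1.847$-approximation of \cite{CLMTYZ24} and redesign its sampling step so that (i) the work is $O(|D|)$ rather than the $\tilde O(n)$ of the original implementation, and (ii) the failure probability is exponentially small in $\sqrt{|D|}/\log^3|D|$. First I would argue that it suffices to run the algorithm starting from the cluster representation $(\mathcal{C},D)$ directly. Given $(\mathcal{C},D)$ one can in $O(|D|)$ time read off, for each cluster $K \in \mathcal{C}$ touched by any violation, the number of incident external \pedges and internal \medges, and these are the only local statistics the local-search subroutine needs to evaluate a candidate move; clusters untouched by $D$ are already optimal and may be left alone. The effective size of the instance thus becomes $|D|$, and the special case $D=E$, $|D|=m$ recovers the statement for the raw graph.

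For the static speed-up itself, the bottleneck in \cite{CLMTYZ24} is the need to discover, with high probability, every cluster that could appear in the chosen local-search solution, which costs an $O(\log n)$ factor if done uniformly per vertex. The idea is a smooth size-sensitive sampling rate: for a candidate cluster of size $s$ the algorithm allocates a budget of roughly $f(s)$ random probes that scales mildly with $s$. Two regimes then balance. Small clusters, of size at most a threshold $s^\star$ chosen around $\sqrt{|D|}/\log|D|$, are individually hit with only constant probability, but each such cluster contributes only $O(s^\star)$ to the objective, and across the $O(|D|)$ vertices that participate in nontrivial moves, the sum of their contributions is a sum of bounded almost-independent random variables whose deviation from its mean is controlled by a Bernstein-type bound. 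Large clusters, those of size above $s^\star$, are few because each of them already costs $\Omega(s^\star)$ in $|D|$, so we can afford polylogarithmically many probes per large cluster and guarantee each is found with exponentially small failure probability. Summed over both regimes, the total work is $O(|D|)$ while the aggregate failure probability is $\exp(-\sqrt{|D|}/\log^3|D|)$.

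The main obstacle will be making this concentration argument compatible with the approximation analysis of \cite{CLMTYZ24}, which is phrased in expectation over the random choices of the local search. I would recast their argument so that the approximation ratio is expressed as a sum of per-cluster contributions, separate the contribution into a ``small-cluster'' piece and a ``large-cluster'' piece, and show that the former has bounded range and variance (swapping the random outcome for a single small cluster changes the cost by only $O(s^\star)$, since its effect is localized to its own boundary in $D$). The large-cluster piece is then controlled by a union bound over the $O(|D|/s^\star)$ large clusters, each failing with exponentially small probability by choice of the probe budget. Finally, by returning the input representation $(\mathcal{C},D)$ whenever the randomized procedure exceeds a fixed $O(|D|)$ time budget or produces a worse $|D'|$, any residual low-probability event is absorbed into the stated $\exp(-\sqrt{|D|}/\log^3|D|)$ term, yielding the claimed worst-case $O(|D|)$ running time with the desired high-probability approximation.
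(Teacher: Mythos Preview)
Your high-level plan is in the right direction and shares several ingredients with the paper: restricting attention to clusters touched by $D$, a size-sensitive sampling rate (the paper samples vertex $v$ with probability $\Theta(1/(|D|\log^2 d(v)))$, so an optimal cluster of size $s$ is hit with probability $\Theta(s/(|D|\log^2 s))$ per round), and a concentration argument. However, the two-regime threshold analysis you sketch does not close, and the reason is instructive.

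The claim that inserting a small cluster of size at most $s^\star$ ``changes the cost by only $O(s^\star)$'' is incorrect. Every vertex in an optimal cluster $C$ of size $s$ has degree $\Theta(s)$ (this is what the preclustering enforces), so moving all $s$ vertices into $C$ can change the cost by $\Theta(s^2)$; this is exactly the paper's \cref{lem:low-cost-cluster}. With per-term range $(s^\star)^2$ and expected total improvement $\Theta(|D|)$, a Bernstein/Chernoff bound yields failure probability $\exp(-\Omega(|D|/(s^\star)^2))$. For your choice $s^\star \approx \sqrt{|D|}/\log|D|$ this is only $\exp(-\Omega(\log^2|D|))$, i.e.\ inverse-polynomial. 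You cannot lower $s^\star$ to repair this: to guarantee that each large cluster of size $s$ is found with error $\exp(-\Omega(\sqrt{|D|}/\log^3|D|))$, its expected number of hits $s/\log^2 s$ over $|D|$ rounds must be $\Omega(\sqrt{|D|}/\log^3|D|)$, which forces $s^\star \gtrsim \sqrt{|D|}/\log|D|$. The two constraints are incompatible, so no single threshold can deliver the claimed bound.

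The paper's fix is to bucket the optimal clusters into all $O(\log|D|)$ dyadic size-levels and run a separate Chernoff argument at each level. At level $i$ (size $\approx 2^i$) the per-term range is $\min(2^{2i},|D|) \le 2^i\sqrt{|D|}$, while the expected level-$i$ sum over a segment of $\Theta(|D|)$ rounds is $\Omega\big(2^i|D|/(i^2\log|D|)\big)$; the ratio is $O(\log^3|D|/\sqrt{|D|})$ \emph{uniformly in $i$}, and this is precisely what produces $\exp(-\Omega(\sqrt{|D|}/\log^3|D|))$. The simultaneous use of the quadratic bound $s^2$ (tight for small $s$) and the trivial bound $|D|$ (tight for large $s$) via $\min(s^2,|D|)\le s\sqrt{|D|}$ is the idea your single-threshold split cannot replicate. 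A further subtlety you do not address is that the per-cluster potential improvements $\Delta_i$ evolve as the local search modifies the clustering; the paper handles this with a fairly delicate allocation of rounds to levels (the lists $L_i$ and counters $k_i^{(t)}$ in the proof of \cref{lem:ls-correct}).
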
%

The next corollary follows from this special case of \cref{thm:local-search} when $\mathcal{C}$ is the set of singletons and $D = E$. 
\begin{corollary}
    \label{cor:linear-static-local-search}
    There exists a $1.847$-approximate $O(m)$ time static correlation clustering algorithm in the normal graph representation with error probability at most $\exp(-\sqrt{m}/\log^3 m)$. 
\end{corollary}

Since the error probability in \cref{thm:local-search} 
falls more than inversely in $|D|$, we can apply \cref{thm:dynamic-random-intro}(b). The approximation factor 1.847 in \cref{thm:local-search} is already rounded up by a small constant, so using a sufficiently small $\eps>0$ in \cref{thm:dynamic-random-intro}, we get
\begin{corollary}\label{cor:local-search}
We have a fully-dynamic algorithm that aims to maintain a $1.847$-approximate correlation clustering in constant worst-case time per edge update. 
Against an adaptive adversary, for any fixed $i$ unknown to the algorithm, the clustering is $1.847$-approximate with constant probability at update $i$. 
The failure probability can be reduced to any $\delta$ if we spend $O(\log 1/\delta)$ worst-case time per edge update. 
\end{corollary}
Finally, we want to apply our dynamic framework
to the current best \pureclusterlpratio-approximation
algorithm from \cite{cao2024understanding} which was implemented in sublinear time in \cite{cao2025fastLP}. 
We will need all the techniques mentioned above. The algorithm has two parts. First it solves the Cluster LP from \cite{cao2024understanding}, second it rounds the solution.

The first part aims for a fractional solution to the Cluster LP that is $(1+\eps)$-approximate relative to the optimal integral solution. 
This part uses multiplicative weight updates and finds a nearly optimal fractional solution to the aggregated constraint with the above local search. 
We face a situation parallel to the one we
faced with the local search from \cite{CLMTYZ24}.
The algorithm is sublinear but
not linear if $m=o(n\log n)$
and it only works with constant
probability. 
Using the techniques we used to make the local search from \cite{CLMTYZ24} work for Theorem \ref{thm:local-search},
we can take any cluster representation $(\mathcal C,D)$ and solve the Cluster LP in
$O(|D|)$ time, yielding a fractional solution that is $(1+\eps)$-approximate relative to the optimal integral solution with probability at least
$1-\exp(-\sqrt{|D|}/\log^3|D|)$.

In \cite{cao2024understanding,cao2025fastLP}, the fractional solution is rounded using pivot-techniques. 
The rounding gets a \pureclusterlpratio-approximate solution with constant probability. 
Given a cluster representation $(\mathcal C,D)$, we want to implement the rounding in $O(|D|)$. Here we can employ some of the techniques we used for the classical 3-approximate pivot in Theorem \ref{thm:3pivot}. 
However, all the rounding algorithms only work with constant failure probability.

\begin{theorem}\label{thm:clusterLP}
We have a static algorithm that given any cluster representation $(\mathcal{C},D)$, in $O(|D|)$ time solves the Cluster LP.
For any given constant $\eps>0$, the fractional solution is within a factor $(1+\eps)$ of the optimal integral solution with probability at least $1-\exp(-\sqrt{|D|}/\log^3|D|)$. 

The above fractional solution can be rounded to an
integral cluster representation in $O(|D|)$ time. The solution is below $\pureclusterlpratio$-approximate with constant
probability.
\end{theorem}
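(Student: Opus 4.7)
The proof will combine the adaptations already developed for \cref{thm:local-search} and \cref{thm:3pivot} with the sublinear-time realization of the Cluster LP approach from \cite{cao2025fastLP}. The overall strategy is to take that sublinear algorithm and re-engineer every step so that its running time is measured in $|D|$ rather than in $m$ or $n$, and so that the failure probability becomes exponentially small in $|D|$ wherever the underlying analysis allows. Throughout, the cluster representation $(\mathcal C, D)$ is used as a compressed description of the graph: any quantity that is linear in the graph is rewritten as a ``baseline'' computed symbolically from $\mathcal C$ plus corrections indexed by the violated pairs in $D$.

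For the first part (solving the Cluster LP), the algorithm of \cite{cao2025fastLP} runs a multiplicative weight update (MWU) outer loop in which each round solves an aggregated single-constraint problem via the local search of \cite{CLMTYZ24}. My plan is to replace each such local search call with the cluster-representation version from \cref{thm:local-search}. To do this, I express the aggregated constraint induced by the current MWU weights implicitly on top of the input clustering $(\mathcal C, D)$: a vertex's contribution is computed as a baseline determined by its current cluster in $\mathcal C$ together with corrections coming from the violated pairs in $D$ that touch it. Only the corrections need to be enumerated, so each MWU round costs $O(|D|)$. Combining this with the smooth, cluster-size-aware sampling scheme underlying \cref{thm:local-search}, I keep exponential concentration in each round; after a union bound over the $O_\eps(1)$ MWU rounds, the overall failure probability stays $\exp(-\sqrt{|D|}/\log^3|D|)$ while the total time stays $O(|D|)$. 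The standard guarantee of the MWU scheme then delivers a fractional Cluster LP solution within a factor $(1+\eps)$ of the optimal integral value.

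For the second part (rounding), the Cluster LP rounding of \cite{cao2024understanding,cao2025fastLP} processes a random vertex ordering and, for each pivot, produces a new cluster by combining pivot-style decisions with fractional LP information. I adapt this exactly the way we adapted the classic 3-approximate pivot for \cref{thm:3pivot}: instead of uniform vertex sampling, use weighted sampling that implicitly accounts for the baseline neighbourhoods given by $\mathcal C$, and only touch violated pairs in $D$ together with the pivot's current cluster when updating the residual structure. A standard check $|D'|\ge |D|$ followed by reverting to $(\mathcal C, D)$ ensures we never regress. This yields an $O(|D|)$ implementation of the rounding; combined with the fractional guarantee from part one, the rounding analysis of \cite{cao2025fastLP} gives a below-$1.437$-approximate integral cluster representation with constant probability.

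The hardest step is the first part: the MWU subroutine in \cite{cao2025fastLP} accesses the graph through queries whose cost naturally scales with a sampled vertex's degree in $G$, not with the number of violated incidences at that vertex. Rewriting those queries so that their complexity is charged to $|D|$ rather than to the dense baseline encoded by $\mathcal C$ is what forces the baseline-plus-correction bookkeeping above, and it is precisely here that the smooth sampling probabilities from the proof of \cref{thm:local-search} are indispensable in order to keep the $\exp(-\sqrt{|D|}/\log^3|D|)$ concentration intact. The rounding part is comparatively routine, inheriting constant failure probability directly from \cite{cao2025fastLP}.
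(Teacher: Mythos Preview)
Your proposal is essentially correct and tracks the paper's own approach: both use the MWU scheme of \cite{cao2025fastLP}, replace the inner cluster-finding step with the smooth-sampling and exponential-concentration machinery developed for \cref{thm:local-search}, and adapt the rounding using the core-contraction and weighted-sampling ideas from \cref{thm:3pivot}. Your identification of the hardest step (charging each MWU query to $|D|$ rather than to true degrees) is accurate, and the paper resolves it just as you anticipate.

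One small framing difference worth noting: you describe the $O(|D|)$ implementation via a ``baseline from $\mathcal C$ plus corrections from $D$'' bookkeeping scheme. The paper instead first runs the preclustering of \cref{sec:preclustering} on $(\mathcal C,D)$ to obtain atoms and admissible edges, then \emph{contracts} cores so that the working graph literally has $O(|D|)$ vertices, after which the MWU subroutine (Algorithm~\ref{alg:disjointfamily-dynamic}) and both rounding procedures run more or less verbatim on this reduced instance. This is morally the same idea as your baseline-plus-corrections, but the explicit contraction makes the charging arguments cleaner, especially in the pivot-based rounding where the paper must handle the possibility of breaking a core mid-round and charges the resulting work to the cost increment. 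Your proposal would also benefit from stating explicitly that the preclustering step precedes the MWU, since the admissible-edge structure (and in particular the bounds $\dadm(v)=O(d(v)/\eps^2)$ and $\dadm(K)\le 3d_D(K)/|K|$) is what makes the per-round cost analysis go through.
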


The next corollary follows from this special case of \cref{thm:clusterLP} when $\mathcal{C}$ is the set of singletons and $D = E$. 

\begin{corollary}
    \label{cor:linear-static-lp-solving}
    There exists a $1.485$-approximate $O(m)$ time static correlation clustering algorithm in the normal graph representation with constant error probability. 
\end{corollary}

Because of the constant error probability from the rounding, we do not currently benefit from
the exponentially low error probability for finding the fractional solution. However, it will be important if we one day find rounding algorithms working with higher probability.

We will now plug both Theorem \ref{thm:local-search} and Theorem \ref{thm:clusterLP} into Theorem \ref{thm:dynamic-mixed-case}. More precisely, we use Theorem \ref{thm:local-search} to get an $O(1)$-approximate solution with exponentially high probability, and Theorem \ref{thm:clusterLP} to get a below \pureclusterlpratio-approximate cluster representation with constant probability. Applying  Theorem \ref{thm:dynamic-mixed-case} with a sufficiently small $\eps>0$, we get our main result in Theorem \ref{thm:dyn-clusterLP}.

\section{Dynamic Framework}
\label{sec:dynamic-framework}
In most part of the paper, we will work on a fixed graph $G=(V,E)$ represented by $(\calC, D)$, a pair of clustering and the symmetric difference between the edges $E$ and the edges of $\calC$. This graph will be receiving edge updates in the form of inserting new edges into $E$ or removing existing edges from $E$. 
For each vertex $v \in V$, let $d(v)$ be the degree of $v$, $N(v)$ be the set of neighbors of $v$ together with $v$ itself. 

Given a clustering $\calC$, let $C(v)$ be the cluster of $v$ in $\calC$. Let $\mathcal{E(C)}$ be the set of pairs $(u, v)$ such that $u$ and $v$ are in the same cluster in $\calC$. 
Let $\cost(\calC)$ be the cost for using this clustering, by definition $\cost(\calC) = |E \triangle \mathcal{E(C)}|$, 
where $A\triangle B$ denotes the symmetric difference between the sets $A$ and $B$. 
We will use these notations throughout the paper.

For the dynamic algorithm, we will follow the approach that for every certain number of updates we will recompute the full clustering. For the purpose of this, between the recomputations we are going to maintain the cluster representation of the current graph $G=(V,E)$:
\begin{itemize}
    \item $\calC$: A clustering of the graph $G$.
    \item 
    $D$: The edges from the symmetric difference $\mathcal{E(C)}\triangle E$, also referred to as the \textit{current violation}.
\end{itemize}
Then $|D|$ is the cost of the current clustering $\mathcal C$.
In addition, we shall assume the following trivial representation of the clustering $\calC$:
\begin{itemize}
    \item an identifier for each cluster.
    \item a doubly-linked list of the vertices in each cluster that can be accessed from the identifier.
    \item labeling each vertex with its cluster.
\end{itemize}
With this representation, we can easily, in constant time, ask if
two vertices are in the same cluster or move vertices between clusters. Also, for a given vertex $u$, we can also list the vertices in the same cluster in linear time.

One of the insights for our dynamic framework is that existing static $c$-approximate correlation clustering algorithms can be transformed to work
efficiently if we give them an arbitrary cluster
representation $(\mathcal C,D)$ as input instead of the classic graph representation $G=(V,E)$. 
With this input we will transform them so that they produce a new $c$-approximate correlation clustering
$\mathcal C'$ in $O(|D|)$ time. We will then use this algorithm to recompute our clustering before the updates has changed the cost of our clustering too much.

Since $\mathcal C'$ is then going to be the clustering we use as the initial clustering the next time we recompute, as long as we compute the clustering using an $O(1)$-approximate clustering for $\mathcal C'$ then $|D'|=|E\triangle \calC'|=O(\cost(\opt))$. This means that after $O(\epsilon|D|)$ updates the cost has at most changed by a multiplicative factor of $1+\eps$, while the recomputation uses $O(|D|)$ time. 
\begin{algorithm}[H]
	\caption{Dynamify($\mathcal{A},\eps$) : $\mathcal{A}$ takes a cluster presentation $(\mathcal C,D)$ and modifies it to a $c$-approximate cluster representation $(\mathcal C',D')$. %
    }
	\label{alg:dynamic}
	\begin{algorithmic}[1]
        \State $\calC \gets V$, $D \gets \emptyset$, $r \gets 0$, $\mu \gets \frac{\eps}{2(1+\eps)c}$
        \Procedure{flip}{$u,v$}\Comment{$(u,v)$ is the edge being updated.}
            \If{$(u,v)\in D$}
                \State $D \gets D\setminus \{(u,v)\}$
            \Else
                \State $D \gets D\cup \{(u,v)\}$
            \EndIf
            \State $r\gets r - 1$
            \If{$r \le 0$}
                \State $(\calC',D') \gets \mathcal{A}(\calC,D)$    
                \If{$|D'| \le |D|$} 
                    \State $(\calC,D) \gets (\calC',D')$
                \EndIf
                \State $r\gets \mu|D|$
            \EndIf
        \EndProcedure
	\end{algorithmic}
\end{algorithm}

It should be noted that we initially are going to present the algorithm as if the updates are computed in an amortized fashion. It is however not too problematic to deamortize the algorithm by computing the algorithm $\mathcal{A}$ in the background of the following $r$ updates. The details of this computation can be found in \cref{sec:deamortization}.

We are going to present this algorithm as being able to flip the occurrence of an edge. This is the method $\Call{flip}{}$ in \cref{alg:dynamic}. This is mostly to simplify the implementation since inserting and deleting edges are essentially the same operation, especially since we are working in a model where what matters is whether the edge currently is contained in the symmetric difference.

Regarding maintaining the set $D$ when performing the $\Call{flip}{}$ operations in \cref{alg:dynamic}, we have to be a bit careful. While this would be trivial using hashing, this would add an additional layer of complexity to handle the probability of the hashing not running in the expected time. Furthermore, in contrast to the clustering algorithm, we are not allowed to fail at this task for any update, since it could have adverse consequences for updates in the far future. 

For each edge, we are interested in whether or not it is present in the symmetric difference. The reason we have to be careful is that we only have $O\left(\eps^{-1}\right)$ time for each update, and we do not want this to increase the amount of space we have to use. We could do this trivially by using a 2d array of size $|V|^2$ for each possible edge. This would however increase the memory footprint of the whole algorithm to $O(|V|^2)$ from $O(|E|)$. We are instead going to perform this type of operation in an amortized way, by only calculating the symmetric difference $D$ right before we recompute with the algorithm $\mathcal{A}$. We keep an empty array of size $|V|$. Then for each vertex, we keep track of all updates that changes the neighborhood of this vertex, storing their occurrence each time they are made. Right before recomputing the clustering with $\mathcal{A}$, we then for each vertex $v$ incident to either an update or an edge in $D$ go through all the updates incident to $v$ and determine the final state of each of them. From this we can update all neighbors that could potentially be part of the symmetric difference, by flipping the corresponding entries in the empty array of size $|V|$ for each update and each entry in $D$, each in $O(1)$ time. This means that we can compute the symmetric difference after $r$ updates in time $O(r+|D|)$.

In the rest of this section, we are going to assume that the clustering algorithm $\mathcal{A}$ is deterministic. This is to simplify the description of our algorithm, since adding the randomness adds an additional layer of complications. We will later expand on this.

\begin{theorem}[Formal version of \cref{thm:dynamic-intro}]
\label{thm:dynamic}

    Let $\mu \le \frac{\eps}{2(1+\eps)c}$. Let $\mathcal{A}$ be a static correlation clustering algorithm that as input takes as a cluster representation $(\calC,D)$ and in $O(t|D|)$ time
    produces a $c$-approximate cluster representation $(\mathcal{C'},D')$.
    Then $\mathrm{Dynamify}(\mathcal{A},\eps)$ from \cref{alg:dynamic}, is a fully-dynamic algorithm on $G$ that in
    $O\left(\mu^{-1}t\right)$ worst-case time per edge update 
    maintains a $((1+\eps)c)$-approximate clustering.
\end{theorem}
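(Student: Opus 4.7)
The plan is to prove by induction on the number of $\textsc{flip}$ calls that, immediately after each such call returns, the stored clustering $\calC$ is a $(1+\eps)c$-approximation of the current graph $G$. I would decompose this into two sub-claims: \textbf{(I)} whenever the algorithm executes the rebuild branch (i.e., $t \le 0$), the clustering installed at the end of the call is a genuine $c$-approximation of the current graph; and \textbf{(II)} between consecutive rebuilds, the stored clustering's approximation ratio degrades by at most a factor $(1+\eps)$ over the $\le \mu|D_0|$ flips that occur in the interval.

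For (I), I would proceed inductively. At initialization, the singleton clustering over an empty edge set is optimal, so $|D|=0=\opt$. At a rebuild step, because $\textsc{flip}$ has synchronously updated $D$ with every edge update, $(\calC,D)$ is a valid cluster representation of the current graph $G$ and $|D|=\cost(\calC)$. By assumption $\mathcal{A}$ returns $(\calC',D')$ with $|D'|\le c\cdot\cost(\opt(G))$. If $|D'|\le|D|$, we install it and (I) holds directly; otherwise $|D|<|D'|\le c\cdot\cost(\opt(G))$, so the retained $\calC$ already satisfies the $c$-approximation.

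For (II), let $|D_0|$ denote the cost of $\calC$ immediately after a rebuild, and consider the state after $k\le\mu|D_0|$ subsequent flips. Since a single edge update changes the cost of any fixed clustering by exactly one, the cost of $\calC$ on the new graph is at most $|D_0|+k$, and the new optimum is at least $\cost(\opt(G_0))-k\ge |D_0|/c-k$. Hence the approximation ratio is bounded by
\[
    \frac{|D_0|+k}{|D_0|/c-k}\;\le\;\frac{1+\mu}{1/c-\mu}\;=\;\frac{c(1+\mu)}{1-c\mu}.
\]
A short algebraic check reduces $\frac{c(1+\mu)}{1-c\mu}\le(1+\eps)c$ to $\mu\bigl(1+c(1+\eps)\bigr)\le\eps$, and using $c\ge 1$ this is implied by the hypothesis $\mu\le\frac{\eps}{2(1+\eps)c}$. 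The degenerate case $|D_0|=0$ is handled separately: then $t$ is reset to $0$, so any subsequent flip immediately triggers another rebuild, leaving no degradation window.

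For running time, each $\textsc{flip}$ performs $O(1)$ bookkeeping, and the lazy symmetric-difference scheme described just before the theorem allows us to realize the up-to-date $D$ in $O(|D|)$ time at rebuild time, so a rebuild costs $O(t|D|)$. Since rebuilds are at least $\mu|D|$ flips apart, the amortized cost per update is $O(t/\mu)$. The worst-case bound then follows from standard background rebuilding: run $\mathcal{A}$ in the background against a snapshot across the interval and atomically install its output at the end; the full bookkeeping is deferred to \cref{sec:deamortization}. The one delicate step is this deamortization, since it must guarantee that the live foreground clustering continues to satisfy claim (II) throughout the background computation — and the choice of interval length $\mu|D|$ is exactly what makes this possible.
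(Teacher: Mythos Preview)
Your proof is correct and follows essentially the same approach as the paper: your part (II) is the content of the paper's \cref{lem:dynamic-approx-small-change} with a slightly more direct algebraic derivation (bounding the ratio $\frac{|D_0|+k}{|D_0|/c-k}$ rather than first isolating $|E\triangle E'|\le\frac{\eps}{2}\opt'$), and your running-time argument matches the paper's amortization plus the deferral to \cref{sec:deamortization}. You are in fact slightly more careful than the paper in part (I), explicitly arguing that when $|D'|>|D|$ the retained clustering is still $c$-approximate; the paper leaves this implicit.
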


To show this, we primarily need to make the observation that by not updating anything for $\mu|D|$ updates, we only increase the approximation ratio by a multiplicative ratio of $(1+\eps)$. This enables us to batch the updates and perform them only when the cost will have differed significantly.
\begin{lemma}\label{lem:dynamic-approx-small-change}
    Let $\calC$ be a $c$-approximation of the optimal clustering of the graph $G = (V,E)$ with the symmetric difference $\mathcal{E(C)} \triangle E$, and let $\mu \le \frac{\eps}{2(1+\eps)c}$. Let $G'=(V',E')$ be a graph such that $|E\triangle E'|\le \mu|\mathcal{E(C)} \triangle E|$. Then $\calC$ is an $((1+\eps)c)$-approximation for the graph $G'$.
\end{lemma}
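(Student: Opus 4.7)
The key observation I would use is that flipping a single edge changes the correlation clustering cost of any \emph{fixed} clustering $\calC_0$ by at most $1$: if we set $k := |E \triangle E'|$, then since $(E \triangle \mathcal{E}(\calC_0)) \triangle (E' \triangle \mathcal{E}(\calC_0)) = E \triangle E'$, the two symmetric differences differ in size by at most $k$. Hence for every clustering $\calC_0$, $|\cost_{G}(\calC_0) - \cost_{G'}(\calC_0)| \le k$. By hypothesis $k \le \mu \cost_G(\calC)$, and I will apply this bound to two specific clusterings, namely $\calC$ itself and the $G'$-optimum $\opt_{G'}$.

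The first application upper-bounds the numerator: $\cost_{G'}(\calC) \le \cost_G(\calC) + k \le (1+\mu)\cost_G(\calC) \le (1+\mu)c\cdot\cost_G(\opt_G)$, using that $\calC$ is a $c$-approximation in $G$. The second application lower-bounds the denominator: since $\opt_{G'}$ is a legal clustering for $G$ as well, $\cost_G(\opt_G) \le \cost_G(\opt_{G'}) \le \cost_{G'}(\opt_{G'}) + k$.

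Now I need to eliminate the residual dependence of $k$ on $\cost_G(\calC)$ in favour of $\cost_{G'}(\opt_{G'})$, which is the mildly tricky (circular-looking) part. Combining $k \le \mu\cost_G(\calC) \le \mu c\cdot\cost_G(\opt_G)$ with the previous display gives $k \le \mu c(\cost_{G'}(\opt_{G'}) + k)$; under the assumption $\mu c < 1$ (which follows from $\mu \le \tfrac{\eps}{2(1+\eps)c}$) this rearranges to $\cost_G(\opt_G) \le \frac{1}{1-\mu c}\cost_{G'}(\opt_{G'})$. Substituting back,
\[
\cost_{G'}(\calC) \;\le\; \frac{c(1+\mu)}{1-\mu c}\,\cost_{G'}(\opt_{G'}).
\]

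Finally, I would verify $\tfrac{1+\mu}{1-\mu c} \le 1+\eps$. Cross-multiplying yields the equivalent inequality $\mu(1 + c + c\eps) \le \eps$, i.e.\ $\mu \le \tfrac{\eps}{1 + c(1+\eps)}$. Since $c \ge 1$, we have $1 + c(1+\eps) \le 2c(1+\eps)$, so the hypothesis $\mu \le \tfrac{\eps}{2(1+\eps)c}$ is sufficient, finishing the proof. The only real obstacle is this algebraic manipulation — in particular, the fact that the natural bound on $k$ is in terms of $\cost_G(\calC)$ while we need it in terms of $\cost_{G'}(\opt_{G'})$, which forces the two-step chain through $\cost_G(\opt_G)$ above.
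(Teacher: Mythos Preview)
Your proof is correct and follows essentially the same approach as the paper: both exploit that $|\cost_G(\calC_0)-\cost_{G'}(\calC_0)|\le k$ for any fixed clustering, apply this to $\calC$ and to the two optima, and then resolve the circular dependence of $k$ on $\cost_G(\calC)$ by isolating $k$ in terms of $\cost_{G'}(\opt_{G'})$. The only cosmetic difference is that the paper substitutes $\mu c\le \tfrac{\eps}{2(1+\eps)}$ early to obtain the clean intermediate bound $k\le \tfrac{\eps}{2}\cost_{G'}(\opt_{G'})$ and then finishes in one line, whereas you carry $\mu$ through to the ratio $\tfrac{(1+\mu)c}{1-\mu c}$ and verify $\tfrac{1+\mu}{1-\mu c}\le 1+\eps$ at the end; the underlying inequalities are the same.
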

\begin{proof}
    Let $\calC^*$ be an optimal clustering for $G$ and let $\calC'^*$ be an optimal clustering for $G'$. Then we have that $|\mathcal{E(C)} \triangle E|\le c|\mathcal{E}(\calC^* )\triangle E|$. Furthermore, $\calC^*$ gives a lower bound of the optimal solution in $G'$ as $|\mathcal{E}(\calC^*)\triangle E| - |E \triangle E'| \le |\mathcal{E}(\calC'^*)\triangle E'|$. Combining we have that
    \begin{align*}
        |E \triangle E'| &\le \mu|\mathcal{E(C)}\triangle E|\\
        &\le \frac{\eps}{2(1+\eps)}|\mathcal{E}(\calC^*) \triangle E|\\
        &\le \frac{\eps}{2(1+\eps)}(|\mathcal{E}(\calC^*) \triangle E'| + |E \triangle E'|)\\
        &\le \frac{\eps}{2(1+\eps)}(|\mathcal{E}(\calC'^*) \triangle E'| + 2|E \triangle E'|).
    \end{align*}
    Isolating for $|E \triangle E'|$, we get $|E \triangle E'| \le \frac{\eps}{2}|\mathcal{E}(\calC'^*) \triangle E'|$. 
    Using this inequality we get
    \begin{align*}
        |\mathcal{E}(\calC) \triangle E'| &\le |\mathcal{E}(\calC) \triangle E| + |E \triangle E'|\\
        &\le c|\mathcal{E}(\calC^*) \triangle E| + |E \triangle E'|\\ 
        &\le c|\mathcal{E}(\calC'^*) \triangle E'| + 2c|E \triangle E'|\\
        &\le(1+\eps)c|\mathcal{E}(\calC'^*) \triangle E'|.
    \end{align*}
\end{proof}

\begin{proof}[Proof of \cref{thm:dynamic}]
    By the previous discussion, we can maintain the symmetric difference in an amortized fashion just before it is handed over to $\mathcal{A}$.

    For the running time, consider the time at which we recompute the clustering. Let $\hat D$ be the violation that was computed the previous time that the cluster was recomputed and $D$ the true violation just before a recomputation. Then it holds that $|D| \le (1+\mu)|\hat D|$ from the fact that each update can increase the violation by at most $1$. Thus, we observe that whenever we run $\mathcal{A}(\mathcal C, D)$, we have it is at least $\frac{\mu}{1+\mu}|D|$ updates since we last recomputed with $\mathcal A$. As $\mathcal A$ has running time $O(t|D|)$ and $\mu \le 1$ by definition, the amortized running time is $O(\mu^{-1} t)$ per update.

    Finally, to show the approximation ratio, for any update let $E$ be the edge set at the previous time when the cluster representation $(\calC,D)$ was recomputed, and $E'$ the current edge set. Then, since each update at most changes the set of edges by $1$ element and that it is at most $\mu|D|$ updates since the last recomputation, we have $|E\triangle E'|\le \mu|D|$. As $\calC$ is $c$-approximate, by \cref{lem:dynamic-approx-small-change} $\mathrm{Dynamify}(\mathcal{A},\eps)$ maintains an $((1+\eps)c)$-approximate clustering.
\end{proof}

Finally, we want to mention that it is possible to maintain the cost of the solution at all times doing the computation, as long as the user is well-behaved, that is the user never attempts to delete an edge that does not exist nor insert an edge that already exists. To achieve this, we are further going to assume that the algorithm is told for each update/flip whether the update is an edge insertion or an edge deletion. In this case, you can simply check whether the two vertices being updated currently are in the same cluster, and then depending on this and whether the update is an insertion or a deletion directly compute the new cost.

\subsection{Dynamic blow-up of error probabilities}

As we mentioned early, the error probability of a Monte Carlo algorithm can blow up logarithmaticly through our dynamic framework. 
We will show this effect with a concrete example in this subsection. 
To show how this blow-up can behave, we are going to work with a specific hypothetical approximation algorithm $\mathcal A$, which on cluster representation $(\calC,D)$ with probability $1-p$ outputs the optimal clustering and with probability $p$ does nothing and outputs its own input $(\calC,D)$ for a constant $p$. We note that this is not only a problem against an adaptive adversary, but that since the following construction is fully deterministic and specified ahead of time, an oblivious adversary could also implement the same updates.

We imagine that the initial state of the dynamic algorithm is that $G$ consists of $n/3$ disjoint $2$-paths. 
Over the next $2n/3$ updates, we remove both edges in each of the $2$-paths in the following updates one by one. 
We are going to look at the probability that the clustering is not updated after the $(2n/3-1)$'th update, since this is the first time the optimal clustering hits cost $0$. 
Failure to recompute at this update would imply no approximation, and therefore that our algorithm has failed badly.

The issue now arises in the case when we at any point overestimate the actual cost by more than a factor $2\mu^{-1}$. 
If this happens, $r$ is going to be assigned a value larger than the number of updates before we hit cost $0$, implying that we are going to fail badly at the $(2n/3-1)$'th update. 
This is guaranteed to happen if $k=\mu^{-1}\ln (2\mu^{-1}) = O(1)$ recomputations in a row fail to compute an good approximation. 
If we were to look at the probability of encountering $k$ failures in a row, then this is given by $p^k$.

The total number of recomputations is at least $\log n/\log(\mu^{-1}) = \Omega(\log n)$. So we also have at least $r=\Omega(\log n)$ opportunities to achieve $k$ fails in a row. 
Thus the probability of \emph{never} having $k$ fails in a row is therefore bounded by
$(1-p^k)^r < \exp(-rp^k) = \frac{1}{n^{\Omega(p^k)}}$.
Since $p$ is a constant, the probability for the algorithm to succeed is polynomially small. 

\subsection{Bounding the dynamic error probabilities}
\paragraph*{The pure cases} We will now prove \cref{thm:dynamic-random-intro}, analyzing how the failure probability of the static algorithm affects the failure probability of the dynamic algorithm.

We are going to consider any update $i^*$, and we want to bound the probability that our current clustering $\mathcal K$ is not $(1+\eps)c$-approximate when we get to this update. In particular, this requires that the last rebuilt cluster representation $(\mathcal C',D')$ was not $c$-approximate.

Taking a step back, suppose we want to do a rebuild before some update $i$ and the input for this rebuild is the cluster representation $(\mathcal C,D)$. We then get a new
cluster representation $(\mathcal C',D')$ with $|D'|\leq |D|$ which
we will keep for $\mu |D'|$ updates. We say that a rebuild happening at update $i$ with input $(\mathcal{C},D)$ is \emph{risky with respect to} $i^*$ if $i^*-i\leq \mu|D|$, for if we do not get
a $|D'|<|D|$, then we will not get
a better cluster representation before update $i^*$. Conversely, if it is not risky, then we know we will get another rebuild before we get to update $j$. Thus, if our clustering is not $(1+\eps)c$-approximate at update $j$, then some risky rebuild must have failed. We are going to upper bound the probability that any risky rebuild fails. To do this we are going to bound the number of risky rebuilds that can affect a specific update.

\begin{lemma}\label{lem:risky-limited-layers}
    Let $\mu < 1/6$. Suppose we want to do a rebuild at update $i$ with input representation $(\mathcal C,D)$ and that 
    the rebuild is risky with respect to $i^*$. Then we will perform at most $3$ rebuilds with input violations above $|D|/2$ between update $i$ and $i^*$.
\end{lemma}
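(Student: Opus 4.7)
The plan is to index the rebuilds occurring in $[i,i^\ast)$ as $i_0=i<i_1<\cdots$ with inputs of size $d_k=|D_k|$ and post-rebuild effective violations $d_k'\le d_k$ (where $d_k'=|D_k'|$ when the algorithm adopted the new representation, and $d_k'=d_k$ otherwise; in both cases this is the size of the representation carried forward). Two structural facts from the Dynamify loop drive the proof. First, directly from the algorithm, the time between rebuilds satisfies $i_{k+1}-i_k=\mu d_k'$. Second, since each edge update changes the violation by at most $1$, the input violation at the next rebuild obeys $d_{k+1}\le d_k'+\mu d_k'=(1+\mu)d_k'$, which rearranges to the lower bound $d_k'\ge d_{k+1}/(1+\mu)$.

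Next, I would list the counted rebuilds, namely those indices $k_0=0<k_1<\cdots<k_{J-1}$ for which the input violation exceeds $d_0/2$, and argue a uniform lower bound on the length of the gap between any two consecutive ones. Fix $j\ge 1$. By the counting condition, $d_{k_j}>d_0/2$. Applying the second inequality above at index $k_j-1$ gives $d_{k_j-1}'>d_0/\bigl(2(1+\mu)\bigr)$, hence the very last sub-gap before $i_{k_j}$ already satisfies
\[
i_{k_j}-i_{k_j-1}=\mu d_{k_j-1}'>\frac{\mu d_0}{2(1+\mu)}.
\]
Since $i_{k_j}-i_{k_{j-1}}\ge i_{k_j}-i_{k_j-1}$, the same lower bound holds for the full gap between the two counted rebuilds (the possibly many intermediate non-counted rebuilds only make the gap larger).

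Finally, the risky hypothesis gives $i_{k_{J-1}}-i_0<i^\ast-i_0\le\mu d_0$. Summing the gap bound across the $J-1$ gaps yields
\[
(J-1)\cdot\frac{\mu d_0}{2(1+\mu)}<\mu d_0,
\]
so $J-1<2(1+\mu)$. For $\mu<1/6$ this is $J-1<7/3$, and since $J-1$ is a nonnegative integer, $J\le 3$, which is exactly the claim.

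The main obstacle, which this plan sidesteps, is that between two counted rebuilds the violation may dip well below $d_0/2$ and then climb back, so one cannot hope for a lower bound on every $d_k'$ in that interval. The trick is to ignore those intermediate rebuilds entirely and lower-bound only the final sub-gap $\tau_{k_j-1}$ before each counted rebuild, which the growth bound $d_{k_j}\le(1+\mu)d_{k_j-1}'$ forces to be large simply because $d_{k_j}$ is large.
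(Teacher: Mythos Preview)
Your argument is correct. The key chain $d_{k_j}>d_0/2$, $d_{k_j}\le(1+\mu)d_{k_j-1}'$, hence $i_{k_j}-i_{k_j-1}=\mu d_{k_j-1}'>\mu d_0/(2(1+\mu))$, followed by telescoping against the risky bound $i^\ast-i\le\mu d_0$, is sound and yields $J-1<2(1+\mu)<7/3$, so $J\le 3$.

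Your route differs from the paper's. The paper does a case split: either every output stays at least $|D|/3$, in which case every gap is at least $\mu|D|/3$ and at most three rebuilds fit in the window; or some output first drops below $|D|/3$, after which fewer than $\mu|D|<|D|/6$ updates remain, so the input can never climb back to $|D|/2$. Thus the paper uses a ``point of no return'' observation to rule out oscillation, whereas you bypass the case split entirely by lower-bounding only the final sub-gap before each counted rebuild via the growth inequality $d_{k+1}\le(1+\mu)d_k'$. Your approach is slightly cleaner in that it avoids the split and does not need the auxiliary threshold $|D|/3$; the paper's approach, on the other hand, makes explicit the structural fact (used implicitly later in the section) that once the violation becomes small it remains small for the rest of the window.
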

\begin{proof}
    If update $i$ is risky then $i^*-i\le \mu|D|$. Suppose the rebuild performs a rebuild that produces the cluster representation $(\mathcal{C}',D')$ such that $|D'|\ge |D|/3$. Then the number of updates until the next rebuild would be at least $\mu|D|/3$. We can therefore have at most $3$ such rebuilds we pass before we  pass update $i^*$ and this includes a risky rebuild at update $i$. Suppose on the other hand that we from some update $j\geq i$ do a rebuild leading to an  output violation $D'$ is of size less than $|D|/3$. We have at most $\mu|D| < |D|/6$ updates between $j\geq i$ and $i^*$ due to the requirement of $\mu$. Since every update can at most increase the violation by $1$, we conclude that between update $j$ and $i^*$, the violation is always of size below $|D|/2$ . We conclude there can at most exists $3$ rebuilds between update $i$ and update $i^*$ with an input violation greater than $|D|/2$. 
\end{proof}

It should be noted that the statement of \cref{lem:risky-limited-layers} is completely independent of how each rebuild is done or whether it
succeeds or not. It only uses that the output violation is never bigger than the input violation. Since we have an upper bound of $m$ for the maximum number of violations, we have at most $O(\log m) = O(\log n)$ risky rebuilds, and so if for any risky rebuild the probability that that rebuild fails is 
bounded by $p$, then $O(p\log n)$
bounds the probability that any risky rebuild fails, hence the probability that we do not have a $(1+\eps)c$-approximate clustering when we get to update $j$. This completes the proof of Theorem \ref{thm:dynamic-random-intro}(a).

We now want to understand what happens if the failure probability with input cluster representation $(\mathcal C,D)$ is $q(|D|)$ for some non-increasing function $q$.
Let $Q(d)$ be the maximal probability that some risky 
rebuild fails, starting from some risky rebuild with input violation at most $d$. Clearly this is an increasing function with $Q(d)=0$ if $d<1$. 
Stating from a risky rebuild with input size $d$, it follows from
\cref{lem:risky-limited-layers}, that we can have at most $3$ risky rebuilds before we get one with input violation size at most $d/2$. Therefore we have 
the recurrence $Q(d)\leq 3 q( d/2)+Q( d/2)$.
If $q$ is a function falling inversely in $d$, then the sum is geometrically increasing, and then we conclude that $Q(d)=O(q(1))$.
This completes the proof of 
Theorem \ref{thm:dynamic-random-intro}(b).

\paragraph*{The mixed case} We will now prove \cref{thm:dynamic-mixed-case}.  In this case we are working with two static algorithms.
\begin{itemize}
    \item $\mathcal A$ that is $c$-approximate with probability $p$ and
    \item $\hat{\mathcal{A}}$ that is $\hat c$-approximate with probability $q(d)$ where $q$ falls inversely with the input violation size $d$.
\end{itemize}
Every time we do a rebuild, we apply first $\hat{\mathcal A}$, followed by $\mathcal A$. We are furthermore always going to keep the best solution after each application of both algorithms, so the combined algorithm  $ \mathcal A \circ \hat{\mathcal A}$ gives the best of both worlds (in fact, $\mathcal A\circ \hat{\mathcal A}$ fails to be $\hat c$-approximate with probability at most $p\cdot q(d)$, but we will not exploit that).

We are again going to fix an update $i^*$ and we want to show that the dynamic algorithm is $(1+\eps)c$-approximate at $i^*$ with probability
$O(p+q(1))$.
We define $\mu$ small enough that if we have a $c$-approximate clustering with violation size $d$, then it remains
$(1+\eps)c$-approximate for
$\mu d$ updates. We will also 
define $\mu$ smaller than $1/6$ and $1/(2\hat{c})$. Picking the minimum of these three options is still a constant.
As in the proof for \cref{thm:dynamic-random-intro}, we say that a rebuild at update $i$ is at risk with respect to update $i^*$ if we have input cluster representation 
$(\mathcal C,D)$ and $i^*-i\leq\mu|D|$. 
We then apply $\hat{\mathcal A}$
producing a cluster representation
$(\hat{\mathcal C},\hat{D})$ that will be used as input for $\mathcal A$. 
Next, we only say $\mathcal A$ is at risk with respect to $i^*$ if 
$\hat{\mathcal C}$ is $\hat{c}$-approximate and $i^*-i\le \mu|D^*|$.
We then apply $\hat{\mathcal A}$ to $(\hat{\mathcal C},\hat{D})$ producing
the final output $(\mathcal C',D')$ of the rebuild with $\mathcal A\circ \hat{\mathcal A}$.

With the above definitions, suppose we get to update $i^*$ with a clustering that is not $(1+\eps)c$-approximate. Let the last rebuild be at update $i$. Then $\mathcal A\circ \hat{\mathcal A}$ must have failed producing a $c$-approximate clustering $\mathcal C'$. Moreover, the whole rebuild must have been at risk, for otherwise, there would be another rebuild before update $i^*$. 

If $\hat{\mathcal A}$ succeeded, then $\mathcal A$ must still have been at risk, for otherwise we again conclude that there would be another rebuild before update $j$.
Finally, $\mathcal A$ must
have failed producing a $c$-approximate clustering $\mathcal C'$.

Recall that $\hat{\mathcal A}$ fails producing a $\hat{c}$-approximate clustering with probability $q(d)$ where $q$ falls inversely with the input violation size $d$. As in our
previous analysis, we would like to conclude that the probability that we ever fail a risky application of $\hat{\mathcal A}$ which fails is $O(q(1))$. However, we need to 
revisit the argument to check that we are not cheating
because the full rebuild also
applies $\mathcal A$.

Consider a risky rebuild at update $i$. With input violations size $d$, risky means that
$i^*-i\leq \mu d$. We will now apply 
Lemma \ref{lem:risky-limited-layers} which doesn't care how the rebuild is done, so it also applies when now rebuild with $\mathcal A\circ \hat{\mathcal A}$. The lemma states 
that we can perform at most $3$ rebuilds with input violations above $d/2$ between update $i$ and $i^*$.

As before, we now define
$Q(d)$ as the maximal probability that $\mathcal A^*$
ever fails on on a risky rebuild starting from input violation size at most $d$. Then $Q(d)\leq 3 q(d/2)+Q(d/2)=O(q(1)).$
We will now assume that $\mathcal{\hat  A}$ never fails a risky rebuild. The last rebuild
must then have included a risky application of $\mathcal A$ which failed.  We will bound the probability of this event by $O(p)$. For this we use the following lemma.

\begin{lemma}\label{lem:risky-with-approx}
    Let $\mu \le 1/(2\hat{c})$. Suppose we do a rebuild at update $i$ with a $\hat c$-approximate input representation $(\hat{\mathcal C},\hat{D})$ and that the rebuild is risky with respect to $i^*$. Then we will perform at most $2\hat c$ rebuilds between update $i$ and $i^*$.
\end{lemma}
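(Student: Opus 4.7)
The plan is to use the $\hat{c}$-approximation guarantee on the input to bound OPT from below at update $i$, and then propagate that lower bound across the entire interval $[i, i^*]$ using the fact that each edge update changes OPT by at most $1$. The choice $\mu \le 1/(2\hat{c})$ will ensure OPT remains at least a constant fraction of its initial value, which in turn will lower-bound the violation size of any rebuild inside the interval and thereby upper-bound the number of rebuilds that can fit.

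Concretely, I would set $\hat{d} = |\hat{D}|$. Since $(\hat{\mathcal{C}}, \hat{D})$ is $\hat{c}$-approximate at update $i$, we have $\mathrm{OPT}_i \ge \hat{d}/\hat{c}$. The risky hypothesis gives $i^* - i \le \mu \hat{d}$, so at any update $j \in [i, i^*]$,
\[
\mathrm{OPT}_j \;\ge\; \mathrm{OPT}_i - (j - i) \;\ge\; \frac{\hat{d}}{\hat{c}} - \mu \hat{d} \;\ge\; \frac{\hat{d}}{2\hat{c}},
\]
where the last inequality uses $\mu \le 1/(2\hat{c})$. Next, I would argue that any rebuild triggered at some update $j \in [i, i^*]$ operates on an input cluster representation $(\mathcal{C}, D)$ whose violation satisfies $|D| = \cost(\mathcal{C}) \ge \mathrm{OPT}_j \ge \hat{d}/(2\hat{c})$. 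After that rebuild, $\mathrm{Dynamify}$ sets the counter $t \gets \mu |D|$ (on whatever $D$ is currently stored, which again is the cost of the current clustering and hence at least OPT at that time), so the gap to the next rebuild is at least $\mu \hat{d}/(2\hat{c})$ updates.

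Finally, dividing the total update budget $i^* - i \le \mu \hat{d}$ by the minimum per-rebuild gap $\mu \hat{d}/(2\hat{c})$ yields at most $2\hat{c}$ rebuilds between update $i$ and $i^*$, which is the claim.

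The main obstacle is bookkeeping about which violation size controls the trigger: after an application of $\mathcal{A} \circ \hat{\mathcal{A}}$, the stored $D$ is either the rebuild's output (if it improved on the input) or the input itself, and the trigger $t \gets \mu |D|$ is set on this stored $D$. In either case it equals the cost of the currently maintained clustering, which is bounded below by OPT at that moment, and that is all the argument needs. Beyond this, the proof is essentially a one-line arithmetic consequence of $\mu \le 1/(2\hat{c})$ and the $\hat{c}$-approximation.
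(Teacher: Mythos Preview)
Your proposal is correct and follows essentially the same route as the paper's proof: lower-bound $\mathrm{OPT}$ at update $i$ via the $\hat c$-approximation, propagate that lower bound across $[i,i^*]$ using $\mu\le 1/(2\hat c)$, deduce that every stored violation in the interval has size at least $|\hat D|/(2\hat c)$, and divide the interval length $\mu|\hat D|$ by the resulting minimum inter-rebuild gap $\mu|\hat D|/(2\hat c)$. Your care about whether the trigger is set on the input or the stored/output violation is exactly the right bookkeeping, and the paper handles it the same way (bounding the output $|D'|$ below by the current optimum).
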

\begin{proof}
    Let $(\mathcal C_i^*,D_i^*)$ be an optimal clustering at update $i$. Each update can decrease the violation by at most one, so 
    $|D_{i^*}^*| \ge |D_i^*|-(i^*-i) \ge |\hat{D}|/\hat{c}-\mu|\hat{D}| \ge |\hat{D}|/(2\hat{c})$,
    where the last inequality used that $\mu \le 1/(2\hat{c})$. For every rebuild between update $i$ and update $i^*$ producing an output representation $(\mathcal{C}',D')$, we have $|D'|\ge |D_i^*|\geq |\hat{D}|/(2\hat{c})$, and so we have at least $\mu|\hat{D}|/(2\hat{c})$ updates between any two rebuilds between update $i$ and $i^*$. Thus the total number of rebuilds can be at most 
    $\frac{(i^*-i)}{\mu |\hat{D}|/(2\hat{c})}\leq \frac{\mu |\hat{D}|}{\mu |\hat{D}|/(2\hat{c})}=2\hat{c}$.
\end{proof}

Using \cref{lem:risky-with-approx}, we conclude that from the first risky rebuild with $\mathcal A$, there can be at $2\hat{c}$ times rebuilds, hence the probability that $\mathcal A$ makes a risky failure with respect to $i^*$ is bounded by $2\hat{c} p=O(p)$. 

Summing up, we conclude that the probability that $\hat{\mathcal A}$ or $\mathcal A$ ever fail a risky update with respect to $i^*$ is bounded by $O(p+q(1))$.
This also bounds the probability that our dynamic clustering is not $(1+\eps)c$-approximate when we get to update $i^*$, completing the proof of \cref{thm:dynamic-mixed-case}.

\subsection{Deamortization}
\label{sec:deamortization}
In this section, we are going to explore how to de-amortize the algorithm so that each update runs in worst case $O(\mu^{-1}t)$ instead of just amortized $O(\mu^{-1}t)$.  For this, we are mostly going to be using standard techniques. The general idea is that whenever we determine that we are going to recompute the clustering, we start doing this by making $O(\mu^{-1}t)$ work for each update on the recomputation.

In the process of describing the amortization, we describe the algorithm as running over a number of epochs. Each epoch consists of a recomputation of the clustering, and the cleaning and comparison of our current available two clusterings and swapping in the best of the two.

The deamortization is essentially going to work by computing $O\left(\mu^{-1}t\right)$ steps of the static algorithm for each update. This means that we are going to have a pipeline-like structure, where whenever an update arrives in some epoch $i$, we are only going to start computing the clustering using it in the next epoch $i+1$. Then only from the epoch after that one, $i+2$, are we going to use the result of the computations to answer queries. This delay will at most result in a multiplicative error of $(1+O(\eps))$ to the approximation factor for each epoch we delay between receiving the update to an edge and the time in which it is used to answer the received queries.

We have to be a bit careful with this deamortization approach. The reason is that we need to swap out the old clustering for a new one in $O\left(\mu^{-1}t\right)$ time in the worst case. This is non-trivial, since the cluster representation $(\mathcal{C},D)$ has size $\Omega(n)$ which might be much larger than $|D|$, in which case we do not have the time to make a copy and swap it in. So we have to make sure that we do not copy more than what is necessary. 

\paragraph*{Swapping in a new solution} To swap in a new solution, for each vertex we maintain two entries, the possible ids of clusters that a particular vertex can belong to. These correspond to the clustering the vertex belongs to in the old clustering and the one in the new clustering. We furthermore have a global flag that indicates 
whether to use the old clustering or the new one. The way we use this information is that if two options exist for the cluster a vertex can belong to, then the global flag decides. Otherwise, the vertex belongs to the only cluster that is written down. This means that we can swap in the new solution by flipping the global flag.

After we have swapped in the new solution, we spend time proportional to the number of updates for the old computation to clean up the vertices that have two cluster ids written down. After that, the new solution has become the old one, so we finally flip the global flag to say use the old solution if it exists, to be ready for the next round of computations.

\section{Dynamic Pivot}
\label{sec:pivot}

In this section, we will show how to implement the Pivot algorithm \cite{ACN08} in $O(|D|)$ time in cluster representation $(\calC, D)$, as a warm-up for our full algorithm. Before the detail, we first show some helpful properties of the cluster representation (See \cref{fig:pivot-preprocess}). 

\begin{definition}[Active]
    A vertex $v$ is \emph{active} if it is incident to some edges in $D$, otherwise it is \emph{inactive}. 
    A cluster in $\calC$ is \emph{active} if it contains an active vertex, otherwise it is \emph{inactive}. 
\end{definition}

\begin{definition}[Core]
    For any active cluster $C$, the set of inactive vertices in $C$ is called the \emph{core} of $C$, denoted by $\core(C)$. 
\end{definition}

In the next lemma, we will see that all inactive vertices can be treated in groups, and the only challenging part is on the active vertices. 

\begin{lemma}
    For any graph represented by $(\calC, D)$, the following two statements hold. 
    \begin{enumerate}
        \item Every inactive cluster $C$ is a perfect clique in the graph, with no exterior edges incident to its vertices. 
        \item For any active cluster $C$, all the vertices in $\core(C)$ have the same neighborhood, which equals to $C$. 
    \end{enumerate}
\end{lemma}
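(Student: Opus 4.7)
The plan is to prove both parts by unpacking the definition of $D = \mathcal{E}(\calC)\triangle E$ directly: a pair $(u,v)$ lies in $D$ precisely when either $u,v$ are in the same cluster but $(u,v)\notin E$, or $u,v$ are in different clusters but $(u,v)\in E$. So if a vertex $v$ is inactive, then for every other vertex $w$, the pair $(v,w)$ is not a violation, meaning the edge/non-edge between $v$ and $w$ agrees with the clustering's implicit decision.

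For part 1, fix an inactive cluster $C$ and pick any two vertices $u,v\in C$. Since both are inactive, $(u,v)\notin D$; combined with $u,v$ being in the same cluster of $\calC$, this forces $(u,v)\in E$. Thus $C$ induces a clique. For any $u\in C$ and $w\notin C$, the same reasoning applied to $u$ (inactive) yields $(u,w)\notin D$, and since $u,w$ are in different clusters, $(u,w)\notin E$. Hence there are no edges leaving $C$.

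For part 2, let $C$ be an active cluster and $v\in \core(C)$, so $v$ is inactive. For any $u\in C\setminus\{v\}$: $u$ and $v$ share a cluster and $(u,v)\notin D$, so $(u,v)\in E$, giving $u\in N(v)$. For any $w\notin C$: $v,w$ lie in different clusters and $(v,w)\notin D$, so $(v,w)\notin E$, giving $w\notin N(v)$. Recalling the paper's convention that $N(v)$ includes $v$ itself, we conclude $N(v)=C$. Since this holds for every $v\in\core(C)$, all core vertices share exactly the same neighborhood $C$.

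I do not expect any real obstacle here; the statement is purely a bookkeeping consequence of the symmetric difference definition, and the only care needed is to (i) apply the definition in both directions (edge-should-be-there vs.\ edge-should-not-be-there), and (ii) remember that the paper defines $N(v)$ to include $v$ itself, so the equality $N(v)=C$ (rather than $C\setminus\{v\}$) comes out cleanly.
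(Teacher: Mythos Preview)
Your proof is correct and is exactly the straightforward unpacking of the definition of $D=\mathcal{E}(\calC)\triangle E$ that the paper has in mind; the paper itself simply states that the proof is straightforward and omits it.
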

The proof of the lemma is straightforward. 
Therefore, any locally good clustering has to include all the inactive clusters. 
Moreover, it has to put all vertices in a core together and never put two cores in the same cluster. From this fact, we can contract each core into a weighted vertex. 
When creating a new clustering, we keep the label of the clusters to each core the same, and only move active vertices around. 
After contraction, we only need to deal with a graph of $O(|D|)$ vertices, which are the cores and active vertices. 

\begin{figure}
\centering
\begin{subfigure}{0.47\textwidth}
    \begin{center}
        \begin{tikzpicture}[
  scale=0.6,
  every node/.style={font=\tiny},
  nodeStyle/.style={
    circle, draw=black, thick, inner sep=1.8pt,
    minimum size=4mm, fill=white
  },
  posEdge/.style={line width=0.5pt, draw=black!45},     %
  posEdgeBad/.style={line width=0.5pt, draw=black!70, very thick},     %
  negEdge/.style={draw=red!70!black, dashed, very thick},%
  clusterBox/.style={
    rounded corners=6mm, draw=#1!60, fill=#1!20,
    fill opacity=0.15, very thick
  },
  peel/.style={rounded corners=6mm, draw=green!60!black,
               fill=green!15, fill opacity=0.18, very thick},
  super/.style={circle, double, draw=black,
                minimum size=6mm, very thick},
]

\node[nodeStyle,fill=blue!15] (a1) at (1,1.6)  {1};
\node[nodeStyle,fill=blue!15] (a2) at (0.9,2.4){2};
\node[nodeStyle,fill=blue!15] (a3) at (0,2.7){3};
\node[nodeStyle,fill=blue!15] (a4) at (-0.9,1.9){4};

\node[clusterBox=blue, fit=(a1) (a2) (a3) (a4), inner sep=7pt] (boxA) {};

\draw[posEdge] (a1) -- (a2) -- (a3) -- (a4) -- (a1) -- (a3);
\draw[posEdge] (a2) -- (a4);

\node[nodeStyle,fill=green!15] (b1) at (4.0,2.5) {5};
\node[nodeStyle,fill=green!15] (b2) at (5.0,2.1) {6};
\node[nodeStyle,fill=green!15] (b3) at (4.6,1.3) {7};
\node[nodeStyle,fill=green!15] (b4) at (3.4,1.2) {8};

\node[clusterBox=green, fit=(b1) (b2) (b3) (b4), inner sep=10pt] (boxB) {};

\draw[posEdge] (b1) -- (b2) -- (b3) -- (b4) -- (b1) -- (b3);
\draw[posEdge] (b2) -- (b4);

\node[nodeStyle,fill=violet!15] (d1) at (3.6,-1.9) {9};
\node[nodeStyle,fill=violet!15] (d2) at (4.6,-1.4) {10};
\node[nodeStyle,fill=violet!15] (d6) at (5.2,-3.2) {14};

\node[nodeStyle,fill=violet!15] (d3) at (5.3,-2.1) {11};
\node[nodeStyle,fill=violet!15] (d4) at (3.6,-2.9) {12};
\node[nodeStyle,fill=violet!15] (d5) at (4.5,-3.8) {13};
\node[clusterBox=violet, fit=(d1) (d2) (d3) (d4) (d5) (d6), inner sep=8pt] (boxD) {};

\draw[posEdge] (d1) -- (d2);
\draw[negEdge] (d1) -- (d6);
\draw[posEdge] (d2) -- (d6);

\draw[posEdge] (d1) -- (d3);
\draw[posEdge] (d1) -- (d4);
\draw[posEdge] (d1) -- (d5);
\draw[posEdge] (d2) -- (d3);
\draw[posEdge] (d2) -- (d4);
\draw[posEdge] (d2) -- (d5);
\draw[posEdge] (d3) -- (d4);
\draw[posEdge] (d3) -- (d5);
\draw[posEdge] (d3) -- (d6);
\draw[posEdge] (d4) -- (d5);
\draw[posEdge] (d4) -- (d6);
\draw[posEdge] (d5) -- (d6);

\draw[posEdgeBad] (b3) -- (d2);

\end{tikzpicture}
    \end{center}
    \caption{Input representation $(\mathcal{C},D)$ of graph $G$.}
    \label{fig:pivot-preprocess-a}
\end{subfigure}
\hfill
\begin{subfigure}{0.47\textwidth}
    \begin{center}
    \begin{tikzpicture}[
  scale=0.6,
  every node/.style={font=\tiny},
  nodeStyle/.style={
    circle, draw=black, thick, inner sep=1.8pt,
    minimum size=4mm, fill=white
  },
  posEdge/.style={line width=0.5pt, draw=black!45},     %
  posEdgeBad/.style={line width=0.5pt, draw=black!70, very thick},     %
  negEdge/.style={draw=red!70!black, dashed, very thick},%
  clusterBox/.style={
    rounded corners=6mm, draw=#1!60, fill=#1!20,
    fill opacity=0.15, very thick
  },
  peel/.style={rounded corners=6mm, draw=green!60!black,
               fill=green!15, fill opacity=0.18, very thick},
  super/.style={circle, double, draw=black,
                minimum size=6mm, very thick},
]

\node[super,fill=green!15] (bs) at (4.0,2.5) {};

\node[nodeStyle,fill=green!15] (b3) at (4.6,1.3) {7};

\node[clusterBox=green, fit=(b3) (bs), inner sep=7pt] (boxB) {};

\draw[posEdge] (bs) -- (b3);

\node[nodeStyle,fill=violet!15] (d1) at (3.6,-1.9) {9};
\node[nodeStyle,fill=violet!15] (d2) at (4.6,-1.4) {10};
\node[nodeStyle,fill=violet!15] (d6) at (5.2,-3.2) {14};

\node[super,fill=violet!15] (ds) at (3.8,-3.1) {};
\node[clusterBox=violet, fit=(d1) (d2) (d6) (ds), inner sep=8pt] (boxD) {};

\draw[posEdge] (d1) -- (d2);
\draw[negEdge] (d1) -- (d6);
\draw[posEdge] (d2) -- (d6);

\draw[posEdge] (ds) -- (d1);
\draw[posEdge] (ds) -- (d2);
\draw[posEdge] (ds) -- (d6);

\draw[posEdgeBad] (b3) -- (d2);

\end{tikzpicture}
    \end{center}
    \caption{$V'$ after removing inactive vertices and adding virtual vertices}
    \label{fig:pivot-preprocess-b}
\end{subfigure}
\caption{Outline of preprocessing done for \textsc{Pivot}($\calC,D$).}
\label{fig:pivot-preprocess}
\end{figure}

Recall that the Pivot algorithm proceeds by selecting a vertex $v$ uniformly at random, then creates a new cluster $N(v)$ and recurses on the remaining graph. 
We will simulate the Pivot algorithm directly on the contracted graph in $O(|D|)$ time. 
The pseudocode is presented in \cref{alg:pivot}. 

\begin{algorithm}
    \caption{Pivot($\calC, D$)}
	\label{alg:pivot}
	\begin{algorithmic}[1]
        \State Let $V'$ be the set of active vertices and $\mathcal{S}$ be the set of active clusters. 
        \State For each active cluster $C$, add a virtual vertex $c$ for $\core(C)$ to $V'$. 
        \State For each vertex $v \in V'$, set its weight $w_v$ to be the number of vertices it represents. 
        \State For each active cluster $C$, construct a list $L(C)$ of all the vertices in $V'$ from $C$. 
        \State For each vertex $v \in V'$, let $N^+(v)$ be the list of neighbors of $v$ out of $C(v)$, and $N^-(v)$ be the list of non-neighbors of $v$ in $C(v)$. %
        \State $\calC' \gets \calC \setminus \mathcal{S}$
        \Comment{We can do this in-place in $O(|\mathcal{S}|)$ time}
        \While{$V' \neq \emptyset$} 
            \label{lst:pivot-while}
            \State Pick a vertex $v \in V'$ with probability proportional to $w_v$
            \label{lst:pivot-weighted-sampling}
            \State $T \gets (N^+(v) \cap V') \cup (L(C(v)) \setminus N^-(v))$
            \label{lst:pivot-cluster-create}
            \State $\calC' \gets \calC' \cup \{T\}$, $V' \gets V' \setminus T$, $L(C(v)) \gets L(C(v)) \cap N^-(v)$
            \label{lst:pivot-while-end}
        \EndWhile
        \State Unpack $\calC'$ and return the corresponding clustering for the original graph
	\end{algorithmic}
\end{algorithm}

\begin{theorem}
    Pivot($\calC, D$) from \cref{alg:pivot} runs in $O(|D|)$ time and has the same output distribution as Pivot($V, E$) from \cite{ACN08} for the same graph. 
\end{theorem}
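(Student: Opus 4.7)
The proof splits naturally into two independent parts: (i) show that \cref{alg:pivot} produces a clustering with the same distribution as Pivot$(V,E)$ from \cite{ACN08} on the same graph, and (ii) bound the running time by $O(|D|)$.

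For (i), the heart of the matter is the structural lemma stated right before \cref{alg:pivot}. By a short case analysis on the identity of each pivot, its two assertions imply two invariants of a run of Pivot$(V,E)$: every inactive cluster $C$ is always produced as a single output cluster (its vertices are mutually adjacent and have no external neighbors, so whichever vertex of $C$ is first chosen as pivot recreates exactly $C$), and the core $\core(C)$ of every active cluster $C$ always ends up inside a single output cluster (any pivot outside $\core(C)$ is either adjacent to all of $\core(C)$, if it lies in $C$, or to none of it, if it lies outside $C$). Hence both an inactive cluster and each core behave in Pivot$(V,E)$ as an indivisible ``atom,'' and the conditional probability that a given atom is the next pivot is proportional to the number of original vertices it contains. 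Replacing each core by a virtual vertex of weight $|\core(C)|$ and committing the inactive clusters up-front (legitimate because they have no interaction with the rest of the graph) preserves these marginal probabilities exactly; a short induction then confirms that the formula $T \gets (N^+(v)\cap V')\cup (L(C(v))\setminus N^-(v))$ always computes the closed neighborhood of the chosen atom in the current remaining graph, closing the coupling.

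For (ii), the set $V'$ has size $O(|D|)$: every active vertex is charged to an incident violation, and every virtual core representative is charged to the active vertex that witnesses the activity of its cluster. The preprocessing that builds $L(C)$, $N^+(v)$, and $N^-(v)$ is a single scan of $D$ together with a pass over the active clusters, and costs $O(|D|)$. For the while loop I would amortize: the work at a pivot $v$ is bounded by $|N^+(v)\cap V'|$ together with the scan of $L(C(v))$, and the latter splits into a ``removed'' part $L(C(v))\setminus N^-(v)$ whose sizes sum over all iterations to at most $|V'|$, and a ``kept'' part $L(C(v))\cap N^-(v)$, charged against $\sum_v|N^-(v)|=O(|D|)$. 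Combined with $\sum_v|N^+(v)|=O(|D|)$, the loop runs in $O(|D|)$ time. The remaining cost is the weighted sampling at \cref{lst:pivot-weighted-sampling}, for which I would use the alias method with $O(|V'|)$ preprocessing and $O(1)$ per sample; upon drawing an already-removed vertex we reject and resample, and we rebuild the table whenever the live fraction of $V'$ halves. A standard doubling amortization keeps the total expected sampling cost at $O(|V'|)=O(|D|)$.

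The main obstacle is the distributional coupling: although the two invariants above are intuitively unsurprising, turning them into a formal equality of distributions requires an inductive argument that, conditional on the history of pivots, both the probability of the next pivot and the cluster it assembles agree across the two processes; in particular, one must verify that reordering the inactive clusters to the front and collapsing each core to a single weighted virtual vertex preserves the joint law, not merely the marginals. A secondary technical point is achieving strictly linear — as opposed to $O(|D|\log|D|)$ — total time for the weighted sampling, which is precisely what the alias-with-rebuild scheme above addresses.
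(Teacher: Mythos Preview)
Your proposal is correct and follows essentially the same approach as the paper: the distributional equivalence rests on the structural lemma (inactive clusters are isolated cliques; cores have neighborhood exactly $C$), and the running-time bound uses $|V'|=O(|D|)$ together with the same amortization of the while loop via $\sum_v|N^+(v)|$, $\sum_v|N^-(v)|$, and $\sum|T|$. The one substantive difference is the weighted sampling in \cref{lst:pivot-weighted-sampling}: the paper does not build its own scheme but invokes a black-box data structure (\cref{lem:weighted-sampling}, from \cite{fischer2025faster}) that supports \textsc{Sample} and \textsc{Remove} with total time $O(|V'|)$ with high probability, whereas you sketch an alias-method-with-rebuild that gives the same bound in expectation. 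Both are valid; the cited lemma buys a high-probability guarantee for free, while your construction is self-contained.
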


\begin{proof}
    We will first show the correctness of this algorithm. 
    Each group of vertices contracted to the same one, are neighbors to each other and have the same outgoing neighbors, so they will always be put in the same cluster. 
    Therefore, we can treat them as a single one with a higher probability to be sampled. 
    Note that all clusters in $\calC \setminus \mathcal{S}$ are cliques without any other edges, so the Pivot algorithm will always put them together. 
    When we start line 16, $N^+(v)$ is the set of neighbors of $v$ out of $C(v)$, $N^-(v)$ is the set of non-neighbors of $v$ in $C(v)$. 
    While running the loop from line~\ref{lst:pivot-while} to line~\ref{lst:pivot-while-end}, $L(C)$ keeps track of the remaining vertices in $V'$ associated with cluster $C$. 
    So $T$ is the set of remaining neighbors of $v$ in $V'$, therefore it has the same output distribution as the Pivot algorithm in \cite{ACN08}. 

    Now we are going to analyze the running time of this algorithm. 
    We have $|\mathcal{S}| \le 2|D|$ and $|V'| \le 2|D| + |\mathcal{S}| \le 4|D|$, so it only takes $O(|D|)$ time before the start of line~\ref{lst:pivot-while}. 
    By \cref{lem:weighted-sampling}, line~\ref{lst:pivot-weighted-sampling} can be implemented in total time $O(|V'|)$ by weighted sampling without replacement. 
    Finally, note that $T$ in line~\ref{lst:pivot-cluster-create} can be computed in $O(|T| + |N^+(v)| + |N^-(v)|)$ time, and the total size of $T$ is at most $|V'|$, and the total size of $N^+(v) \cup N^-(v)$ is at most $2|D|$, so the running time of the full algorithm is $O(|D|)$. 
\end{proof}
The next lemma shows that line 8 in total can be implemented in $O(|V'|)$ time. 
\begin{lemma}[\cite{fischer2025faster} Lemma 28]
    \label{lem:weighted-sampling}
    Let $A$ be a set of initially $n$ objects with associated integer weights $\{w_a\}_{a \in A}$. 
    There is a data structure supporting the following operations on $A$:
    \begin{itemize}
        \item SAMPLE(): Samples and removes an element $a \in A$, where $a \in A$ is selected with probability $w_a / \sum_{a' \in A}w_{a'}$. 
        \item REMOVE($a$): Removes $a$ from $A$. 
    \end{itemize}
    The total time to initialize the data structure and to run the previous operations until $A$ is empty is bounded by $O(n)$, with high probability $1-\frac{1}{n^c}$, for any constant $c > 0$. 
\end{lemma}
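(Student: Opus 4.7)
The plan is to reduce weighted sampling without replacement to extracting elements in sorted order of random keys, and then to implement this sorted extraction so that SAMPLE and REMOVE each cost $O(1)$ amortized.

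First, I would invoke the Efraimidis--Spirakis reservoir identity: independently draw $U_a \sim \mathrm{Uniform}(0,1)$ for each $a \in A$ and set $Z_a = -(\ln U_a)/w_a$, so that $Z_a \sim \mathrm{Exp}(w_a)$. A standard computation shows that the joint distribution of the successive SAMPLE outputs equals that of listing the elements of $A$ in increasing order of $Z_a$. Generating all the $Z_a$ takes $O(n)$ time, and what remains is to produce this sorted order lazily while interleaving REMOVE operations.

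Next, I would build a bucketed data structure over the $Z_a$. To handle the heterogeneous rates, I would first split $A$ into dyadic weight classes $W_j = \{a : 2^j \le w_a < 2^{j+1}\}$; within each class the $Z_a$ values agree in scale up to a factor of~$2$, so bucketing $[0,\infty)$ into $\Theta(|W_j|)$ geometric intervals places $\Theta(1)$ elements in each bucket in expectation. The SAMPLE operation walks forward through the buckets of each class and returns the globally smallest live key; REMOVE is implemented lazily via a deletion bit. The $O(\log n)$ per-class streams are merged with a small priority queue of size $O(\log n)$, giving total cost $O(n)$ over all operations in expectation.

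The main obstacle is to upgrade this from an expected $O(n)$ bound to the claimed high-probability bound of $1-n^{-c}$. Within a single class the bucket occupancies are sums of independent indicators with mean $\Theta(1/|W_j|)$, so a Chernoff bound controls $\Pr[\text{any bucket of class } j \text{ has load} > c_0 \log n]$ by $n^{-c-2}$ for $c_0$ chosen large enough in terms of $c$; a union bound over the $O(\log n)$ classes then yields the overall $1-n^{-c}$ guarantee. If a bucket ever exceeds the load threshold, a secondary rebucketing inside the offending bucket restores balance at amortized cost $O(|W_j|)$, keeping the total running time $O(n)$ with the claimed probability.
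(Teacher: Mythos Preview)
The paper does not prove this lemma; it is quoted from \cite{fischer2025faster} and used as a black box, so there is no in-paper argument to compare against.

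Evaluating your proposal on its own, the Efraimidis--Spirakis reduction to sorting by independent exponential keys is correct, but the data structure you sketch has two real gaps. First, the number of dyadic weight classes $W_j=\{a:2^j\le w_a<2^{j+1}\}$ is $\Theta(\log(w_{\max}/w_{\min}))$, which is not $O(\log n)$ under the lemma's hypotheses; the statement places no restriction on the spread of the weights. (In the paper's sole application the weights happen to be positive integers bounded by $n$, which would give $O(\log n)$ classes, but you neither assume nor argue this.) Second, and more damaging, even granting $k=O(\log n)$ classes, merging $k$ sorted streams through a comparison-based priority queue costs $\Theta(\log k)=\Theta(\log\log n)$ per extracted element, so the total over all $n$ SAMPLE calls is $\Theta(n\log\log n)$, not $O(n)$. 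Your sentence ``giving total cost $O(n)$ over all operations in expectation'' is therefore unsupported as written. Closing this gap would require either an integer-sorting or radix-style trick that buckets the per-class minima jointly, or an argument that the $k$ stream heads can be compared in amortized $O(1)$ time; neither ingredient is present in the sketch, and the subsequent Chernoff discussion only addresses within-class bucket loads, not the cross-class merge.
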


It remains to show how to compute the new symmetric difference $D'$ with the new clustering $C'$. 
Here we use the fact that the new clustering $\calC'$ is constructed in-place from $\calC$, therefore, we get a log of modifications from $\calC$ to $\calC'$ for free. 
Let $L$ denote the set of vertices that moved to different clusters from $\calC$ to $\calC'$. 
We will compute $D' = E \triangle \mathcal{E(C')} = \mathcal{E(C)} \triangle \mathcal{E(C')} \triangle D$. 
Since our objective is to find a clustering of small symmetric difference, we can stop and report the original clustering if $\calC'$ is worse than $\calC$. 
The pseudocode is given in \cref{alg:symmetric-difference}. 

\begin{algorithm}
    \caption{SymmetricDifference($\calC, D, \calC', L$)}
	\label{alg:symmetric-difference}
	\begin{algorithmic}[1]
        \State Compute $D' \gets D \cap (\mathcal{E(C)} \triangle \mathcal{E(C')} \triangle D)$ by a scan of $D$
        \For{each $v \in L$}
            \State Let $C(v)$ and $C'(v)$ be the cluster of $v$ in $\calC$ and $\calC'$ respectively. 
            \State Compute $X = C(v) \triangle C'(v)$ and $Y = C(v) \cap C'(v)$ in $O(|C(v)| + |C'(v)|)$ time \label{lst:enumeration-symmetric-difference}
            \For{each $x \in X$ and $y \in Y$}
            \State Insert $(x, y)$ to $D'$ if $(x, y) \notin D$
            \EndFor
            \State $L \gets L \setminus Y$
        \EndFor
        \State \Return $(\calC', D')$
	\end{algorithmic}
\end{algorithm}

\begin{lemma}
    \cref{alg:symmetric-difference} outputs $D' = E \triangle \mathcal{E(C')}$ in $O(|D| + |D'|)$ time. 
\end{lemma}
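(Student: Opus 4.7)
I would prove correctness first, then the running time.

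Correctness rests on the identity $E \triangle \mathcal{E(C')} = D \triangle (\mathcal{E(C)} \triangle \mathcal{E(C')})$, which follows from $E = \mathcal{E(C)} \triangle D$. This splits the task in two: for $(x,y) \in D$, we include $(x,y)$ in $D'$ iff $(x,y) \notin \mathcal{E(C)} \triangle \mathcal{E(C')}$, decidable in $O(1)$ by comparing cluster labels in $\calC$ and $\calC'$; this is exactly what the initial scan of $D$ does. For $(x,y) \notin D$, we include $(x,y)$ iff $(x,y) \in \mathcal{E(C)} \triangle \mathcal{E(C')}$, which is the job of the loop over $L$.

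For the loop, soundness is immediate: every enumerated pair $(x,y)$ satisfies $x \in X = C(v) \triangle C'(v)$ and $y \in Y = C(v) \cap C'(v)$, which forces $x$ and $y$ to lie in the same cluster of exactly one of $\calC, \calC'$, so $(x,y) \in \mathcal{E(C)} \triangle \mathcal{E(C')}$. For completeness, fix $(x,y) \in (\mathcal{E(C)} \triangle \mathcal{E(C')}) \setminus D$ and assume without loss of generality that $C(x) = C(y) = A$ and $C'(x) \neq C'(y)$, so at least one of $x,y$ moved; say $y$ moved, and so $y \in L$. The update $L \gets L \setminus Y$ ensures exactly one representative is processed per ordered cluster pair $(A,B)$ with $A \neq B$ that contains at least one vertex from $L$. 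The representative $v$ corresponding to $(A, C'(y))$ exists (take the first vertex of $A \cap C'(y)$ picked from $L$) and gives $y \in Y$ together with $x \in A \setminus C'(y) \subseteq X$ (since $C'(x) \neq C'(y)$ forces $x \notin C'(y)$), so $(x,y)$ is enumerated.

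For the running time, the scan of $D$ costs $O(|D|)$. Each loop iteration computes $X$ and $Y$ in $O(|X|+|Y|)$ time and performs $|X|\cdot|Y|$ enumerations; since $v \in L$ has $C(v) \neq C'(v)$ we get $|X|,|Y| \geq 1$, so the enumeration dominates. To bound the total enumeration, I claim each pair $(x,y) \in \mathcal{E(C)} \triangle \mathcal{E(C')}$ is enumerated at most twice: it can only appear when processing the representative of $(C(y), C'(y))$ (with $x \in X$, $y \in Y$) or the representative of $(C(x), C'(x))$ (with roles swapped), and when both are actually processed they are distinct because $(x,y) \in \mathcal{E(C)} \triangle \mathcal{E(C')}$ forces either $C(x) \neq C(y)$ or $C'(x) \neq C'(y)$. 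Hence the total work is $O(|\mathcal{E(C)} \triangle \mathcal{E(C')}|) = O(|D \triangle D'|) \leq O(|D|+|D'|)$.

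The main subtlety to pin down is why the in-loop update $L \gets L \setminus Y$ is precisely what makes the running time tight: without it, each of the $|A\cap B|$ vertices in $L$ with the same old/new cluster pair would redo the same $|X|\cdot|Y|$ work, inflating the total by a factor of $|A\cap B|$; with it, exactly one representative is processed per cluster pair, which is what enables the ``enumerated at most twice'' bound. A minor bookkeeping point is the implicit assumption of $O(1)$ membership testing in $D$ and $O(1)$ cluster-label lookups, both inherited from the data structures fixed earlier in the section.
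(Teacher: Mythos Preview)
Your proof is correct and follows essentially the same approach as the paper. The paper dismisses correctness as ``straightforward'' and focuses on the running time, using exactly your argument: since $v \in C(v)\cap C'(v)$ and $C(v)\neq C'(v)$, both $|X|\ge 1$ and $|Y|\ge 1$, so the $|X|\cdot|Y|$ enumerated pairs dominate the $O(|X|+|Y|)$ setup cost, and each pair of $\mathcal{E(C)}\triangle\mathcal{E(C')}$ is visited at most twice. Your added detail on soundness/completeness and on the role of the $L\gets L\setminus Y$ update is not in the paper but is a welcome clarification, and your flagged bookkeeping point about $O(1)$ membership tests in $D$ is likewise something the paper leaves implicit.
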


\begin{proof}
    The correctness of the algorithm is straightforward, and we only focus on its running time. 
    In line~\ref{lst:enumeration-symmetric-difference}, we know $v \in C(v) \cap C'(v)$, and $C(v) \neq C'(v)$ as $v$ is moved between clusters. 
    So we have $|X| > 0$ and $|Y| > 0$ and $|X| + |Y| = |C(v)| + |C'(v)|$. 
    Therefore, we will detect $|X| \cdot |Y| = \Omega(|C(v)| + |C'(v)|)$ pairs $(x, y) \in \mathcal{E(C)} \triangle \mathcal{E(C')}$ in the next for-loop. 
    Since every pair in $\mathcal{E(C)} \triangle \mathcal{E(C')}$ will be checked at most twice, the total running time is $O(|\mathcal{E(C)} \triangle \mathcal{E(C')}| + |D|) = O(|D' \triangle D| + |D|) = O(|D| + |D'|)$. 
\end{proof}

We can directly translate the algorithm into another with running time $O(|D|)$ that output the best of the $(\calC, D)$ and $(\calC', D)$. 
\begin{corollary}
    Given a graph $(\calC, D)$, a clustering $\calC'$ and the list $L$ of moved vertices, we can output the best of the two clustering $(\calC, D)$ and $(\calC', D')$ in $O(|D|)$ time. 
\end{corollary}
\begin{proof}
    We can simulate \cref{alg:symmetric-difference} for at most $c|D|$ steps, where $c$ is a sufficiently large constant determined by \cref{alg:symmetric-difference}. 
    If it terminates, we return the best of $(\calC, D)$ and $(\calC', D')$; 
    if it does not stop, we are sure that $|D'| > |D|$, and we can return $(\calC, D)$. 
\end{proof}

As a remark, by \cref{thm:dynamic-random-intro} (a), if at each rebuild we repeat \cref{alg:pivot} for $O(\log\log n)$ time and take the best clustering, we can turn Pivot$(\calC, D)$ into a $(3+\eps)$-approximate dynamic correlation clustering algorithm with $O(\log\log n)$ update time and error probability of $O(1/\poly(\log n))$. 

\section{Preclustering}
\label{sec:preclustering}

A lot of the recent progress in correlation clustering \cite{CLLN23+,cao2024understanding,CLMTYZ24,cao2025fastLP} has made critical use of a technique known as preclustering. 
The goal of this section is to make the existing preclustering results applicable in the format of cluster representation. 
Though the techniques used in this section are essentially the same as previous works, we include all details here to be self-contained. 

Preclustering was first introduced in \cite{CLLN23+}, where they showed that the algorithm introduced in \cite{CLMNP21} has some nice properties that can be used to preprocess the input (see also~\cite{DBLP:conf/innovations/Assadi022}). At a very high-level view, the purpose of the preclustering is to make all the ``easy'' decisions, while explicitly deferring ``hard'' decisions to later. With ``easy'' decisions, we consider decisions where the optimal solution has to pay very little to none, while the ``hard'' decisions are the decisions where we can guarantee that the optimal solution will have to pay for them somewhere in the graph. What this ends up doing is classifying pairs of vertices into three categories. The easy decisions corresponds to either non-admissible pairs, which are pairs that should never be in the same cluster, and atom pairs, which should always be in the same cluster. Finally, the hard decisions are categorized as admissible pairs, that is pairs of vertices that could be in the same cluster but do not have to be. The atom pairs construct connected components, that are typically referred to as atoms from the fact that they shouldn't be divided.

Like previous works, we want to use preclustering as the first step of our static algorithm. However, since we only have $O(|D|)$ time, we are not able to look at the whole graph. Instead, we only consider the clusters incident to $D$, since clusters not incident to $D$ must be perfect cliques. For each cluster, we will detect whether it is good enough to be an atom in the preclustering, by using a cleaning procedure similar to \cite{cao2024understanding}. Our new analysis allows us to clean arbitrary clusterings while only increasing the cost by a constant factor.

In the local search algorithm \cite{CLMTYZ24}, they consider a local change to a clustering $\mathcal{C}$ by adding a cluster $K$. 
To be precise, let $\mathcal{C} = \{C_1, \dots, C_k\}$, then define $\mathcal{C} + K = \{C_1 \setminus K, \dots, C_k \setminus K, K\}$. 

\begin{definition}[Degree-similarity]
    \label{def:degree-similar}
    Two vertices $u, v$ are called $\eps$-degree-similar if $\eps d(u) < d(v) < d(u)/\eps$. 
\end{definition}

\begin{definition}[Agreeing]
    A cluster $C \subseteq V$ is called $\beta$-agreeing if for every $v \in C$, $|N(v) \triangle C| < \beta|C|$. 
\end{definition}
\begin{lemma}
    \label{lem:agreeing-similar-size}
    For any $\beta$-agreeing cluster $C$ and $v \in C$, $(1 - \beta)|C| < |N(v)| < (1 + \beta)|C|$. 
\end{lemma}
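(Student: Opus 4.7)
The plan is to prove the lemma by a direct set-counting argument using the relationship between the symmetric difference and the difference of cardinalities. The key observation is that for any two finite sets $A$ and $B$, we have $\bigl||A| - |B|\bigr| \leq |A \triangle B|$, since $|A \triangle B| = |A \setminus B| + |B \setminus A|$ while $|A| - |B| = |A \setminus B| - |B \setminus A|$.

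First, I would apply this bound with $A = N(v)$ and $B = C$. Since $C$ is $\beta$-agreeing and $v \in C$, the definition gives $|N(v) \triangle C| < \beta |C|$. Combining with the general inequality above yields $\bigl||N(v)| - |C|\bigr| < \beta |C|$, which immediately rearranges to $(1-\beta)|C| < |N(v)| < (1+\beta)|C|$.

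The only subtlety worth noting is that $N(v)$ here denotes the closed neighborhood (the paper explicitly defines $N(v)$ as the set of neighbors of $v$ together with $v$ itself), but this convention plays no role in the argument beyond making it consistent with the definition of $\beta$-agreeing. There is no real obstacle in this proof; it is a one-line consequence of the triangle-inequality-style bound on cardinalities via the symmetric difference.
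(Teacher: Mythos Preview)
Your proof is correct and essentially identical to the paper's: the paper writes the two inequalities $|N(v)| \le |C| + |N(v)\triangle C| < (1+\beta)|C|$ and $|N(v)| \ge |C| - |N(v)\triangle C| > (1-\beta)|C|$ directly, which is just the unpacked form of your bound $\bigl||N(v)| - |C|\bigr| \le |N(v)\triangle C|$.
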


\begin{proof}
    $|N(v)| \le |C| + |N(v) \triangle C| < (1 + \beta)|C|$, $|N(v)| \ge |C| - |N(v) \triangle C| > (1 - \beta)|C|$. 
\end{proof}

\begin{definition}[Subsume]
    A cluster $K$ is subsumed in a clustering $\mathcal{C}$ if there exists $C \in \mathcal{C}$ such that $K \subseteq C$. 
\end{definition}

\begin{definition}[Strong]
    A cluster $C \subseteq V$ is called strong if it is $\frac{1}{6}$-agreeing. 
    A clustering $\mathcal{C}$ is called strong if all the clusters in $\mathcal{C}$ are either singletons or strong. 
\end{definition}

Next we show one of the central properties of strong clusterings.
\begin{lemma}
    \label{lem:strong-subsume}
    Every strong cluster is subsumed in all locally optimal clusterings, where 
    a clustering $\mathcal{C}$ is called locally optimal if $\cost(\mathcal{C}) \le \cost(\mathcal{C} + K)$ for any cluster $K$. 
\end{lemma}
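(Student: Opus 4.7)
I would argue by contradiction: assume $\mathcal{C} = \{C_1,\ldots,C_k\}$ is locally optimal yet $C$ is strong but not subsumed, and exhibit a witnessing improving move --- the natural candidate being $K := C$ itself. Write $S_i := C\cap C_i$, $s_i := |S_i|$, $A_i := C_i\setminus C$, and $n := |C|$; non-subsumedness says $s_i < n$ for every $i$. Since the only pairs whose same-cluster status changes under $\mathcal{C}\to\mathcal{C}+C$ are those with at least one endpoint in $C$, a pair-by-pair bookkeeping gives
\[
\cost(\mathcal{C}) - \cost(\mathcal{C}+C) = (E_{\mathrm{in}} - \bar E_{\mathrm{in}}) + (\bar E_{\mathrm{cr}} - E_{\mathrm{cr}}),
\]
where $E_{\mathrm{in}},\bar E_{\mathrm{in}}$ count edges resp.\ non-edges inside $C$ that go between different $S_i$'s, and $E_{\mathrm{cr}},\bar E_{\mathrm{cr}}$ count edges resp.\ non-edges between $S_i$ and $A_i$.

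Second, I would invoke the $\tfrac16$-agreement property in its per-vertex form: for each $v\in C$, the number of non-neighbours of $v$ in $C$ plus the number of neighbours of $v$ outside $C$ is at most $|N(v)\triangle C| < n/6$. Summing this single inequality over $v\in C$ collapses to the aggregate bound $2\bar E_{\mathrm{in}} + E_{\mathrm{cr}} < n^2/6$; combining this with the trivial $E_{\mathrm{cr}}\le \sum_i s_i|A_i|$ and the identity $E_{\mathrm{in}}+\bar E_{\mathrm{in}} = \tfrac12(n^2-\sum_i s_i^2)$ yields
\[
\cost(\mathcal{C}) - \cost(\mathcal{C}+C) > \tfrac12\bigl(n^2-\textstyle\sum_i s_i^2\bigr) - n^2/6,
\]
which is strictly positive whenever $\max_i s_i < 2n/3$ (using $\sum s_i^2 \le (\max_i s_i)\cdot n$). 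This settles the ``spread-out'' regime.

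The main obstacle is the remaining ``concentrated'' regime, in which some $s_* \ge 2n/3$ and most of $C$ sits in a single cluster $C_*$; there the aggregate bound breaks because $n^2-\sum s_i^2$ can be only $O(n)$. I would recharge the argument using the minority side $C\setminus S_*$: each $v\in C\setminus S_*$ has at least $s_*-n/6$ edges into $S_*$, so the savings from in-$C$ pairs between $C\setminus S_*$ and $S_*$ alone are at least $(n-s_*)(s_*-n/3)$. The only worrying loss term is at the $(S_*,A_*)$ interface, and I would control that by invoking local optimality with respect to the \emph{auxiliary} move $K=S_*$: since that move changes only $(S_*,A_*)$ pairs, $\cost(\mathcal{C}+S_*)\ge\cost(\mathcal{C})$ forces $e(S_*,A_*)\ge s_*|A_*|/2$, and combining with the strong bound $e(S_*,A_*)\le s_*\cdot n/6$ yields $|A_*|\le n/3$. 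Plugging these back into the per-vertex accounting of $(S_i,A_i)$-losses against $(S_*,C\setminus S_*)$-gains closes the argument. The hardest part is exactly this concentrated case: no single aggregate per-vertex bound is tight in both regimes, so the proof must pair the $\tfrac16$-agreement inequality with structural consequences of local optimality on \emph{neighbouring} moves in order to conclude in both.
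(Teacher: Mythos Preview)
Your spread-out case ($\max_i s_i < 2n/3$) is correct and is essentially the paper's first case with the same threshold $\alpha=2/3$. Your derivation of $|A_*|\le n/3$ from local optimality at $K=S_*$ is also correct and corresponds exactly to the paper's first sub-case of the concentrated regime.

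The gap is in the remaining sub-case ($s_*\ge 2n/3$ and $|A_*|\le n/3$), where you assert that ``plugging these back \ldots\ closes the argument'' for the move $K=C$. It does not. The loss at the $(S_*,A_*)$ interface is $e(S_*,A_*)-\bar e(S_*,A_*)$, and the only constraints you have collected are $e(S_*,A_*)<s_*\,n/6$ (agreement) and $e(S_*,A_*)\ge s_*|A_*|/2$ (local optimality at $S_*$). Together these force $|A_*|<n/3$ but still permit $e-\bar e$ to be of order $s_*\,n/6$, which is $\Theta(n^2)$ and dominates the $(n-s_*)\cdot\Theta(n)$ gain coming from the $(S_*,C\setminus S_*)$ interface as $s_*\to n$. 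For a sanity check, take $C$ a clique, $s_*=n-1$, $|A_*|\approx n/6$, and $S_*$ fully joined to $A_*$; then $\cost(\mathcal C)-\cost(\mathcal C+C)\approx (n-1)-(n-1)\cdot n/6\ll 0$. This particular $\mathcal C$ is not locally optimal, but nothing in your bookkeeping distinguishes it from one that is.

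The paper resolves this sub-case by switching the witnessing move: when $|C_*\setminus C|\le n/3$ it shows that $K=C_*\cup C$ (absorb all of $C$ into $C_*$, keeping $A_*$ attached) is strictly improving, because each $v\in C\setminus C_*$ gains $\Omega(n)$ edges to $C_*\cap C$ while incurring at most $|A_*|+|N(v)\setminus C|<n/3+n/6$ new violations. Your proof is repaired by using this move in the concentrated sub-case rather than insisting on $K=C$.
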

\begin{proof}
    Let $K$ be a $\beta$-agreeing cluster. 
    We will prove that every clustering that do not subsume $K$ is not locally optimal for some $\beta$. 
    Let $\mathcal{C} = \{C_i\}$ be any clustering, $C(v)$ be the cluster in $\mathcal{C}$ containing $v$ for any vertex $v$. 
    We consider two cases. 

    Let $\alpha$ be some constant to be determined later. If $|C_i \cap K| \le \alpha|K|$ for all $i$, then we will show $\cost(\mathcal{C} + K) < \cost(\mathcal{C})$. For every $v \in K$, it has to pay $|(K \setminus C(v)) \cap N(v)|$ for its adjacent neighbors in $K \setminus C(v)$, which leads to a total cost of
    \begin{align*}
        \text{cost of $\mathcal{C}$ for vertices in $K$} & \ge \frac{1}{2}\sum_{v \in K}|(K \setminus C(v)) \cap N(v)| \\
        & \ge \frac{1}{2}\sum_{v \in K}(|K| - |K \cap C(v)| - |K \setminus N(v)|) \\
        & \ge \frac{1}{2}(1 - \alpha)|K|^2 - \frac{1}{2}\sum_{v \in K}|K \setminus N(v)|.
    \end{align*}
    In the clustering $\mathcal{C} + K$, the total cost for vertices in $K$ is 
    \[
        \text{cost of $\mathcal{C} + K$ for vertices in $K$} = \sum_{v \in K}\left(|N(v) \setminus K| + \frac{1}{2}|K \setminus N(v)|\right).
    \]
    Therefore we have
    \begin{align*}
        & \cost(\mathcal{C}) - \cost(\mathcal{C} + K) \\
        = \quad & \text{(cost of $\mathcal{C}$ for vertices in $K$) - \text{(cost of $\mathcal{C} + K$ for vertices in $K$})} \\
        \ge \quad & \frac{1}{2}(1 - \alpha)|K|^2 - \frac{1}{2}\sum_{v \in K}|K \setminus N(v)| - \sum_{v \in K}\left(|N(v) \setminus K| + \frac{1}{2}|K \setminus N(v)|\right) \\
        \ge \quad & \frac{1}{2}(1 - \alpha)|K|^2 - \sum_{v \in K}|N(v) \triangle K| \\
        > \quad & \frac{1}{2}(1 - \alpha)|K|^2 - \beta|K|^2.
    \end{align*}
    So $\cost(\mathcal{C}) > \cost(\mathcal{C} + K)$ if $\frac{1}{2}(1 - \alpha) \ge \beta$. 

    If there exists $C_i \in \mathcal{C}$ such that $|C_i \cap K| > \alpha|K|$. 
    If $|C_i \setminus K| > 2\beta|K|$, for every vertex $v \in C_i \cap C$, it has 
    \[
        |C_i \setminus K \setminus N(v)| \ge |C_i \setminus K| - |K \setminus N(v)| \ge |C_i \setminus K| - |K \triangle N(v)| > |C_i \setminus K| - \beta|K| > \frac{1}{2}|C_i \setminus K|
    \]
    non-neighbors in $C_i \setminus K$, therefore $\cost(\mathcal{C} + (C_i \cap K)) < \cost(\mathcal{C})$ and $\mathcal{C}$ is not locally optimal. 
    So we only need to consider the case when $|C_i \setminus K| \le 2\beta|K|$. 
    In this case, we will consider $\mathcal{C} + (C_i \cup K)$. For every vertex $v$ in $K \setminus C_i$, $\mathcal{C}$ has to pay $|N(v) \cap C_i \cap K|$ for its positive neighbors in $C_i \cap K$, so in the total cost for vertices in $K \setminus C_i$ is
    \begin{align*}
        \text{cost of $\mathcal{C}$ for vertices in $K \setminus C_i$} & \ge \sum_{v \in K \setminus C_i}|N(v) \cap C_i \cap K| \\
        & \ge \sum_{v \in K \setminus C_i}(|C_i \cap K| - |K \setminus N(v)|) \\
        & \ge \sum_{v \in K \setminus C_i}(|C_i \cap K| - |K \triangle N(v)|) \\
        & > (\alpha - \beta)|K| \cdot |K \setminus C_i|.
    \end{align*}
    And in $\mathcal{C} + (C_i \cup K)$, 
    \begin{align*}
        \text{cost of $\mathcal{C} + (C_i \cup K)$ for vertices in $K \setminus C_i$} & \le \sum_{v \in K \setminus C_i}|N(v) \triangle (C_i \cup K)| \\
        & \le \sum_{v \in K \setminus C_i}(|N(v) \triangle K| + |C_i \setminus K|) \\
        & < 3\beta|K| \cdot |K \setminus C_i|.
    \end{align*}
    Then $\cost(\mathcal{C}) > \cost(\mathcal{C} + (C \cup C_i))$ if $\alpha - \beta \ge 3\beta$. 

    By choosing $\alpha = 2/3$ and $\beta = 1/6$, this lemma holds. 
\end{proof}

The next corollary follows directly from this lemma. 
\begin{corollary}
    \label{coro:strong-subsume}
    Every strong cluster is subsumed in $\opt$. 
\end{corollary}

Next, we will introduce the preclustering formally. 
\begin{definition}[Preclustering]
    A preclustering consists of a pair $(\mathcal{K}, E^{\adm})$, where $\mathcal{K}$ is a strong clustering and $E^{\adm} \subseteq \binom{V}{2}$ is the set of admissible pairs of vertices. 
\end{definition}
By admissible pairs of vertices, this refers to vertices that are allowed to be in the same cluster but are not required to. 
From this, we get that if a pair of vertices are not in the same cluster in $\mathcal{K}$ and not a pair in $E^{\adm}$ then they should never be clustered together.

Every non-singleton cluster $K \in \mathcal{K}$ is called an \emph{atom}, which must be subsumed in $\opt$ according to \cref{lem:strong-subsume}. 
All pairs that are contained in the same atom are called \emph{atomic}, and the others are non-admissible if they are also not admissible. 
A cluster $\mathcal{C}$ is \emph{accepted} by the preclustering $(\mathcal{K}, E^{\adm})$ if it does not separate any atom pairs nor join any non-admissible pairs. 
To achieve a good approximation, we require that there exists a clustering $\mathcal{C}$ accepted by $(\mathcal{K}, E^{\adm})$, such that $\cost(\mathcal{C}) \le \cost(\opt) + O(\eps\cdot\cost(\mathcal{K}))$. We are going to show that one such exists. 

For every $D \subseteq \binom{V}{2}$, let $d_D(v)$ be the number of neighbors of $v$ if the edge set is $D$, that is, $d_D(v) = |\{u \in V \mid (u, v) \in D\}|$. Similarly, for any $K \subseteq V$, let $d_D(K) = \sum_{v \in K}d_D(v)$. 
As a special case, in a preclustering $(\mathcal{K}, E^{\adm})$, let $\dadm(v) = d_{E^{\adm}}(v)$ for each vertex $v$. 
A vertex $u$ is called an admissible neighbor of $v$ if $(u, v) \in E^{\adm}$. 
To make a fast algorithm, we need an upper bound on the number of admissible pairs. 

We also need some more properties for a fast algorithm, which will be discussed in \cref{sec:preclustering} in detail. 

\begin{theorem}[Cleaning]
    For any $(\mathcal{C},D)$, we can update it to $(\mathcal{C'},D')$ in $O(|D|)$ time, such that $|D'| = O(|D|)$ and each (non-singleton) cluster in $\mathcal{C'}$ is an atom.
\end{theorem}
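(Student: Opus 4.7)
My plan is to clean each cluster of $\mathcal{C}$ independently. By the observations at the start of \cref{sec:pivot}, every inactive cluster (one with no vertex incident to $D$) is already a perfect clique with no external edges, hence $0$-agreeing and trivially an atom; no work is required there. All remaining work happens on active clusters. For every active cluster $C$, I would compute $|C|$, the set $A(C)$ of active vertices, and $d_D(v)$ for each $v \in A(C)$ by a single scan of the $D$-edges touching $C$; this costs $O(d_D(C)+1)$ per cluster and $O(|D|)$ in total since there are at most $2|D|$ active vertices and at most $|D|$ active clusters.

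For each active cluster $C$ I define $B(C) := \{v \in A(C) : d_D(v) \ge |C|/12\}$ (inactive vertices are automatically outside $B$ since $d_D\equiv 0$ there), and split into two cases:
\begin{itemize}
\item If $|B(C)| \ge |C|/14$, dissolve $C$ fully, turning every vertex of $C$ into a singleton.
\item If $|B(C)| < |C|/14$, make only the vertices of $B(C)$ singletons and keep $C' := C\setminus B(C)$ as a cluster.
\end{itemize}
In the second case, $C'$ is $\tfrac{1}{6}$-agreeing (and hence an atom): removing $B(C)$ shifts each surviving vertex's $D$-degree by at most $+|B(C)|$, so the new $D$-degree is strictly below $|C|/12 + |C|/14 = 13|C|/84$, while $|C'| > 13|C|/14$ gives $|C'|/6 > 13|C|/84$. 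The formerly-core vertices in $C'$ become active with new $D$-degree exactly $|B(C)| < |C|/14 < |C'|/6$, so they are fine too. Singletons are trivially atoms, so the output clustering satisfies the stated property.

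The cost analysis hinges on the single pigeonhole inequality
\[
\sum_C |B(C)| \cdot |C| \;\le\; 12\sum_C\sum_{v \in B(C)} d_D(v) \;\le\; 12\sum_v d_D(v) \;=\; 24\,|D|,
\]
which follows from $d_D(v)\ge |C|/12$ for $v\in B(C)$. In the partial-dissolve case, the new violations created by putting $B(C)$ into singletons is at most the number of $E$-edges with at least one endpoint in $B(C)$ inside $C$, which is bounded by $|B(C)|\cdot|C|$; summed, $O(|D|)$. In the full-dissolve case, $|B(C)|\ge |C|/14$ gives $\binom{|C|}{2}\le 7|B(C)|\cdot|C|$, which again sums to $O(|D|)$. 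Together $|D'|=O(|D|)$.

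For the running time, the decision phase is $O(d_D(C))$ per cluster and $O(|D|)$ overall, and the remaining work is just materializing $D'$, whose size-increase is $O(|D|)$ by the argument above. The one subtle point I expect to be the main obstacle is enumerating $E$-edges \emph{inside} a fully dissolved $C$: our representation stores $D$ but not $E$, so these must be found implicitly as ``pairs in $C$ that do not lie in $D$''. The same charging controls the total pair-enumeration work, since in the full-dissolve case $|C|\le 14|B(C)|$ gives $\sum_C |C|^2 \le 196\sum_C|B(C)|^2 \le 196\sum_C|B(C)|\cdot|C| = O(|D|)$, so the overall running time remains $O(|D|)$. The delicate part is really tuning the two thresholds $|C|/12$ and $|C|/14$ so that both the $\tfrac{1}{6}$-agreeing certificate for $C'$ and the two charging arguments all bottleneck on the same inequality $\sum_C|B(C)|\cdot|C| = O(|D|)$; the constants above are one concrete choice that makes this line up.
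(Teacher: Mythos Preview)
Your proposal is correct and follows essentially the same approach as the paper: mark vertices whose $D$-degree exceeds a constant fraction of their cluster size, fully dissolve clusters where a constant fraction of vertices are marked, and otherwise peel off only the marked vertices; the paper uses thresholds $a=b=0.05$ where you use $1/12$ and $1/14$, but the agreeing certificate and the charging $\sum_C |B(C)|\cdot|C|\le O(|D|)$ are the same. Your explicit treatment of the pair-enumeration cost in fully dissolved clusters via $|C|\le 14|B(C)|$ is a nice addition that the paper leaves implicit.
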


In this section, we will discuss how to construct a preclustering from the clustering $\mathcal{C}$ in $O(|D|)$ time, captured in the next theorem. We remark that, in the remaining sections, we will compare our clustering to the optimal accepted clustering instead of the real optimal clustering, as we have shown that their costs are close enough.

\begin{theorem}
    \label{thm:preclustering}
    For any $\eps \in (0, 0.1)$, given a graph represented by $(\mathcal{C}, D)$, in $O(|D|)$ time, we can construct a preclustering $(\mathcal{K}, E^{\adm})$, such that
    \begin{enumerate}
        \item For any $(u, v) \in E^{\adm}$, either $u$ or $v$ is a singleton in $\mathcal{K}$; 
        \item For any vertex $v$, $\dadm(v) = O(1/\eps^2) d(v)$; 
        \item For any atom $K$, $\dadm(K) \le \frac{3d_{D}(K)}{|K|}$; 
        \item $|E^{\adm}| = O(1/\eps^2)\cost(\mathcal{K})$; 
        \item For any $(u, v) \in E^{\adm}$, $u, v$ are $\eps$-degree-similar;
        \item There exists an accepted clustering $\mathcal{C}$, such that
        \setcounter{enumcount}{\value{enumi}} %
        \begin{itemize}
            \item $\cost(\mathcal{C}) \le \cost(\opt) + 4\eps \cdot \cost(\mathcal{K})$; 
            \item For each $C \in \mathcal{C}$ and $v \in C$, either $|C| = 1$ or $|C| > \eps d(v)$. 
        \end{itemize}
    \end{enumerate}
    We provide the following ways to access $E^{\adm}$:
    \begin{enumerate}
        \setcounter{enumi}{\value{enumcount}}%
        \item List out all admissible neighbors of the atom $K$ in time proportional to its number of admissible neighbors. 
        \item Check if two vertices are admissible in $O(t)$ time with probability $1 - \exp(-\Omega(t))$; 
        \item List out all admissible neighbors of a singleton vertex $v$ in $O(t^2\cdot d(v))$ time with probability $1 - \exp(-\Omega(t))$ for $t = \Omega(\log d(v))$; 
    \end{enumerate}
\end{theorem}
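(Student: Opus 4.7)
The plan is to proceed in two stages. In the first stage I invoke the Cleaning theorem on the input $(\mathcal C, D)$ to obtain, in $O(|D|)$ time, a pair $(\mathcal K, D_{\mathcal K})$ with $|D_{\mathcal K}| = O(|D|)$ in which every non-singleton cluster is a $\frac{1}{6}$-agreeing atom (hence subsumed in $\opt$ by \cref{coro:strong-subsume}). Inactive atoms are isolated perfect cliques, so they cannot participate in admissibility decisions, and only the active portion of the graph, of total size $O(|D|)$, is relevant going forward.

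In the second stage I define $E^{\adm}$ by requiring for each pair $(u,v)$: (i) at least one endpoint is a singleton in $\mathcal K$; (ii) $u$ and $v$ are $\eps$-degree-similar; and (iii) a neighborhood-similarity test passes, for example $|N(u) \cap N(v)| \ge (1-O(\eps)) \min\{d(u),d(v)\}$, with constants tuned so that Property~6 holds. For a pair $(v,w)$ with $w$ in an atom $K$, I exploit that all members of $K$ share essentially the same neighborhood $N(K)$, so the test depends only on $(v, K)$; I therefore index admissibility information by atoms rather than by atom-vertices. To stay within budget I do not materialize $E^{\adm}$ explicitly but precompute, for each singleton incident to $D_{\mathcal K}$ and each active atom, a neighborhood sketch of size $O(\eps^{-2})$ (for example, MinHash) together with a degree-bucketed candidate index. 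Both are built in $O(|D|)$ time by scanning $D_{\mathcal K}$ and the active part of $\mathcal K$.

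Properties~1 and 5 hold by construction. Property~2 follows from a pigeonhole argument: each admissible partner of $v$ must share $\Omega(\eps \cdot d(v))$ neighbors with $v$, so each edge of $N(v)$ is charged to at most $O(\eps^{-2})$ partners. Property~3 is a counting argument: any admissible neighbor of the atom $K$ must agree with $N(K)$ up to violations incident to $K$, and summing over $K$'s vertices yields $\dadm(K) = O(d_D(K)/|K|)$. Property~4 combines Property~2 with the observation that every edge incident to a singleton in $\mathcal K$ contributes to $\cost(\mathcal K)$, giving $|E^{\adm}| = O(\eps^{-2}\cost(\mathcal K))$ after summing $\dadm(v)$ over singletons. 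Property~6 is obtained by starting from $\opt$, merging each optimal cluster into the atoms it heavily overlaps and splitting off each vertex sitting in a cluster of size below $\eps d(v)$ as a singleton; the extra cost is bounded by $4\eps\cdot \cost(\mathcal K)$ via the degree-similarity condition and a standard charging argument. Properties~7--9 follow from the precomputed sketches and indices: admissibility of a specific pair becomes an $O(t)$-sample sketch comparison with error $\exp(-\Omega(t))$; listing admissible neighbors of a singleton $v$ iterates over $O(d(v))$ degree-similar candidates produced by the index and runs the test with parameter $t$; listing admissible neighbors of an atom $K$ is answered directly from the precomputed list.

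The main obstacle I anticipate is keeping the setup strictly within the $O(|D|)$ time budget while still enabling the randomized admissibility tests with the claimed exponentially-decaying error probabilities. This requires arguing that every vertex or atom that can appear in any admissible pair lies within $O(1)$ hops of $D_{\mathcal K}$, so that the sketches and bucketed indices can be built by purely local exploration. The bound $|E^{\adm}| = O(\eps^{-2}\cost(\mathcal K))$ then prevents the candidate lists from blowing up, and the strong concentration of MinHash-style sketches delivers the stated query guarantees.
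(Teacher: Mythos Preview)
Your high-level plan (clean to obtain a strong clustering, then define admissibility via degree-similarity plus a common-neighbor condition, then build lightweight structures to query it) is essentially the paper's route. But several of your concrete choices do not work, and they are exactly the places where the paper does something specific.

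\textbf{The admissibility threshold.} You write the test as $|N(u)\cap N(v)|\ge (1-O(\eps))\min\{d(u),d(v)\}$ but then argue Property~2 assuming only $\Omega(\eps\,d(v))$ common neighbors. The two are incompatible, and only the latter is viable: with a $(1-O(\eps))$-threshold Property~6 fails, because after modifying $\opt$ (by peeling off vertices with $|N(u)\cap C(u)|\le \tfrac12|C(u)|+\eps d(u)$) one can guarantee only $|N(u)\cap N(v)|>\eps(d(u)+d(v))$ for surviving co-clustered pairs, not a $(1-O(\eps))$-fraction. The paper uses the $\eps(d(u)+d(v))$ threshold for singleton-singleton pairs and a $|N(u)\cap K|\ge \tfrac13|K|$ condition for singleton-atom pairs; your Property~2 triple-counting argument is correct for that definition, but you need to commit to it.

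\textbf{Property~8.} A precomputed sketch of fixed size $O(\eps^{-2})$ cannot deliver error $\exp(-\Omega(t))$ for arbitrary $t$ chosen at query time; it gives a fixed constant error. The paper instead samples $\Theta(t/\eps^2)$ vertices from $N(u)$ \emph{at query time} and checks membership in $N(v)$, so the error scales with $t$. This costs $O(t)$ time per query and needs no precomputation beyond random access to neighborhoods.

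\textbf{Property~9.} Iterating over ``$O(d(v))$ degree-similar candidates produced by the index'' cannot list all admissible neighbors: by Property~2 there may be $\Theta(d(v)/\eps^2)$ of them, and there is no global degree-bucketed index you can build in $O(|D|)$ time that enumerates them directly. The paper's mechanism is a two-hop exploration: sample $\Theta(t/\eps)$ degree-similar neighbors of $v$, then test \emph{their} neighbors for admissibility to $v$. Every admissible $u$ shares $\ge \eps(d(u)+d(v))$ such witnesses with $v$, so it is reached with probability $1-\exp(-\Omega(t))$. This is where the $O(t^2 d(v))$ time comes from.
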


We are going to show this theorem over course of this section. Our algorithm is based on the cleaning algorithm (Algorithm 4 in \cite{cao2024understanding}). 

\subsection{Finding atoms}

The first part of the algorithm is to construct a strong clustering from the given clustering $\mathcal{C}$ with the symmetric difference $D$ in $O(|D|)$ time. 
The algorithm does exactly the same thing as the previous work \cite{cao2024understanding}. For each cluster, $C \in \mathcal{C}$ we remove vertices from $C$ whose neighborhood differs from $C$ by at least $\eps|C|$. Then, if a sufficiently large fraction was not removed, the remaining vertices will form a strong cluster. 
After determining those strong clusters, we can go through the list $D$ and compute the new symmetric difference, just like how we implement the Pivot algorithm. 
The pseudocode is given in \cref{alg:atoms}. 
And it's easy to check that the algorithm returns a correct clustering $(\mathcal{C}', D')$ and takes $O(|D| + |D'|)$ time. 
\begin{algorithm}[ht]
    \caption{Clean($\mathcal{C}, D$)}
	\label{alg:atoms}
	\begin{algorithmic}[1]
        \State set $a = b = 0.05$
        \State $\mathcal{S} \gets \emptyset$, $V' \gets \emptyset$
        \For{each $(u, v) \in D$}
            \State Insert $C(u)$ and $C(v)$ to $\mathcal{S}$
            \State Insert $u$ and $v$ to $V'$
            \State $p_u \gets 0, p_v \gets 0$
        \EndFor
        \For{each $(u, v) \in D$}
            \State $p_u \gets p_u + 1, p_v \gets p_v + 1$
        \EndFor
        \For{each $v \in V'$}
            \If{$p_v \ge a|C(v)|$}
                \State Mark $v$
            \EndIf
        \EndFor
        \For{each $C \in \mathcal{S}$}
            \State remove $C$ from $\mathcal{C}$
            \If{less than $b|C|$ vertices are marked in $C$}
                \State Let $K$ be the set of unmarked vertices in $C$
                \State Insert $K$ to $\mathcal{C}$
                \State Insert a singleton to $\mathcal{C}$ for each marked vertex in $C$
            \Else
                \State Insert a singleton to $\mathcal{C}$ for each vertex in $C$
            \EndIf
        \EndFor
        \State $D' \gets \emptyset$
        \For{each $(u, v) \in D$}
            \State add $(u, v)$ to $D'$ if $(u, v)$ is
            \begin{itemize}
                \item either a minus edge inside a cluster of $\mathcal{C}$
                \item or a plus edge cross two non-singleton clusters of $\mathcal{C}$
            \end{itemize}
        \EndFor
        \For{each $u \in S$ where $u$ is a singleton in $\calC$}
            \State $D' \gets D' \cup \{(u, v): v \text{ is a neighbor of } u\}$
        \EndFor
        \State Return $(\mathcal{C}, D')$
	\end{algorithmic}
\end{algorithm}

\begin{lemma}
    Let $(\mathcal{C}', D')$ be the output of \hyperref[alg:atoms]{\color{black}Clean}($\mathcal{C}, D$). Then
    \[
        |D'| \le \left(1 + \frac{1}{ab}\right)|D| ~.
    \]
    where $a$ and $b$ are the parameters of the \hyperref[alg:atoms]{\color{black}Clean}$(\calC,D)$ algorithm.
\end{lemma}

\begin{proof}
    Now we will bound $|D'| - |D|$ induced by subdividing each cluster. 
    For each $C \in \mathcal{C}$, let $K$ be the set of unmarked vertices in $C$ in Clean($\mathcal{C}, D$). 
    According to \cref{alg:atoms}, for every vertex $v \in C \setminus K$, $|N(v) \triangle C| \ge a|C|$. 

    If $K > (1 - b)|C|$, then we will partition $C$ into $K$ and singletons, and the number of new edges induced for cluster $K$ is at most
    \[
        |C \setminus K| \cdot |C| ~,
    \]
    and we have
    \[
        \sum_{v \in C \setminus K}|N(v) \triangle C| \ge |C \setminus K| \cdot a|C| ~.
    \]
    If $K \le (1 - b)|C|$, then we will partition $C$ into singletons, and the number of new edges induced is at most $\binom{|C|}{2} < \frac{1}{2}|C|^2$, and we have
    \[
        \sum_{v \in C \setminus K}|N(v) \triangle C| \ge |C \setminus K| \cdot a|C| \ge ab|C|^2 ~.
    \]
    Since $b = 0.05 < 0.5$, we have the extra cost is at most
    \[
        \frac{1}{2ab}\sum_{v \in C' \setminus K'}|N(v) \triangle C'| ~.
    \]
    Summing up over all clusters in $\mathcal{C}$, we get
    \[
        |D'| - |D| \le \frac{1}{2ab}\sum_{v \in V'}|N(v) \triangle C(v)| = \frac{1}{ab}|D| ~.
    \]
\end{proof}

Since $|D'| = O(|D|)$, and the algorithm runs in $O(|D| + |D'|)$ time, we have the total running time is also $O(n)$, summarized by the next corollary. 

\begin{corollary}
    Algorithm~\ref{alg:atoms} takes $O(|D|)$ time. 
\end{corollary}

\begin{lemma}
    Every cluster in $\mathcal{C'}$ is either $\frac{a + b}{1 - b}$-agreeing or a singleton. For $a = b = 0.05$, $\mathcal{C'}$ is a strong clustering. 
\end{lemma}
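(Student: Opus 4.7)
The plan is to split the non-singleton clusters in $\mathcal{C}'$ into two types and bound each separately; singletons trivially satisfy the statement. The non-singleton clusters in $\mathcal{C}'$ come from two sources: (i) clusters of $\mathcal{C}$ that were never inserted into $\mathcal{S}$ and hence left untouched by the algorithm, and (ii) atoms $K$ assembled from the unmarked vertices of an active cluster $C$ that passed the threshold test. For type (i), no edge of $D$ is incident to any vertex in the cluster, so by the structural property that inactive clusters are perfect cliques with no incident exterior edges, we have $|N(v)\triangle C|=0$ for every $v\in C$; such a cluster is $\beta$-agreeing for every $\beta>0$ and so in particular $\frac{a+b}{1-b}$-agreeing.

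For type (ii), the main step is to interpret the counters $p_v$ and apply the triangle inequality for symmetric differences. Since $D$ is the symmetric difference between $E$ and $\mathcal{E}(\mathcal{C})$, the edges of $D$ incident to $v$ are in bijection with the vertices of $N(v)\triangle C(v)$, so $p_v=|N(v)\triangle C(v)|$ for every $v\in V'$. In particular, for every $v\in K$ we have $|N(v)\triangle C|=p_v<a|C|$ by the unmarked condition. Moreover, the keep-condition ensures that fewer than $b|C|$ vertices of $C$ are marked, so $|C\setminus K|<b|C|$ and $|K|>(1-b)|C|$.

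Combining these bounds, for every $v\in K$,
\[
|N(v)\triangle K|\le |N(v)\triangle C|+|C\triangle K|=|N(v)\triangle C|+|C\setminus K|<(a+b)|C|<\frac{a+b}{1-b}|K|,
\]
which is exactly the $\frac{a+b}{1-b}$-agreeing condition. Plugging in $a=b=0.05$ yields $\frac{a+b}{1-b}=\frac{2}{19}<\frac{1}{6}$, so every non-singleton cluster of $\mathcal{C}'$ is $\frac{1}{6}$-agreeing and hence $\mathcal{C}'$ is strong by definition. I do not foresee any major obstacle; the only subtlety is the identification $p_v=|N(v)\triangle C(v)|$, which is pure bookkeeping from how the algorithm computes $p_v$ by scanning $D$.
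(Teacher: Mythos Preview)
Your proof is correct and follows essentially the same approach as the paper: you split into untouched clusters (which are perfect cliques, hence trivially agreeing) and newly formed atoms $K$, identify $p_v=|N(v)\triangle C(v)|$, and use the triangle-type bound $|N(v)\triangle K|\le |N(v)\triangle C|+|C\setminus K|<(a+b)|C|<\frac{a+b}{1-b}|K|$. This is exactly the paper's argument, with the same arithmetic at the end.
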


\begin{proof}
    Let $(\mathcal{C}, D)$ be the input of \cref{alg:atoms}. 
    At the start of line 8, $|N(v) \triangle C(v)| = p_v$ if $v \in V'$ otherwise 0. 
    Then, for any cluster $C \notin S$ and $v \in C$, $N(v) = C(v)$, so $C$ is a perfect clique. 
    Therefore, all remaining cliques satisfy the condition. 
    Now we focus on the new inserted clusters. 
    For each $C \in S$, let $K$ be the set of unmarked vertices in $C$. 
    If $|K| \le (1 - b)|C|$, then we will replace $C$ by a collection of singletons. 
    Without loss of generality, assume $|K| > (1 - b)|C|$., and we will replace $C$ by $K$ and a collection of singletons. 
    For each $v \in K$, we have
    \[
        |N(v) \triangle K| \le |N(v) \triangle C| + |C \setminus K| < (a + b)|C| < \frac{a + b}{1 - b}|C| ~.
    \]
    Therefore $K$ is an $\frac{a + b}{1 - b}$-agreeing cluster. 
    When $a = b = 0.05$, $K$ is $\frac{2}{19}$-agreeing and strong. 
    By definition, $\mathcal{C}'$ is a strong clustering. 
\end{proof}

\subsection{Admissible pairs}

In the previous subsection, we construct a strong clustering $\mathcal{C}$. 
In the rest of this section, we will use it as the set of atoms for a preclustering, and referred to it as $\mathcal{K}$. 
All non-singleton clusters in $\mathcal{K}$ are atoms. 
In this subsection, we will define admissible pairs based on the atoms, and provide an oracle to access the admissible neighbors. 

With an $O(|D|)$ time preprocess, we are able to compute the degree for each vertex in constant time. 

\begin{definition}[Admissible pair]
    Let $u, v \in V'$ be two distinct vertices, we say $(u, v)$ is admissible if and only if they are degree similar \footnote{Recall that $u, v$ are $\eps$-degree-similar if $\eps d(u) < d(v) < d(u)/\eps$. } and one of the following conditions hold
    \begin{itemize}
        \item Both $u$ and $v$ are singletons, and at least $\eps(d(u) + d(v))$ common
        $\eps$-degree-similar vertices exists in $N(u) \cap N(v)$; 
        \item $u$ is a singleton, $v$ is contained in an atom $K(v)$, $u$ is $\eps$-degree-similar to all vertices in $K(v)$ and $|N(u) \cap K(v)| \ge \frac{1}{3}|K(v)|$, and similarly if $u$ is contained in an atom and $v$ is a singleton. 
    \end{itemize}
\end{definition}

Note that in this definition, for any atom $K$, $v \in K$ and $u \in V \setminus K$, $(u, v)$ is admissible if and only if $(u, v')$ is admissible for all $v' \in K$. 
Therefore, we say a vertex $u$ is admissible to $K$ if $u$ is admissible to all vertices in $\mathcal{K}$. 
And we use $\dadm(K)$ as the number of admissible neighbors in $\mathcal{K}$. 

We will show that the atoms together with the admissible pairs give a $\eps$-preclustering for any $\eps < 0.1$. 
First, we will show that $\dadm$ is bounded by $d$ for all vertices. 

\begin{lemma}
    Suppose $\eps < 0.1$. 
    For any vertex $v \in V$, $\dadm(v) \le (1/\eps^2 + 3)d(v)$. 
    For any atom $K$, $\dadm(K) \le \frac{3d_{D}(K)}{|K|}$. 
\end{lemma}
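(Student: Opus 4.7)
The plan is to prove the two inequalities in succession, using the atom bound as an ingredient in one case of the vertex bound.

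I would start with the second inequality. Given an atom $K$, any admissible neighbor $u$ is a singleton outside $K$ with $|N(u)\cap K|\ge |K|/3$. I would double-count the pairs $(u,w)$ with $u$ admissible to $K$ and $w\in N(u)\cap K$: from $u$'s side each admissible $u$ contributes at least $|K|/3$, while from $w$'s side the admissible $u$'s lie in $N(w)\setminus K$, which since $K$ is a non-singleton cluster in $\mathcal{K}$ consists entirely of cross-edges recorded in $D$, so there are at most $d_D(w)$ of them. Summing over $w\in K$ yields $\dadm(K)\cdot|K|/3\le d_D(K)$, which is the desired bound.

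For the first inequality, I would split on whether $v$ lies in an atom. If $v\in K_v$ for some atom $K_v$, the definition of admissibility forces every admissible neighbor of $v$ to be admissible to every vertex of $K_v$, so $\dadm(v)=\dadm(K_v)$ and the previous bound gives $\dadm(v)\le 3d_D(K_v)/|K_v|$. Combined with the fact that $K_v$ is $1/6$-agreeing, so that $d_D(w)\le|N(w)\triangle K_v|<|K_v|/6$ and $d(v)>(5/6)|K_v|$ by \cref{lem:agreeing-similar-size}, this gives $\dadm(v)<|K_v|/2\le d(v)$, comfortably within the claimed bound.

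The substantive case is when $v$ is a singleton. I would partition the admissible neighbors of $v$ into atom-members and singletons. For the atom part, each admissible atom $K$ satisfies $|N(v)\cap K|\ge|K|/3$, so disjointness gives $\sum_K |K|\le 3\sum_K|N(v)\cap K|\le 3d(v)$. For the singleton part, I would double-count pairs $(u,w)$ where $u$ is an admissible singleton neighbor of $v$ and $w$ is a common $\eps$-similar neighbor of $u$ and $v$: each admissible $u$ contributes at least $\eps(d(u)+d(v))\ge\eps d(v)$ such pairs by the admissibility condition, while for each witness $w\in N(v)$ that is $\eps$-similar to $v$ we have $d(w)<d(v)/\eps$, so at most $d(v)/\eps$ choices of $u\in N(w)$; together with at most $d(v)$ choices of $w$ this produces at most $d(v)^2/\eps$ pairs in total, hence at most $d(v)/\eps^2$ admissible singleton neighbors. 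Adding the two contributions yields $d(v)/\eps^2+3d(v)$.

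The main technical care is in the singleton-singleton double-count: one must thread the degree-similarity requirements carefully so that both the number of witnesses $w$ and the number of admissible $u$'s for a fixed $w$ are each bounded by $d(v)/\eps$, yielding the $1/\eps^2$ factor. Everything else is a short manipulation once these double-counts are set up, and the bound for $v$ inside an atom follows essentially for free from the atom bound combined with strongness.
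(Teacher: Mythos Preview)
Your proposal is correct and follows essentially the same approach as the paper: the same double-counting for the atom bound, the same triple-count $(v,w,u)$ through common $\eps$-similar witnesses for the singleton–singleton case, and the same $|N(v)\cap K|\ge|K|/3$ packing argument for the singleton-to-atom case. The only organizational difference is that you prove the atom bound first and reuse it for $v$ inside an atom, whereas the paper handles that case directly from the $\tfrac16$-agreeing bound on crossing edges; the resulting estimates are identical.
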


\begin{proof}
    It's easy to see that $\dadm(v) = 0$ if $d(v) = 0$, so without generality, assume $d(v) \ge 1$. 

    If $v$ is contained in an atom $\mathcal{K}$, by definition, $K$ is a strong cluster, so for any $v' \in K$, $|N(v') \triangle K| < \frac{1}{6}|\mathcal{K}|$. 
    Therefore, the total number of edges that cross $K$, is
    \[
        \sum_{v' \in K}|N(v') \setminus K| \le \sum_{v' \in K}|N(v') \triangle K| \le \frac{1}{6}|K|^2 ~.
    \]
    By definition, $(u, v)$ is admissible only if $u$ is not contained in any atom and $|N(u) \cap K| \ge \frac{1}{3}|K|$. 
    According to the upper bound on the total number of edges cross $K$, $\dadm(v)$ is at most $\left(\frac{1}{6}|K|^2\right) / \left(\frac{1}{3}|K|\right) = 2|K|$. 
    According to \cref{lem:agreeing-similar-size}, $|K| \le \frac{6}{5}|N(v)| = \frac{6}{5}(d(v) + 1) < 5/2\cdot d(v)$. 
    Therefore, $\dadm(v) < (1/\eps^2 + 3)d(v)$ in this case. 
    Since $|N(v') \triangle K| = d_D(v)$ for any $v' \in K$, we have $\dadm(K) \le \frac{3\sum_{v \in K}d_{D}(v)}{|K|} = \frac{3d_D(K)}{|K|}$. 
    
    If $v$ is not contained in any atom, we will bound the number of admissible neighbors contained in an atom or not contained in any atom separately. 
    If $(u, v)$ is admissible and $u$ is not contained in any atom, by definition, they have $\eps(d(u) + d(v))$ admissible neighbors. 
    For each $\eps$-degree-similar vertex $p$ of $v$, by definition of degree-similarity, $d(p) < d(v)/\eps$, so the total number of triples $(v, p, u)$ that $(v, p)$, $(u, p)$ are both admissible is at most $d^2(v)/\eps$. 
    For each $u$ that $(u, v)$ is admissible in this case, there has to be at least $\eps(d(u) + d(v))$ triples $(v, p, u)$, therefore the number of $u$'s is at most $d(v)/\eps^2$. 

    For each atom $K$, $v$ has all vertices in $K$ as its admissible neighbors if $|N(v) \cap K| \ge \frac{1}{3}|K|$, or none of the vertices in $\mathcal{K}$ are admissible neighbors of $v$. 
    Since $v$ is not contained in any atom, $N(v) \cap K = (N(v) \setminus \{v\}) \cap K$ for all atom $K$. 
    Therefore, the number of admissible neighbors of $v$ contained in any atom, is at most $3|N(v) \setminus \{v\}| = 3d(v)$. 
    So for any $v$ that is not contained in any atom, $\dadm(v) \le (1/\eps^2 + 3)d(v)$. 
\end{proof}

Next, we will give an upper bound on the number of admissible pairs in total. 

\begin{lemma}
    $|E^{\adm}| < (2/\eps^2 + 6)\cost(\mathcal{K})$. 
\end{lemma}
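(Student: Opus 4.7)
The plan is to relate $|E^{\adm}|$ to $\sum_{v} \dadm(v)$ and then plug in the two bounds established in the preceding lemma, closing the loop via $\sum_{v} d_D(v) = 2\cost(\mathcal{K})$.

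First, since each admissible pair contributes one to the $\dadm$-degree of each of its endpoints, I would write
\[
2|E^{\adm}| \;=\; \sum_{v \in V} \dadm(v) \;=\; \sum_{v \text{ singleton}} \dadm(v) \;+\; \sum_{K \text{ atom}}\, \sum_{v \in K} \dadm(v).
\]
Next, I would exploit two structural facts baked into the definition of admissibility. (i) Every admissible pair has at least one singleton endpoint, so within an atom $K$ no two vertices are admissible to each other; moreover, by the ``all-or-nothing'' property recorded right after the definition, $\dadm(v) = \dadm(K)$ for every $v \in K$. Combined with the atom bound $\dadm(K) \le 3 d_D(K)/|K|$ from the previous lemma, this gives $\sum_{v \in K} \dadm(v) \le 3 d_D(K)$. (ii) For a singleton $v$, its cluster in $\mathcal{K}$ consists only of itself, so every edge incident to $v$ is a violation of $\mathcal{K}$ and hence $d(v) = d_D(v)$.

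Combining these with the singleton bound $\dadm(v) \le (1/\eps^2 + 3)d(v)$ and with $\sum_{v \in V} d_D(v) = 2\cost(\mathcal{K})$ yields
\[
2|E^{\adm}| \;\le\; (1/\eps^2 + 3)\!\sum_{v \text{ singleton}}\! d_D(v) \;+\; 3\!\sum_{K \text{ atom}}\! d_D(K) \;\le\; (1/\eps^2 + 3)\cdot 2\cost(\mathcal{K}),
\]
which is in fact stronger than the stated bound $(2/\eps^2 + 6)\cost(\mathcal{K})$. I do not expect any real obstacle; the only point that needs care is noticing that the heavy $1/\eps^2$ factor only multiplies the singleton contribution to $\sum_v d_D(v)$, while atoms are handled by the weaker factor $3$, so the two per-vertex bounds paste together into a single overall factor without additional blowup.
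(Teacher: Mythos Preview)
Your argument is correct and in fact yields $|E^{\adm}| \le (1/\eps^2+3)\cost(\mathcal{K})$, a factor of two better than the stated bound. The paper takes a shorter route: since every admissible pair has at least one singleton endpoint, it simply writes $|E^{\adm}| \le \sum_{v\text{ singleton}} \dadm(v)$ (each pair counted at least once on its singleton side), applies the per-vertex bound $\dadm(v)\le (1/\eps^2+3)d(v)$, and uses $d(v)=d_D(v)$ for singletons together with $\sum_v d_D(v)=2\cost(\mathcal{K})$. This avoids touching the atom contribution altogether but loses the factor of two. Your decomposition over both singletons and atoms, invoking the atom bound $\dadm(K)\le 3d_D(K)/|K|$ to control $\sum_{v\in K}\dadm(v)$, is a bit more work but recovers that factor; neither approach needs anything beyond the preceding lemma.
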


\begin{proof}
    We don't have admissible pairs $(u, v)$ that both $u$ and $v$ are contained in atoms, so we have
    \[
        |E^{\adm}| \le \sum_{v \text{ is a singleton in }\mathcal{K}}\dadm(v) \le \sum_{v \text{ is a singleton in }\mathcal{K}}(1/\eps^2 + 3)d(v) \le (2/\eps^2 + 6)\cost(\mathcal{K}) ~.
    \]
\end{proof}

We will first prove that no two atoms are subsets of the same cluster in $\opt$ simultaneously. And it suffices with the lemma below. 

\begin{lemma}
    \label{lem:opt-strong-size}
    Let $C$ be a strong cluster and $C^*$ be the cluster in $\opt$ where $C \subseteq C^*$. Then $|C| \ge \frac{3}{4}|C^*|$. 
\end{lemma}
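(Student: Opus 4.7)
The plan is to exploit the local optimality of $\opt$ via the local modification operator $\mathcal{C} + K$ already used in \cref{lem:strong-subsume}, instantiated with $K = C$. Since $C \subseteq C^*$, applying $\opt + C$ simply splits $C^*$ into the two parts $C$ and $A := C^* \setminus C$, and leaves every other cluster untouched. So the cost difference $\cost(\opt + C) - \cost(\opt)$ is governed entirely by pairs with one endpoint in $C$ and one in $A$.

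First I would do the bookkeeping for this split. Let $e(C,A)$ be the number of edges between $C$ and $A$ and $\bar{e}(C,A) = |C|\cdot|A| - e(C,A)$ the number of non-edges. Edges between $C$ and $A$ that were in-cluster pairs under $\opt$ become violations under $\opt + C$, while non-edges between $C$ and $A$ that were violations under $\opt$ cease to be violations. Hence
\[
\cost(\opt + C) - \cost(\opt) \;=\; e(C,A) - \bar{e}(C,A).
\]
By optimality of $\opt$, this quantity is nonnegative, so $e(C,A) \ge \tfrac{1}{2}|C|\cdot|A|$.

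Next I would use the $\tfrac{1}{6}$-agreeing property of the strong cluster $C$. For every $v \in C$ we have $|N(v)\setminus C| \le |N(v)\triangle C| < \tfrac{1}{6}|C|$, and summing over $v \in C$ gives
\[
e(C,A) \;\le\; e(C, V\setminus C) \;=\; \sum_{v \in C} |N(v)\setminus C| \;<\; \tfrac{1}{6}|C|^2.
\]
Combining the two inequalities yields $\tfrac{1}{2}|C|\cdot|A| < \tfrac{1}{6}|C|^2$, i.e., $|A| < \tfrac{1}{3}|C|$. Therefore $|C^*| = |C| + |A| < \tfrac{4}{3}|C|$, which rearranges to $|C| > \tfrac{3}{4}|C^*|$, giving the desired conclusion (actually with strict inequality).

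There is no real obstacle here beyond correctly identifying that the only cost-affecting pairs under $\opt + C$ are the $C$-to-$A$ pairs; everything else is routine counting. The argument is morally the ``small $|C_i \setminus K|$'' branch of the proof of \cref{lem:strong-subsume} specialized to the case where $K = C$ coincides with the intersection $C^* \cap C$, so the bookkeeping is actually simpler than in that lemma. The only thing to double-check is the direction of the cost change when splitting (edges become violations, non-edges stop being violations), which is what drives the key inequality $e(C,A) \ge \bar{e}(C,A)$.
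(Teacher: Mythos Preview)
Your proof is correct and follows essentially the same approach as the paper: both exploit the local optimality of $\opt$ via the split $\opt + C$ and use the $\tfrac{1}{6}$-agreeing property to bound the edges from $C$ into $C^*\setminus C$. The paper phrases it by contradiction and argues per vertex (each $v\in C$ has more non-neighbors than neighbors in $C^*\setminus C$), whereas you aggregate the same counts into $e(C,A)$ versus $\bar e(C,A)$ and proceed directly; these are just two presentations of the same computation.
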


\begin{proof}
    For contradiction, suppose $|C| < \frac{3}{4}|C^*|$, so therefore $|C^*| > \frac{4}{3}|C|$. 
    By definition of strong clusters, $C$ is $\frac{1}{6}$-agreeing so for any $v \in C$, $|N(v) \triangle C| < \frac{1}{6}|C|$. 
    Therefore, for any $v \in C$, it has at most $\frac{1}{6}|C|$ neighbors in $C^* \setminus C$, and at least $|C^* \setminus C| - \frac{1}{6}|C| > \frac{1}{6}|C|$ non-neighbors in $C^* \setminus C$. 
    Since it's preferred for all vertices in $C$ to be separated from $C^* \setminus C$, we have $\cost(\opt + C) < \cost(\opt)$, which contradicts with the optimality. 
\end{proof}

\begin{lemma}
    \label{lem:eps-apx-admissible}
    For sufficiently small $\eps$, let $\mathcal{K}$ be the strong clustering used as atoms, $E^{\adm}$ be the set of admissible pairs defined above based on $\mathcal{K}$. 
    There exists an accepted clustering $\mathcal{C}'$ such that
    \begin{enumerate}
        \item Every cluster $K \in \mathcal{K}$ is subsumed in $\mathcal{C}$; 
        \item There is no two distinct clusters $K_1, K_2 \in \mathcal{C}$ and $C \in \mathcal{C}'$ such that $|K_1| > 1$, $|K_2| > 1$ and $K_1 \cup K_2 \subseteq C$; 
        \item For any $u \in C$, let $C(u)$ be the cluster of $u$ in $\mathcal{C}$, then $|N(u) \cap C(u)| > \frac{1}{2}|C(u)| + \eps d(u)$; 
        \item $\cost(\mathcal{C}) \le \cost(\opt) +  4\eps\cdot\cost(\mathcal{K})$. 
    \end{enumerate}
\end{lemma}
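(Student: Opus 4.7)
I would construct $\mathcal{C}'$ by starting from $\opt$ and iteratively splitting off problematic non-atom vertices. Initialize $\mathcal{C}' := \opt$. Property 1 is immediate from \cref{coro:strong-subsume}: every atom is subsumed in $\opt$. Property 2 follows from \cref{lem:opt-strong-size}: if two atoms $K_1, K_2$ were both contained in the same $C^* \in \opt$, then disjointness would force $|K_1| + |K_2| \le |C^*|$, while $|K_1|, |K_2| \ge \tfrac{3}{4}|C^*|$ gives $|K_1|+|K_2| \ge \tfrac{3}{2}|C^*|$, a contradiction. Throughout the construction I will only remove non-atom vertices from clusters, so these two properties persist.

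For property 3, I would first verify that atom vertices satisfy it automatically. If $u$ lies in atom $K\subseteq C$ for the current cluster $C$ of $\mathcal{C}'$, then $|N(u)\cap K|>\tfrac{5}{6}|K|$ by the strong property, and since $|K|/|C|$ only grows as we split off non-atom vertices (it starts at $\ge 3/4$ by \cref{lem:opt-strong-size}), we keep $|N(u)\cap C|\ge|N(u)\cap K|>\tfrac{5}{8}|C|$. Combined with $|N(u)|\le \tfrac{7}{6}|K|\le \tfrac{7}{6}|C|$, this gives $|N(u)\cap C|>\tfrac{1}{2}|C|+\eps d(u)$ for sufficiently small $\eps$. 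For non-atom vertices I would use the following peeling procedure: while some non-atom vertex $u$ in a non-singleton cluster $C$ violates property 3, remove $u$ from $C$ and make it a singleton. Termination is clear since each step strictly shrinks some cluster.

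For the cost bound, a direct computation (using the convention $u\in N(u)$) shows that splitting off $u$ from $C$ changes the cost by $2|N(u)\cap C|-|C|-1$, which is at most $2\eps d(u)-1 \le 2\eps d(u)$ whenever the violation condition $|N(u)\cap C|\le \tfrac{1}{2}|C|+\eps d(u)$ holds. Every split vertex is a singleton of $\mathcal{K}$; every edge incident to such a singleton crosses an atom boundary and is therefore paid for by $\mathcal{K}$. Hence $\sum_{v\ \text{singleton in }\mathcal{K}} d(v)\le 2\cost(\mathcal{K})$, and the total cost increase is at most $2\eps\cdot 2\cost(\mathcal{K})=4\eps\cost(\mathcal{K})$, giving property 4.

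It remains to show $\mathcal{C}'$ is accepted. No atom is separated by property 1. For any non-atom pair $(u,v)$ left in the same cluster $C$ of $\mathcal{C}'$: if $u\in K\subseteq C$ and $v$ is a $\mathcal{K}$-singleton in $C$, property 3 applied to $v$ plus the bound $|C\setminus K|\le \tfrac{1}{3}|K|$ (from $|K|\ge\tfrac{3}{4}|C|$) yields $|N(v)\cap K|\ge\tfrac{1}{3}|K|+\eps d(v)$, which supplies the neighborhood half of admissibility; the case where both are $\mathcal{K}$-singletons is analogous via a common-neighbor count. The main obstacle is the $\eps$-degree-similarity requirement baked into the definition of admissible, which does \emph{not} follow from property 3. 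I would handle it by interleaving an additional peeling step: split off any vertex in a non-singleton cluster that is not $\eps$-degree-similar to the rest of its cluster (or, for the singleton–atom case, to the atom). The cost of each such split is again charged to $\eps$ times the split vertex's contribution to $\cost(\mathcal{K})$, by a near-identical optimality argument on $\opt$, and can be absorbed into the $4\eps\cost(\mathcal{K})$ budget (with constants chosen so the combined increase still fits). The delicate part is verifying that this extra peeling does not disturb the previous analysis and that the degree-similarity charging is tight enough.
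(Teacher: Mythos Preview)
Your construction—start from $\opt$ and repeatedly split off non-atom vertices violating property~3—is exactly the paper's approach, and your arguments for properties 1, 2, 4 and for atom vertices in property~3 match the paper's essentially line for line.

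The one substantive divergence is your claim that $\eps$-degree-similarity ``does \emph{not} follow from property~3'' and therefore needs a second peeling pass with its own charging scheme. This is a misconception. Degree-similarity \emph{is} a consequence of property~3: for any two vertices $u,v$ in the same non-singleton cluster $C$ of the final $\mathcal{C}'$, applying property~3 to $v$ and using $|N(v)\cap C|\le |C|$ gives $|C|>2\eps d(v)$; applying property~3 to $u$ and using $|N(u)\cap C|\le |N(u)|=d(u)+1$ gives $d(u)+1>\tfrac{1}{2}|C|>\eps d(v)$, hence $d(u)>\eps d(v)$ (up to the usual $\pm 1$ and constant-in-$\eps$ slack that the whole section absorbs). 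By symmetry $d(v)>\eps d(u)$. The paper records this, together with the singleton--atom and singleton--singleton admissibility checks you already sketched, in the lemma immediately following the one you are proving. So your additional peeling step, and the ``delicate'' charging you were worried about, are simply not needed: once you have property~3 you get acceptance for free.
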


\begin{proof}
    We will construct a clustering from $\opt$ by breaking a cluster into smaller ones. 
    Note that every cluster $K \in \mathcal{K}$ is strong, so it must be subsumed in $\opt$ by \cref{coro:strong-subsume}. 
    According to \cref{lem:opt-strong-size}, no two disjoint strong clusters will be subsets of the same cluster in $\opt$. 
    Since we will only break a cluster into smaller ones, condition 2 must be satisfied throughout the process. 

    Assume condition 1 holds throughout the process. 
    First, we will show that for any $u$ contained in an atom $K$, throughout the process, $|N(u) \cap C| > \frac{1}{2}|C| + \eps d(u)$. 
    According to condition 1, $K \subseteq C(u)$. 
    By \cref{lem:opt-strong-size}, $\frac{3}{4}|C^*(u)| \le |K| \le |C(u)|$, where $C^*(u)$ is the cluster of $u$ in $\opt$. 
    Since we will only break cluster into smaller ones, $|C^*(u)| \ge |C(u)|$, so we have $\frac{3}{4}|C(u)| \le |K| \le |C(u)|$. 
    Therefore, $|N(u) \cup C(u)| \ge |N(u) \cup K| \ge |K| - |N(u) \triangle K| > \frac{5}{6}|K| \ge \frac{5}{8}|C(u)|$. 
    By \cref{lem:agreeing-similar-size}, $d(u) < |N(u)| < \frac{7}{6}|K| \le \frac{7}{6}|C(u)|$. 
    Therefore, $|N(u) \cup C(u)| > \frac{1}{2}|C(u)| + \eps d(u)$ when $\eps < 0.1$. 
    
    We start with $\mathcal{C} = \opt$. 

    Suppose there exists a cluster $u \in V$ that violates condition 3. 
    Let $C(u)$ be the cluster of $u$ in $\mathcal{C}$. 
    According to the assumption, $|N(u) \triangle C(u)| \le \frac{1}{2}|C(u)| + \eps d(u)$. 
    Therefore, $u$ must be a singleton in $\mathcal{K}$. 
    Then we will replace $\mathcal{C}$ by $\mathcal{C} + \{u\}$ and continue. 
    The extra cost induced in this step is
    \begin{align*}
        & |N(u) \cap C(u)| - 1 - |C(u) \setminus N(u)| \\
        \le & |N(u) \cap C(u)| - 1 - (|C(u)| - |N(u) \cap C(u)|) \\
        = & 2|N(u) \cap C(u)| - 1 - |C(u)| \\
        < & 2\eps\cdot d(u)
    \end{align*}
    Let $C$ be the final cluster we get. 
    We have
    \[
        \cost(\mathcal{C}) - \cost(\opt) \le \sum_{v \text{ is a singleton in }\mathcal{K}}2\eps\cdot d(v) \le 4\eps\cdot\cost(\mathcal{K}) ~.
    \]
    Therefore, the final cluster $C$ satisfies all the conditions. 
\end{proof}

Next, we will verify that the cluster $\mathcal{C}$ constructed in the previous proof satisfies condition 6 in \cref{thm:preclustering}. 
Since we have showed that $\cost(\mathcal{C}) \le \cost(\opt) + 4\eps\cdot\cost(\mathcal{K})$, we only need to prove that $\mathcal{C}$ is accepted and each $C \in \mathcal{C}$ has a proper size. 
We prove this in the next lemma. 

\begin{lemma}
    Let $\mathcal{C}$ be the cluster generated in the proof of \cref{lem:eps-apx-admissible}. 
    For any cluster $C \in \mathcal{C}$ and two distinct vertices $u, v \in C$, $u, v$ are $\eps$-degree-similar and $(u, v)$ is either atomic or admissible. 
    Moreover, for any $C \in \mathcal{C}$ and $u \in C$, either $|C| = 1$ or $|C| > \eps d(u)$. 
\end{lemma}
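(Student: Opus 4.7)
My plan is to extract all three conclusions from Condition~3 of \cref{lem:eps-apx-admissible}, which says $|N(u) \cap C(u)| > \tfrac{1}{2}|C(u)| + \eps d(u)$ for every $u$ in a non-singleton cluster $C(u) \in \mathcal{C}$. I would first pin down a two-sided window for $d(u)$ in terms of $|C|$. Applying $|N(u) \cap C| \le |C|$ yields $\eps d(u) < \tfrac{1}{2}|C|$, immediately proving the size bound $|C| > 2\eps d(u) > \eps d(u)$ of the ``moreover'' clause. Applying $|N(u) \cap C| \le d(u) + 1$ (since $u \in N(u)$ by convention) yields the complementary bound $d(u) > (\tfrac{1}{2}|C| - 1)/(1-\eps)$. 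Both $u$ and $v$ lie in the resulting window $\bigl((\tfrac{1}{2}|C|-1)/(1-\eps),\; |C|/(2\eps)\bigr)$, from which $\eps$-degree-similarity of $u$ and $v$ follows for $\eps < 0.1$.

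For classifying $(u,v)$ as atomic or admissible, I would case-split using property~2 of \cref{lem:eps-apx-admissible} (each cluster of $\mathcal{C}$ contains at most one atom) and \cref{lem:opt-strong-size} (if $C$ contains an atom $K$, then $|K| \ge \tfrac{3}{4}|C|$). If $u, v$ lie in the same atom $K$, then $(u,v)$ is atomic by definition. If $u \in K$ and $v$ is a singleton of $\mathcal{K}$ (or symmetric), I would verify the singleton-in-atom branch of admissibility: degree-similarity is already in hand from the window, and the mass condition $|N(v) \cap K| > |K|/3$ follows from
$|N(v) \cap K| \ge |N(v) \cap C| - |C \setminus K| > |K| - \tfrac{1}{2}|C| + \eps d(v) \ge |K|/3$,
where the last step uses $|K| \ge \tfrac{3}{4}|C|$. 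If both $u, v$ are singletons of $\mathcal{K}$, I would verify the singleton-singleton branch: Condition~3 applied to both vertices combined with inclusion-exclusion yields $|N(u) \cap N(v) \cap C| > \eps(d(u)+d(v))$, and each such common neighbor lies in $C$, hence is $\eps$-degree-similar to both $u$ and $v$ by the degree window of the first paragraph.

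The main obstacle will be the degree-similarity step in the small-$|C|$ regime. The window-ratio bound $|C|(1-\eps)/(\eps(|C|-2))$ falls below $1/\eps$ only when $|C| \ge 2/\eps$, so the small-$|C|$ range requires a finer inspection exploiting integrality of degrees together with the structural consequence of Condition~3 that at $|C|=2$ forces the two vertices to be mutual neighbors (ensuring both degrees are at least~$1$, hence comparable since both are less than $1/\eps$), and at $|C|=3$ forces each vertex of $C$ to have at least one in-cluster neighbor, providing the missing $\Omega(1)$ lower bound on the smaller degree that the naive window misses.
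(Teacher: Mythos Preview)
Your plan tracks the paper's proof closely. The ``moreover'' size bound and the three-way case split for admissibility (both in an atom; one in an atom; both singletons) are carried out with the same computations the paper gives: the paper also uses $|K|\ge\tfrac34|C|$ via \cref{lem:opt-strong-size} to reach $|N(v)\cap K|\ge\tfrac13|K|$, and the same inclusion--exclusion $|N(u)\cap N(v)\cap C|\ge|N(u)\cap C|+|N(v)\cap C|-|C|>\eps(d(u)+d(v))$ for the singleton--singleton branch. Your remark that these common neighbours lie in $C$ and are therefore $\eps$-degree-similar to both $u$ and $v$ is something the paper leaves implicit.

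For degree-similarity itself the paper writes a single displayed chain which (modulo what appear to be typos) combines $d(u)+1\ge|N(u)\cap C|>\tfrac12|C|+\eps d(u)$ with $\tfrac12|C|>\eps d(v)$ and does not isolate a small-$|C|$ regime. Your two-sided window is a more explicit rendering of the same two inequalities, and you are right that the window ratio $\frac{|C|(1-\eps)}{\eps(|C|-2)}$ only drops below $1/\eps$ once $|C|\ge 2/\eps$.

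The gap is in your repair for small $|C|$. You only sketch $|C|\in\{2,3\}$, but the whole range $3\le|C|<2/\eps$ needs treatment, and even your $|C|=3$ sketch is too weak. Condition~3 at $|C|=3$ forces $|N(u)\cap C|\ge 2$, hence $d(u)\ge 1$; but it still permits (for $\eps=0.1$, say) the configuration $d(u)=1$, $d(v)=14$, since $|N(u)\cap C|=2>1.6$ and $|N(v)\cap C|=3>2.9$ both hold, yet $\eps d(v)=1.4>1=d(u)$. The ``$\Omega(1)$ lower bound on the smaller degree'' that you extract from having one in-cluster neighbour does not rule this out. A working repair has to use the \emph{upper} cap that Condition~3 also imposes (when $|N(u)\cap C|$ takes a small integer value it simultaneously forces $d(u)$ to be small), and has to cover all intermediate $|C|$, not just $|C|=2,3$ as isolated base cases.
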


\begin{proof}
    By construction of $\mathcal{C}$, if we have $|N(u) \cap C| > \frac{1}{2}|C| + \eps d(u)$ and $|N(v) \cap C| > \frac{1}{2}|C| + \eps d(v)$, 
    then
    \[
        |C| \ge |N(u) \cap C| > \eps d(u) ~, 
    \]
    Therefore $|C| = 1$ or $\forall u \in C, |C| > \eps d(u)$ always hold. 

    We will next show that $u, v$ are $\eps$-degree-similar. 
    We have
    \[
        d(u) + 1 = |N(u)| \ge |N(u) \cap C| > \frac{1}{2}|C| + \eps d(v) \ge 1 + d(v) ~,
    \]
    Therefore we have $d(u) > \eps d(v)$. By symmetry, $d(v) > \eps d(v)$, therefore $u, v$ are degree similar. 
    
    If $u$ is contained in an atom $K$ and $v$ is a singleton, according to the proof, $|K| \ge \frac{3}{4}|C|$, we have
    \begin{align*}
        \frac{|C \setminus K|}{|K|} & = \frac{|C| - |K|}{|K|} \le \frac{1}{3} ~, \\
        |N(v) \cap K| & \ge |N(v) \cap C| - |C \setminus K| > \frac{1}{2}|C| - |C \setminus K| \ge \frac{1}{2}(|K| - |C \setminus K|) \ge \frac{1}{3}|K| ~.
    \end{align*}
    If $u$ is a singleton and $v$ is contained in an atom, the same proof above holds if we swap $u$ and $v$. 
    If $u, v$ are both singletons in $\mathcal{K}$, we have
    \[
        |N(u) \cap N(v)| \ge |N(u) \cap C| + |N(v) \cap C| - |C| > \eps(d(u) + d(v)) ~.
    \]
    By construction of $\mathcal{C}$, if $u, v$ are both contained in some atom, then they must be in the same atom. 
    Therefore, $(u, v)$ is always either atomic of admissible, and $\mathcal{C}$ is accepted. 
\end{proof}

We summarize our conclusion by now in the following corollary. 
\begin{corollary}
    \label{cor:admissible-edges}
    Given any strong clustering $\mathcal{K}$, for any $\eps < 0.1$, we can use all non-singleton clusters in $\mathcal{K}$ as atoms and define admissible edges in the above way, to get an $\eps$-preclustering instance $(\mathcal{K}, E_{adm})$, such that
    \begin{enumerate}
        \item For each vertex $v \in V$, $\dadm(v) < (1/\eps^2 + 3) d(v)$; 
        \item $|E^{\adm}| < (2/\eps^2 + 6)\cost(\mathcal{K})$; 
        \item There exists an accepted clustering $\mathcal{C}$, such that
        \begin{itemize}
            \item $\cost(\mathcal{C}) \le \cost(\opt) + 4\eps \cdot \cost(\mathcal{K})$; 
            \item For each $C \in \mathcal{C}$ and $v \in C$, either $|C| = 1$ or $|C| > \eps d(v)$. 
        \end{itemize}
    \end{enumerate}
\end{corollary}

\subsection{Computing admissibility}

The only thing left now is how to actually compute the admissibility of individual pairs of vertices. In the process of running the local search with high probability, we need to determine the admissible edges with similar running time and a probability of success as the rest of the local search. 

We will first give a $O(|D|)$ time algorithm to list out all admissible neighbors to all atoms. 

\begin{lemma}
    \label{lem:admissible-to-atom}
    Let $\mathcal{K}$ be the atoms and $D = \mathcal{E(K)} \triangle E$ be the symmetric difference. 
    In $O(|D|)$ time, we can compute the list of all admissible neighbors for all atoms, and the list of all admissible atoms for all singletons. 
\end{lemma}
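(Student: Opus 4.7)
The plan is to exploit a key structural observation: if $u$ is a singleton and $v$ lies in an atom $K$, then the edge $(u,v)$ belongs to $E$ if and only if it belongs to $D$ (since $u$ and $v$ lie in different clusters, any plus-edge between them is counted in the symmetric difference). Therefore $|N(u) \cap K| = |\{v \in K : (u,v) \in D\}|$, and admissibility of $u$ to $K$ can be decided using only the edges of $D$ plus the degrees and cluster sizes, which are all computable in $O(|D|)$ time.

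First I would do an $O(|D|)$ preprocessing pass to compute $|K|$ for every atom $K$ and $d(v)$ for every vertex $v$. For any inactive vertex $v$, we have $d(v) = |C(v)|-1$; for active vertices, this value is corrected by scanning their incident $D$-edges, and the total correction work is $O(|D|)$. Second, I would pass through $D$ and for every edge $(u,v) \in D$ such that $u$ is a singleton and $v$ lies in some atom $K$ (checked in $O(1)$ from the cluster labels), append $u$ to a bucket $S_K$ attached to $K$; symmetrically when $v$ is the singleton. Since every admissible pair has at least one singleton endpoint (property~1 of \cref{thm:preclustering}), each relevant edge contributes to exactly one bucket, so $\sum_K |S_K| = O(|D|)$.

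Third, I would use a single persistent counter array $\mathrm{cnt}[\cdot]$ of length $n$, initialized to zero once at the start. For each atom $K$, scan $S_K$ to increment $\mathrm{cnt}[u]$ per occurrence of $u$; scan $S_K$ again to emit each distinct singleton $u$ as admissible to $K$ whenever $\mathrm{cnt}[u] \ge |K|/3$ and the $\eps$-degree-similarity test holds (using the strong-cluster property of $K$ and the precomputed $d(u)$, the similarity check is $O(1)$, and a boolean auxiliary array is used to suppress duplicate emissions); then reset $\mathrm{cnt}[u]$ to zero by a third scan. Each step costs $O(|S_K|)$, and summed over $K$ this is $O(|D|)$. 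Whenever we emit $u$ as admissible to $K$ we also append $K$ to $u$'s reverse list, so both outputs are produced simultaneously.

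The one subtle point is the auxiliary-array maintenance: the counter array has length $n$, which could exceed $|D|$, so we cannot afford to re-initialize it per atom. The standard fix is to initialize it once at the top of the entire algorithm and to reset only the entries we actually touch after processing each atom; the same trick handles the duplicate-suppression flags. With this bookkeeping, every edge of $D$ is visited a constant number of times across all phases, giving the claimed $O(|D|)$ total running time.
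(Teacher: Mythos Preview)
Your proof is correct and takes essentially the same approach as the paper: enumerate the $D$-edges crossing from each atom to the outside (which, as you observe, are exactly the edges of $E$ from the atom to singletons), tally $|N(u)\cap K|$ per singleton $u$, and test the admissibility condition. The paper's proof is a four-line sketch of this same idea; your version simply spells out the bucketing and reusable counter-array bookkeeping that the paper leaves implicit.
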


\begin{proof}
    For each atom $K$, we can compute $|N(v) \cap K|$ for each $v \notin K$ by enumerating the outgoing edge from $K$, then we can check for each vertex $v \notin K$ with at least one neighbor in $K$, if it is admissible to $K$ or not by definition. 
    The total running time is proportional to the number of outgoing edges from atoms, which is $O(|D|)$. 
\end{proof}

Next, we will give an algorithm in $O(t)$ time that can check if a pair $(u, v)$ is admissible when both $u$ and $v$ are singletons. 
\begin{algorithm}[ht]
	\caption{CheckAdmissible($u, v \in V, t$)}
	\label{alg:check-admissible}
	\begin{algorithmic}[1]
        \State If $d(u) < d(v)$ swap $u$ and $v$
        \State Sample $T = t/\eps^2$ vertices $x_1, \dots, x_t$ from $N(u)$ with replacement
        \State $\tilde{s} \gets \frac{|N(u)|}{t}\sum_{i = 1}^T [x_i \in N(v) \text{ and } \eps\text{-degree-similar to both }u \text{ and } v]$
        \State return $\tilde{s} > \eps/2(d(u) + d(v))$
	\end{algorithmic}
\end{algorithm}

\begin{lemma}
    \label{lem:prob-check}
    For any $u, v \in V$ and $t$, 
    \begin{itemize}
        \item If $(u, v)$ is admissible in the $\eps$-preclustering, \cref{alg:check-admissible} will return true with probability at least $1 - \exp(-t/2)$; 
        \item If $(u, v)$ is not admissible in the $(\eps/6)$-preclustering, \cref{alg:check-admissible} will return false with probability at least $1 - \exp(-t/18)$. 
    \end{itemize}
\end{lemma}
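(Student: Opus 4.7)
The output $\tilde s$ is an unbiased scaled sum estimator: writing $s$ for the number of vertices in $N(u)$ that lie in $N(v)$ and are $\eps$-degree-similar to both $u$ and $v$, we have $\tilde s = (|N(u)|/T)\, S$ where $S = \sum_{i=1}^{T} X_i \sim \mathrm{Bin}(T,p)$ with $p = s/|N(u)|$ and $T = t/\eps^2$, so $\E[\tilde s] = s$. After the swap at line~1 we have $d(u)\ge d(v)$, $|N(u)| = d(u)+1$, and whenever $d(u)\ge 1$ the ratio $(d(u)+d(v))/|N(u)|$ lies in $[1/2,2]$ (the case $d(u)=0$ makes both claims trivial). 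The algorithm returns true iff $\tilde s > \theta := (\eps/2)(d(u)+d(v))$, equivalently $S > \lambda := T\theta/|N(u)| \ge T\eps/4 = t/(4\eps)$. Both claims then reduce to applying standard Chernoff bounds to $S$, with the factor-$3$ gap between $\eps$ and $\eps/6$ supplying the decisive multiplicative slack on each side.

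In Case 1, admissibility in the $\eps$-preclustering gives $s \ge \eps(d(u)+d(v))$, hence $p \ge \eps/2$ and $\E[S] \ge t/(2\eps)$; moreover $\E[\tilde s] = s \ge 2\theta$, so the failure event $\tilde s \le \theta$ coincides with $S \le \E[S]/2$. The multiplicative Chernoff lower tail $\Pr[S \le (1-\delta)\E[S]]\le \exp(-\delta^2 \E[S]/2)$ with $\delta = 1/2$ yields $\Pr[S \le \E[S]/2] \le \exp(-\E[S]/8) \le \exp(-t/(16\eps))$, which is at most $\exp(-t/2)$ thanks to the assumption $\eps < 0.1 < 1/8$ inherited from \cref{thm:preclustering}.

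For Case 2, I first prove the strict bound $s \le (\eps/6)(d(u)+d(v))$. If $u,v$ are themselves $(\eps/6)$-degree-similar, non-admissibility of the pair forces $s_{\eps/6} < (\eps/6)(d(u)+d(v))$, and $s \le s_{\eps/6}$ because $\eps$-degree-similarity is strictly stronger. Otherwise, WLOG $d(v) < (\eps/6)\,d(u)$, and neither $u$ nor $v$ can contribute to $s$ since each would need to be $\eps$-degree-similar to the other, contradicting even the weaker $(\eps/6)$-similarity; hence only genuine common neighbors distinct from $u,v$ count, giving $s \le d(v) < (\eps/6)\,d(u) \le (\eps/6)(d(u)+d(v))$. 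In both sub-cases $\E[S] = Tp \le \lambda/3$. Applying the Chernoff upper tail $\Pr[S \ge (1+\delta)\mu] \le \exp(-\delta\mu/3)$ (valid for $\delta\ge 1$) with $\mu = \E[S]$ and $(1+\delta)\mu = \lambda \ge 3\mu$ bounds the failure probability by $\exp(-(\lambda-\mu)/3) \le \exp(-2\lambda/9) \le \exp(-t/(18\eps)) \le \exp(-t/18)$.

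\textbf{Main obstacle.} The only part requiring care is the degree-dissimilarity sub-case of Case 2: showing that when $u,v$ fail $(\eps/6)$-degree-similarity, the count $s$ (measured with the stricter $\eps$-similarity actually checked by the algorithm) is still dominated by $(\eps/6)(d(u)+d(v))$. Ruling out $u$ and $v$ as self-contributors reduces this to the clean combinatorial bound $s \le d(v)$, which closes the gap cleanly. Once that observation is in place, the two Chernoff calculations are routine and collapse to the stated exponents $t/2$ and $t/18$ once $T = t/\eps^2$ is substituted.
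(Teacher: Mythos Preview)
Your proof is correct and follows the same high-level strategy as the paper: compute $\E[\tilde s]=s$, show it sits on the right side of the threshold by the required multiplicative margin, and apply a concentration inequality. The only real differences are in the choice of tool and in one case split. The paper applies the additive Hoeffding inequality to $\tilde s$ directly, which yields the exponents $t/2$ and $t/18$ immediately from $T=t/\eps^2$; you instead apply multiplicative Chernoff to $S$, obtain $\exp(-t/(16\eps))$ and $\exp(-t/(18\eps))$, and then invoke the standing hypothesis $\eps<0.1$ to absorb the extra $1/\eps$. Both routes are valid; Hoeffding is marginally cleaner here because the target bounds are additive in nature.

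Your treatment of Case~2 is in fact more careful than the paper's. The paper asserts ``by definition'' that non-admissibility in the $(\eps/6)$-preclustering forces $s_{\eps/6}\le(\eps/6)(d(u)+d(v))$, but this only follows when $u,v$ are themselves $(\eps/6)$-degree-similar; when they are not, the definition says nothing about the common-neighbor count. Your explicit sub-case --- ruling out $u$ and $v$ as self-contributors and then bounding $s\le d(v)\le(\eps/6)d(u)$ --- patches this gap cleanly.
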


\begin{proof}
    Note that $\mu = E[\tilde{s}]$ is just the number of vertices in $N(u) \cap N(v)$ that are $\eps$-degree-similar to both $u$ and $v$. 
    
    If $(u, v)$ is admissible in the $\eps$-preclustering, by definition, there are at least $\eps(d(u) + d(v))$ vertices in $N(u) \cap N(v)$ that are $\eps$-degree-similar to both $u$ and $v$. 
    Therefore $\mu \ge \eps(d(u) + d(v)) \ge \eps |N(u)|$. 
    According to Hoeffding's inequality, 
    \[
        \Pr[\tilde{s} \le \mu/2] \le \exp\left(-\frac{T\mu^2}{2|N(u)|^2}\right) \le \exp\left(-t/2\right) ~.
    \]
    If $(u, v)$ is not admissible in the corresponding $(\eps/6)$-preclustering, by definition, there are at most $\eps(d(u) + d(v))/6$ vertices in $N(u) \cap N(v)$ that are $(\eps/6)$-degree-similar to both $u$ and $v$. 
    Therefore, there are at most $\eps(d(u) + d(v))/6$ vertices in $N(u) \cap N(v)$ that are $\eps$-degree-similar to both $u$ and $v$. 
    So $\mu \le \eps(d(u) + d(v))/6 \le \eps |N(u)|/3$. 
    According to Hoeffding's inequality, 
    \[
        \Pr[\tilde{s} > \eps/2(d(u) + d(v))] \le \Pr[\tilde{s} - \mu > \eps/6|N(u)|] \le \exp\left(-t/18\right) ~.
    \]
\end{proof}

Therefore, with exponentially high probability on $t$, we can use \cref{alg:check-admissible} to find a fixed admissible graph that contains all admissible pairs in the $\eps$-preclustering, and also a subset of all admissible pairs in the $(\eps/6)$-preclustering. 

Next we will give an algorithm in $O(t^2 d(v))$ time to list out all the admissible neighbors of a singleton vertex $v$. 
\begin{algorithm}[ht]
	\caption{ListAdmissible($v \in V, t$)}
	\label{alg:list-admissible}
	\begin{algorithmic}[1]
        \State For each admissible atom of $v$, add all vertices in the atom to the list
        \State Sample $T' = t/\eps$ $\eps$-degree-similar neighbors of $v$ with replacement, check all their neighbors whether they are admissible to $v$ or not using \cref{alg:check-admissible} with the same parameter $t$. 
        \State return the combined list without repetition
	\end{algorithmic}
\end{algorithm}

\begin{lemma}
    \label{lem:prob-list}
    For every vertex $u$ that is admissible to $v$, with probability at least $1 - \exp(-t)$, we will check if it is admissible to $v$ in \cref{alg:list-admissible}. 
\end{lemma}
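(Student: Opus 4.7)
The plan is to split the argument on whether $u$ lies in an atom or is a singleton in the strong clustering $\mathcal{K}$, since the two paths of \cref{alg:list-admissible} handle these cases differently.

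For the atom case, suppose $u$ belongs to some atom $K$. By the definition of admissibility, $(u,v)$ being admissible is equivalent to $K$ being admissible to $v$ (admissibility is an atom-level property when one endpoint lies in an atom). By \cref{lem:admissible-to-atom}, we have precomputed for $v$ the list of all admissible atoms incident to it. Therefore Step~1 of \cref{alg:list-admissible} adds $u$ to the output list deterministically, and no probabilistic argument is needed here.

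The main work is the singleton case. Suppose $u$ is a singleton in $\mathcal{K}$ that is admissible to $v$. By the singleton-singleton clause of the definition of admissibility, there exist at least $\eps(d(u)+d(v))$ vertices in $N(u)\cap N(v)$ that are $\eps$-degree-similar to both $u$ and $v$; call this witness set $W$. Every $x\in W$ is in particular an $\eps$-degree-similar neighbor of $v$, so if $S_v$ denotes the set of $\eps$-degree-similar neighbors of $v$ from which Step~2 samples, then $W\subseteq S_v$ and $|S_v|\le d(v)$. Hence one uniform sample from $S_v$ lies in $W$ (and therefore is a neighbor of $u$) with probability at least
\[
\frac{|W|}{|S_v|}\ \ge\ \frac{\eps(d(u)+d(v))}{d(v)}\ \ge\ \eps.
\]
Since the $T'=t/\eps$ samples are independent, the probability that none of them hits $W$ is at most $(1-\eps)^{t/\eps}\le e^{-t}$. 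Whenever some sampled $x$ satisfies $u\in N(x)$, Step~2 enumerates all neighbors of $x$ and calls \cref{alg:check-admissible} on $(u,v)$, so $u$ is indeed checked for admissibility with probability at least $1-e^{-t}$.

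The only subtle point is getting the witness-to-sample ratio right: $W$ must be bounded against $|S_v|$ and not against $d(v)$ alone, but since $W\subseteq S_v\subseteq N(v)\setminus\{v\}$ the inequality $|S_v|\le d(v)$ gives the clean constant-$\eps$ success probability per sample, and the resulting geometric tail bound yields exactly the $e^{-t}$ failure probability promised in the statement. No additional probabilistic machinery beyond independence of the $T'$ samples is needed, and the atom case contributes no error term.
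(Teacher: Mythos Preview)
Your proof is correct and follows essentially the same approach as the paper: both argue that the common $\eps$-degree-similar witnesses in $N(u)\cap N(v)$ are hit with probability at least $\eps$ per sample, giving the $(1-\eps)^{T'}\le e^{-t}$ tail bound. Your version is more careful than the paper's in that you explicitly separate out the atom case (handled deterministically by Step~1) and justify the per-sample probability via $|W|/|S_v|\ge \eps(d(u)+d(v))/d(v)\ge\eps$, whereas the paper simply asserts the $\eps$ bound.
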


\begin{proof}
    If $(u, v)$ is admissible, then we will check this pair $(u, v)$ in \cref{alg:list-admissible} when common $\eps$-degree-similar vertex in $|N(u) \cap N(v)|$ is sampled. 
    Since there are at least $\eps(d(u) + d(v))$ of them, we will sample one of them each time with probability $\eps$ each time, therefore the probability that none of them are hit anytime, is at most $(1 - \eps)^{T'} \le \exp(-t)$. 
\end{proof}

According to union bound, we will list all admissible neighbors of $v$ between the $\eps$-preclustering and the $(\eps/6)$-preclustering with exponentially high probability on $t$ for any $t = \Omega(\log d(v))$. 

Suppose we ``record'' all the result when we query \cref{alg:check-admissible}, and define admissible pairs to be consistent with the output so far, unless it fails as \cref{lem:prob-check} says. 
Then we get a preclustering that satisfies all the conditions, and so we have completed the proof of \cref{thm:preclustering}. 

\section{Dynamic Local Search}
\label{sec:local-search}

\subsection{Iterated-flipping local search}

In this section, we introduce a slightly different version of the iterated-flipping local search algorithm in $\cite{CLMTYZ24}$. 
The algorithm takes a preclustering $(\mathcal{C},D)$ as input, and when we talk about a cluster or a clustering, we always assume they respect the preclustering. 
Combining with our new implementation of the local search procedure in \cref{sec:local-search-implementation}, we will get a running time of $O(|D|)$ and then it can be turned into a dynamic algorithm by our dynamic protocol in \cref{sec:dynamic-framework}. 

\begin{definition}
    Let $D$ be the symmetric difference (in the input of the iterated-flipping local search) and $\mathcal{C}^*$ be the current optimal clustering. For any constant $\eps$, we say a clustering $\mathcal{C}$ is a \emph{$\eps$-good local optimum (with respect to $D$)} if
    \[\sum_{C\in\calC^*} ( \cost(\calC) - \cost(\calC + C) )  \le \eps |D| .\]
\end{definition}

Note that our definition of $\eps$-good local optimum is different from that in $\cite{CLMTYZ24}$. We reformulate the iterated-flipping local search algorithm and the approximation guarantee as follows.

\begin{algorithm}[H]
    \caption{Iterated-flipping Local Search from \cite{CLMTYZ24}}
    \label{alg:flipping-local-search}
	\begin{algorithmic}[1]
        \State Let $w_0$ be the initial weight function where each edge has a cost $1$.
        \State Let $\calC_0'$ be an $\eps$-good local optimum for $w_0$.
        \For{$i=1,\dots,O(1/\eps)$}
            \State Let $w_i$ be the weight function created by increasing the cost of violated edges in $\calC_{i-1}'$ by $1/2$ from $w_0$.
            \State Let $\calC_i$ be an $\eps$-good local optimum for $w_i$.
            \State Let $w_i'$ be the weight function created by increasing the cost of violated edges in $\calC_i$ by $1/2$ from $w_i$.
            \State Let $\calC_i'$ be an $\eps$-good local optimum for $w_i'$.
            \State $\calC_i'' \gets \mathrm{Pivot}(\calC_{i-1}',\calC_i,\calC_{i}')$ using the Pivot algorithm in Lemma 18 of \cite{CLMTYZ24}
            \footnotemark
        \EndFor
        \State Return the best of all the above clusterings.
	\end{algorithmic}
\end{algorithm}
\footnotetext{It can be easily computed in $O(n)$ time originally and in $O(|D|)$ time in our case. See the pivot algorithm in section 3.1 of \cite{CLMTYZ24} for more details. }
\begin{theorem}[\cite{CLMTYZ24} Theorem 23]
\label{thm:iter-ls}
    The iterated-flipping local search algorithm returns a $(2-2/13+O(\eps k))$-approximation, where $k$ is the approximation factor of the input $(\mathcal{C},D)$.
\end{theorem}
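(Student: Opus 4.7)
The plan is to adapt the proof of Theorem 23 in \cite{CLMTYZ24}, keeping track of how our modified definition of $\eps$-good local optimum propagates the input approximation factor $k$ into the final bound. The key identity is that, because the input cluster representation $(\calC, D)$ is a $k$-approximation, $|D| = \cost(\calC) \le k \cdot \cost(\opt)$, so each slack term of the form $\eps|D|$ in the algorithm's analysis is at most $\eps k \cdot \cost(\opt)$. This is precisely the origin of the $O(\eps k)$ term in the approximation ratio: wherever the original \cite{CLMTYZ24} analysis produces an additive slack of $\eps \cdot \cost(\opt)$, our weaker guarantee produces $\eps|D|$ instead.

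The overall structure of the argument follows \cite{CLMTYZ24} directly. The iterated weight updates in \cref{alg:flipping-local-search} simulate a no-regret / multiplicative-weights procedure on an implicit Sherali--Adams-style LP, and at each iteration $i$ the combined pivot rounding $\calC_i'' = \mathrm{Pivot}(\calC_{i-1}', \calC_i, \calC_i')$ produces an integral clustering whose expected cost can be related to the (approximately optimal) LP value. Taking the best clustering over the $O(1/\eps)$ rounds yields the $2 - 2/13$ approximation, plus an additive slack coming from the per-round $\eps$-good local optimum guarantee. Since each edge weight is incremented by at most $1/2$ in each of $O(1/\eps)$ rounds, all weights remain in $[1, O(1/\eps)]$, so the weighted optima also stay within an $O(1/\eps)$ factor of $\cost(\opt)$ and no additional blow-ups are introduced elsewhere.

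The main obstacle is reconciling our aggregated definition of $\eps$-good local optimum, $\sum_{C \in \calC^*}(\cost(\calC) - \cost(\calC + C)) \le \eps|D|$, with the per-cluster definition used in \cite{CLMTYZ24}. Our condition is in fact weaker (strictly easier to satisfy since it only bounds the sum rather than each term individually), so one must check that every place where the \cite{CLMTYZ24} proof invokes a cluster-wise bound can be replaced by the weaker sum-over-$\calC^*$ bound. I expect this to go through cleanly because the rounding guarantees of \cite{CLMTYZ24} are ultimately stated as expectations and linear combinations indexed by the clusters of the optimum, so the aggregate slack is exactly what feeds into the final approximation inequality. Once this verification is complete, the bound $(2 - 2/13 + O(\eps k))$ follows by substituting $|D| \le k \cdot \cost(\opt)$ into each slack term.
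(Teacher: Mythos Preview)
The paper does not supply its own proof of this statement; it is cited directly as Theorem~23 of \cite{CLMTYZ24}, with only the remark that the notion of $\eps$-good local optimum has been reformulated. Your plan---adapt the \cite{CLMTYZ24} argument and substitute $|D|\le k\cdot\cost(\opt)$ wherever the slack $\eps|D|$ appears, yielding the $O(\eps k)$ term---is exactly the intended reasoning, and your observation that the aggregated local-optimum condition is the only point needing verification is correct (the appendix proof of \cref{thm:improved-approx-ratio}, via \cref{lem:iterated-w-increase}, confirms that only the summed inequality over $\calC^*$ is ever used).
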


To achieve our intended result, it remains to show that we can find an $\eps$-good local optimum via local search in $O(|D|)$ time. Before that, let's introduce some useful properties about local search.
Let $\calC^*=\{C_1^*,\dots,C_k^*\}$ be the optimal clustering. For the current clustering $\calC$, define $\Delta_i=\max\{\cost(\calC) - \cost(\calC + C_i^*) - \eps^3 d^{adm}(C_i^*) , 0\}$ to be the potential improvement of inserting $C_i^*$, and define $\Delta=\sum_{i=1}^k \Delta_i$. Note that $\Delta$ and $\Delta_i$ change over time as we modify our clustering along the algorithm step by step.

\begin{lemma}[\cite{CLMTYZ24} Lemmas 40 and 54]
\label{lem:gen-cluster}
    Given a clustering $\calC$, a vertex $v\in C_i^*$\footnote{Strictly speaking, if $C_i^*$ contains an atom, $v$ should be in the atom. We will deal with it in \cref{lem:hit}.} (without knowing $C_i^*$) and the admissible neighborhood of $v$, we can generate a cluster $C$ in $O(d^{adm}(v)d(v))$ time which can improve $\calC$ by at least $\Delta_i$ with constant probability.
\end{lemma}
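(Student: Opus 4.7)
The plan is to generate $O(1)$ candidate clusters around $v$, evaluate each within the stated budget, and return the best; the claim will be that with constant probability at least one of them achieves improvement $\Delta_i$. First, by properties~1 and~5 of \cref{thm:preclustering}, $C_i^*$ must live inside $\{v\} \cup \mathrm{atom}(v) \cup N^{adm}(v) \cup \{\text{admissible atoms of } v\}$, and every admissible neighbor $u$ of $v$ satisfies $d(u) = O(d(v)/\eps)$. Hence for any candidate $C$ drawn from this set we have $\sum_{u \in C} d(u) = O(d^{adm}(v)\,d(v))$, and the quantity $\cost(\calC) - \cost(\calC + C)$ can be computed in that time by scanning the neighborhoods of the members of $C$ while bookkeeping, for each cluster of $\calC$, how many of its vertices lie in $C$.

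To produce the candidates, I would attach to each admissible ``unit'' $u$ (either a singleton in $N^{adm}(v)$ or an admissible atom of $v$, viewed as indivisible) a score $s(u)$ that captures how much placing $u$ together with $v$ helps versus leaving $u$ in its current cluster $C(u) \in \calC$; e.g., a shifted version of $|N(u) \cap \mathrm{atom}(v)| - |C(u) \setminus N(u)|$ suitably normalized. The scores can be computed by a single pass over the admissible neighbors and their current-cluster statistics in the stated budget. Bucket-sort the units by $s$ in linear time, draw $T = O(1)$ i.i.d.\ random thresholds $\theta_1, \dots, \theta_T$ from an appropriate distribution on the range of $s$, and form candidates
$$C_t \;=\; \{v\} \cup \mathrm{atom}(v) \cup \{u : s(u) \ge \theta_t\}.$$
Evaluate each $C_t$ as above and return the best; since $T = O(1)$, the total time stays $O(d^{adm}(v)\,d(v))$.

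The main obstacle is the probabilistic guarantee. Following the template of \cite{CLMTYZ24}, the plan is to establish an ordering lemma of the form: when the units are sorted by $s$, the map $\ell \mapsto f(\ell) := \cost(\calC) - \cost(\calC + C_\ell)$ given by taking the top-$\ell$ prefix is ``near-concave'', and its maximum over $\ell$ is at least $\cost(\calC) - \cost(\calC + C_i^*) - \eps^3 d^{adm}(C_i^*)$. The slack term $\eps^3 d^{adm}(C_i^*)$ in the definition of $\Delta_i$ is precisely what pays for the discrepancy between a prefix cluster and the true $C_i^*$: it absorbs the $O(\eps)$-fraction of units that sit on the wrong side of the ideal threshold. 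Given near-concavity, a single random threshold captures an $\Omega(1)$-fraction of the maximum in expectation, and $T = O(1)$ independent draws boost this to constant success probability. The technical crux will be choosing $s$ so that the near-concavity statement holds and correctly handling admissible atoms as indivisible blocks of units; both reduce to local exchange arguments analogous to \cref{lem:strong-subsume} but done at the level of prefix sums rather than single clusters.
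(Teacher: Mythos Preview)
The paper does not prove this lemma; it is quoted directly from \cite{CLMTYZ24} (their Lemmas 40 and 54) and used as a black box. Evaluating your sketch on its own merits, the time budget and the cost-evaluation step are fine, but the cluster-generation step has a genuine gap.

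Any score $s(u)$ computed from $u$'s own neighborhood and its current cluster $C(u)$ alone is oblivious to $C_i^*$. In the singleton case $\mathrm{atom}(v) = \{v\}$, so your example score reduces to $[u \sim v] - |C(u) \setminus N(u)|$, which ranks units essentially by how badly they fit their current cluster---a quantity that need not correlate with membership in $C_i^*$ at all. The best prefix $C_\ell$ can then differ from $C_i^*$ on $\Theta(d^{\adm}(v))$ units, each contributing $\Theta(d(v))$ to the cost gap, so the discrepancy is $\Theta(d^{\adm}(v)\,d(v))$, far larger than the $\eps^3 d^{\adm}(C_i^*) = O(d^{\adm}(v))$ slack you invoke. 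The near-concavity of $\ell \mapsto f(\ell)$ is likewise unjustified: the marginal effect of adding the $\ell$-th unit depends on its edges and non-edges to the $\ell - 1$ units already chosen, and these pairwise interactions are exactly what a fixed per-unit score discards.

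The missing ingredient is that the score must reflect how $u$ relates to the \emph{other} candidates in $\Ncand(v)$, not just to $\mathrm{atom}(v)$ and $C(u)$. In \cite{CLMTYZ24} this is achieved by scoring $u$ against a random sample drawn from $\Ncand(v)$ (or, in the slower variant, against all of $\Ncand(v)$); conditioned on the constant-probability event that the sample is representative, such a score does separate most of $C_i^*$ from its complement, and the residual error is precisely what the $\eps^3 d^{\adm}(C_i^*)$ slack is calibrated to absorb. Your threshold-and-prefix idea is then sound on top of that interaction-aware score, but not on the marginal score you propose.
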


\begin{lemma}[Improving \cite{CLMTYZ24} Lemma 55]
\label{lem:est-imp}
    For any vertex $v$ and any $\gamma\ge d^{adm}(v)\log^2 d^{adm}(v)$, the cluster $C$ stated in Lemma \ref{lem:gen-cluster} can be generated in $O(\gamma)$ time, which has an estimated improvement of at least $\Delta_i$ with constant probability, and the real improvement is greater than or equal to the estimated improvement with probability $1-exp(-\gamma)$.
\end{lemma}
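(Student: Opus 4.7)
The plan is to strengthen \cite{CLMTYZ24} Lemma 55 in two respects simultaneously: we shrink the time budget to an adjustable parameter $\gamma$ that may be far below $d^{adm}(v)\cdot d(v)$, and we boost the failure probability for ``real improvement $\geq$ estimated improvement'' from a constant to $\exp(-\gamma)$. The central tool is sub-sampling combined with Chernoff-type concentration, and the assumption $\gamma \geq d^{adm}(v)\log^2 d^{adm}(v)$ provides exactly enough samples per admissible neighbor to make both guarantees coexist.

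First I would follow the same cluster-generation template as in \cref{lem:gen-cluster}: for each admissible neighbor $u$ of $v$, decide whether to add $u$ to the candidate cluster $C$, and then score $C$ against $\calC$ to approximate $\cost(\calC) - \cost(\calC + C)$. In \cite{CLMTYZ24}, these decisions and the scoring are performed exactly, costing $\Theta(d(v))$ per vertex and $\Theta(d^{adm}(v)\cdot d(v))$ overall. My modification is to replace every exact count by an empirical count obtained from $s = \Theta(\gamma / d^{adm}(v))$ random probes into the appropriate sets; since $s \geq \log^2 d^{adm}(v)$, each such estimate is well concentrated. Using the admissibility oracles from \cref{thm:preclustering}(7)--(9), each probe runs in $O(1)$ time with sufficient probability, so the total work is $O(d^{adm}(v)\cdot s)=O(\gamma)$.

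Next, the constant-probability bound on the estimated improvement reaching $\Delta_i$ would be inherited from the proof of \cref{lem:gen-cluster}: since each per-neighbor estimator is unbiased, the argument there goes through up to constants, and a union bound over the $d^{adm}(v)$ neighbors is easily absorbed by the $\log^2$ slack in $s$. For the one-sided ``real $\geq$ estimated'' statement I would first bias the estimator downward by a small slack proportional to the typical per-sample standard deviation, and then apply a Chernoff bound to the total sum of all $\gamma$ samples (rather than applying per-neighbor Chernoffs and union bounding), obtaining failure probability $\exp(-\Omega(\gamma))$; the downward bias is what converts a two-sided concentration statement into a one-sided domination event.

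The main obstacle will be calibrating the downward bias: too aggressive and the estimated improvement no longer matches $\Delta_i$ with constant probability, too gentle and the real improvement can fall below the estimate with probability exceeding $\exp(-\gamma)$. The budget made available by $\gamma\geq d^{adm}(v)\log^2 d^{adm}(v)$ is essentially tight for this trade-off, so the bookkeeping requires care. A secondary subtlety is the atom caveat noted in the footnote to \cref{lem:gen-cluster}: every non-singleton atom $K$ intersecting the candidate must be treated as a single sampling unit weighted by $|K|$, so that $|K|$-scaled contributions do not wreck the bounded-difference hypothesis underlying the Chernoff step.
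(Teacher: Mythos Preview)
Your plan would likely succeed, but it is substantially more intricate than the paper's argument. The paper's proof is essentially a single observation layered on top of \cite{CLMTYZ24} Lemma~55: that lemma already delivers all three guarantees (including the $1-\exp(-\gamma)$ one-sided bound), but in time $O(\gamma\cdot d^{adm}(v))$ rather than $O(\gamma)$. The extra $d^{adm}(v)$ factor arises because \cite{CLMTYZ24} samples $O(\gamma)$ \emph{vertices} and then spends $O(d^{adm}(v))$ time computing each sampled vertex's exact contribution. The paper's fix is simply to sample $O(\gamma)$ \emph{edges} instead---feasible here because all vertices in the candidate cluster have similar degree by the preclustering---so that each sample costs $O(1)$ and the existing concentration argument carries over unchanged. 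Every other detail (the slack mechanism that makes the one-sided bound hold, atom handling, the constant-probability guarantee on the estimated improvement) is inherited verbatim from \cite{CLMTYZ24}.

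You instead propose a stratified scheme: iterate over all $d^{adm}(v)$ admissible neighbors and allot each one $s=\Theta(\gamma/d^{adm}(v))$ probes. That is a genuinely different decomposition, and it forces you to rebuild the biasing and atom-grouping arguments from scratch. These steps are not wrong, but they are self-inflicted complications. In particular, your worry about calibrating the downward bias so that the trade-off is ``essentially tight'' under the budget $\gamma\ge d^{adm}(v)\log^2 d^{adm}(v)$ simply does not arise in the paper's framing: the one-sided tail is already handled in \cite{CLMTYZ24} by whatever slack they use there, and the only change is the unit being sampled. The paper's global edge-sampling viewpoint also dissolves your atom concern, since a single sampled edge contributes a bounded increment regardless of atom sizes.
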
%

\begin{proof}
    The basic idea is to use sampling to estimate the improvement of the cluster $C$ in \cref{lem:gen-cluster}. Lemma 55 in \cite{CLMTYZ24} suffices to show the same result but with $O(\gamma d^{adm}(v))$ time. The reason they have the extra $d^{adm}(v)$ term is that they sample $O(\gamma)$ vertices and then spend $O(d^{adm}(v))$ to compute the exact contribution of these vertices. However, if we sample $O(\gamma)$ edges instead, the same concentration bound holds. We can efficiently sample edges since the vertices in the same cluster must have similar degree. Other proof details are the same as \cite{CLMTYZ24} Lemma 55.
\end{proof}

The property of local search algorithm guarantees that from a vertex $p$, the generated cluster $C$ satisfies $K(p) \subseteq C \subseteq \Ncand(p)$ (in line~\ref{line:generate}), where $K(p)$ is the atoms containing $p$ or $\{p\}$ if $p$ is singleton in the preclustering, and $\Ncand(p) = K(p) \cup (\Nadm(p) \setminus (\cup_{K \text{ is an atom}}K))$. 
We will reuse these notations in \cref{sec:LP} as well. 
When we sample vertices, we say a cluster $C$ is hit if $K(p) \subseteq C \subseteq \Ncand(p)$. 

\begin{lemma}[Similar to \cite{CLMTYZ24}]
\label{lem:hit}
    If we sample a vertex such that each vertex $v$ is sampled with probability $\Theta\left(\frac{1}{|D|\log^2 d(v)}\right)$, then each non-singleton cluster $C_i^*$ will be hit with probability $\Omega\left(\frac{|C^*_i|}{|D|\log^2 |C^*_i|}\right)$. 
\end{lemma}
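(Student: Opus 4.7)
The plan is to identify, for each non-singleton $C_i^*$, a large set of \emph{good} vertices from which the cluster-generation procedure would hit $C_i^*$, and then sum the per-vertex sampling probabilities using that all good vertices have degree $\Theta(|C_i^*|)$.

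First I would specify what makes a vertex $p$ good: $p$ hits $C_i^*$ precisely when $K(p)\subseteq C_i^*\subseteq \Ncand(p)$. Using the preclustering guarantees, Condition~2 of \cref{lem:eps-apx-admissible} ensures that $C_i^*$ contains at most one atom, and any two vertices of $C_i^*$ are either atomic or admissible. I would then split into two cases. \emph{Case (i):} $C_i^*$ contains a unique atom $K$. Only $p\in K$ satisfies $K(p)=K\subseteq C_i^*$, and for any such $p$, every $u\in C_i^*\setminus K$ is a preclustering-singleton admissible to $K$, so $u\in\Nadm(p)$ and $u$ lies in no atom, giving $C_i^*\subseteq \Ncand(p)$. \emph{Case (ii):} $C_i^*$ contains no atom, so $K(p)=\{p\}$ for every $p\in C_i^*$ and every other vertex of $C_i^*$ is a singleton admissible to $p$. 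In both cases the set of good vertices has cardinality $\Omega(|C_i^*|)$: case (ii) is immediate, and in case (i) \cref{lem:opt-strong-size}, applied to the accepted optimum (which the proof of \cref{lem:eps-apx-admissible} constructs as a refinement of $\opt$), yields $|K|\ge\tfrac{3}{4}|C_i^*|$.

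Next I would bound the degrees of the good vertices. Any two vertices in $C_i^*$ are $\eps$-degree-similar, and $|C_i^*|>\eps d(p)$ for each $p\in C_i^*$, by the final properties proven in \cref{lem:eps-apx-admissible}. In case (i), \cref{lem:agreeing-similar-size} applied to the strong cluster $K$ gives $d(p)=\Theta(|K|)=\Theta(|C_i^*|)$ for $p\in K$, and in case (ii) the degree-similarity and size bound combine to the same conclusion. Hence $\log^2 d(p)=\Theta(\log^2|C_i^*|)$ uniformly over good vertices. Using the independence of the per-vertex samples to obtain the standard $1-\exp(-\sum_v p_v)$ lower bound on the probability that at least one good vertex is sampled, we conclude
$$\Pr[C_i^*\text{ is hit}]\;\ge\;1-\exp\!\Bigl(-\Omega\bigl(|C_i^*|/(|D|\log^2|C_i^*|)\bigr)\Bigr)\;=\;\Omega\!\Bigl(\tfrac{|C_i^*|}{|D|\log^2|C_i^*|}\Bigr),$$
as required.

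The main obstacle is case (i): one must verify that the unique atom inside $C_i^*$ occupies a constant fraction of $|C_i^*|$ and that its vertices have degree $\Theta(|C_i^*|)$, using only preclustering properties rather than global optimality of $C_i^*$. The argument combines the atoms-subsumed-in-accepted-$\opt$ clause of \cref{lem:eps-apx-admissible}, the size transfer of \cref{lem:opt-strong-size} to the accepted optimum (justified because that optimum is obtained only by refining $\opt$), and the $\tfrac{1}{6}$-agreeing property of strong clusters via \cref{lem:agreeing-similar-size}.
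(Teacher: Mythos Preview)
Your approach is essentially the same as the paper's: split on whether $C_i^*$ contains an atom, use \cref{thm:preclustering}(6) to get $d(v)=O(|C_i^*|)$ for $v\in C_i^*$, and in the atom case invoke \cref{lem:opt-strong-size} to show the atom occupies a constant fraction of $C_i^*$. One minor correction: the lemma concerns sampling a \emph{single} pivot from the given distribution, not independent per-vertex Bernoulli trials, so the hit probability is simply $\sum_{\text{good }p}\Pr[p\text{ is the pivot}]$; the $1-\exp(-\cdot)$ step and the appeal to independence are based on a mistaken premise (though the resulting inequality is harmless since $1-e^{-x}\le x$). Also, you only need the upper bound $d(p)=O(|C_i^*|)$ to control $\log^2 d(p)$, not the full $\Theta$ you claim.
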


\begin{proof}
    By \cref{thm:preclustering}(6), $d(v)=O(|C_i^*|)$ for every $v\in C_i^*$. The lemma already follows when $C_i^*$ does not contain an atom. When $C_i^*$ contains an atom, it follows from \cref{lem:opt-strong-size}, which shows the atom is the major part of $C_i^*$.
\end{proof}

The following lemma shows that it suffices for us to reduce the potential improvement $\Delta$.

\begin{lemma}[Similar to \cite{CLMTYZ24} Lemma 49]
\label{lem:Delta}
    Let $\calC$ be a clustering. Then $\calC$ is an $O(\eps)$-good local optimum (with respect to some $D$) if $\Delta\le\eps|D|$.
\end{lemma}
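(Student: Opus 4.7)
The plan is to peel off the definition of $\Delta_i$ to relate the local-optimum sum $\sum_i(\cost(\calC)-\cost(\calC+C_i^*))$ directly to $\Delta$ plus an additive admissibility term, and then use the preclustering size bound from \cref{thm:preclustering}(4) to absorb that term into $O(\eps)|D|$.

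First I would observe that, purely from the definition $\Delta_i=\max\{\cost(\calC)-\cost(\calC+C_i^*)-\eps^3 d^{adm}(C_i^*),0\}$, we always have
\[
\cost(\calC)-\cost(\calC+C_i^*)\;\le\;\Delta_i+\eps^3 d^{adm}(C_i^*),
\]
by case-splitting on whether $\cost(\calC)-\cost(\calC+C_i^*)$ exceeds $\eps^3 d^{adm}(C_i^*)$ or not; in the former case this is an equality, in the latter the left side is already below $\eps^3 d^{adm}(C_i^*)$. Summing over $i$ gives
\[
\sum_{i=1}^k\bigl(\cost(\calC)-\cost(\calC+C_i^*)\bigr)\;\le\;\Delta+\eps^3\sum_{i=1}^k d^{adm}(C_i^*).
\]

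Next I would bound $\sum_i d^{adm}(C_i^*)$. Since $\{C_i^*\}$ is a partition of $V$, this sum equals $\sum_{v\in V}d^{adm}(v)=2|E^{adm}|$. By \cref{thm:preclustering}(4), $|E^{adm}|=O(1/\eps^2)\cdot\cost(\mathcal{K})$, and because $D$ here is the symmetric difference between $E$ and the edges of the atoms $\mathcal K$, we have $\cost(\mathcal K)=|D|$. Thus $\sum_i d^{adm}(C_i^*)=O(1/\eps^2)|D|$, so the second term above is at most $\eps^3\cdot O(1/\eps^2)|D|=O(\eps)|D|$.

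Finally, combining with the hypothesis $\Delta\le\eps|D|$ yields
\[
\sum_{C\in\calC^*}\bigl(\cost(\calC)-\cost(\calC+C)\bigr)\;\le\;\eps|D|+O(\eps)|D|\;=\;O(\eps)|D|,
\]
which is exactly the condition for $\calC$ to be an $O(\eps)$-good local optimum with respect to $D$. There is no real obstacle here; the only subtle point is being careful with the $\max$ in the definition of $\Delta_i$ (so the inequality goes the right way even when $\cost(\calC)-\cost(\calC+C_i^*)$ is small or negative) and making the bookkeeping identification $\cost(\mathcal K)=|D|$ so that \cref{thm:preclustering}(4) can be invoked directly.
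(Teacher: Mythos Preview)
Your proposal is correct and follows essentially the same approach as the paper's proof: both unwind the $\max$ in the definition of $\Delta_i$ to get $\cost(\calC)-\cost(\calC+C_i^*)\le\Delta_i+\eps^3 d^{adm}(C_i^*)$, sum over $i$, recognize $\sum_i d^{adm}(C_i^*)=2|E^{adm}|$, and then invoke \cref{thm:preclustering}(4) together with $\cost(\mathcal K)=|D|$ to absorb the admissibility term into $O(\eps)|D|$. Your write-up is slightly more explicit about the case split and the identification $\cost(\mathcal K)=|D|$, but the argument is the same.
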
%

\begin{proof}
    When $\Delta\le\eps|D|$, we have:
    \begin{align*}
        \sum_{C\in\calC^*} ( \cost(\calC) - \cost(\calC + C) )
        &
        \le
        \sum_i(\Delta_i + \eps^3 d^{adm}(C_i^*)) \tag{By definition} \\
        &
        = \Delta + 2\eps^3|E^{adm}| \\
        &
        \le \Delta + O(\eps|D|) \tag{\cref{thm:preclustering}(4)} \\
        &
        = O(\eps|D|).
    \end{align*}
\end{proof}

Finally, a key property we will use to improve the local search is that, for a vertex $v$, the (both positive and negative) improvement we get by inserting a cluster containing $v$ to our clustering can be bounded by $O(d(v)^2)$. In particular, it means $\Delta_i=O(|C_i^*|^2)$ since the vertices in $C_i^*$ have degrees of $O(|C_i^*|)$ by \cref{thm:preclustering}(6). So, since a small cluster only contributes a small proportion, intuitively we can use Chernoff bound to show we can hit a constant fraction of them with high probability. On the other hand, we must increase the hitting probability for large clusters.

\begin{lemma}
\label{lem:low-cost-cluster}
    Let $\calC$ be a clustering.
    For any cluster $C$ and any vertex $v\in C$, we have $| \cost(\calC) - \cost(\calC + C) | = O(d(v)^2)$
\end{lemma}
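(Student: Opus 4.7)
The plan is to bound $|\cost(\calC)-\cost(\calC+C)|$ by counting the pairs whose violation status flips between the two clusterings. A pair $(u,u')$ can flip only when its ``same-cluster'' indicator changes, and this happens in exactly two ways: (i) both $u,u'\in C$ but in different clusters of $\calC$ (now joined inside $C$), or (ii) $u\in C$ and $u'\in C'(u)\setminus C$, where $C'(u)$ denotes the cluster of $u$ in $\calC$ (they were together in $\calC$ but are now split by the insertion of $C$). Every flipped pair shifts the cost by exactly $\pm 1$, so
\[
    |\cost(\calC)-\cost(\calC+C)| \;\le\; \binom{|C|}{2}+\sum_{u\in C}|C'(u)\setminus C|.
\]

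For the first term, I will use that $C$ respects the preclustering (the local search only ever considers such clusters) and $v\in C$, so every other element of $C$ is either in the atom of $v$ or admissible to $v$. Hence $|C|\le |K(v)|+\dadm(v)+1=O(d(v)/\eps^2)$ by \cref{thm:preclustering}(2) combined with \cref{lem:agreeing-similar-size} for the atomic part, giving $\binom{|C|}{2}=O(d(v)^2)$.

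For the second term, I will use that $\calC$ also respects the preclustering, so for every $u\in C$ the cluster $C'(u)$ sits inside the atom of $u$ together with the admissible neighbors of $u$, which yields $|C'(u)|=O(d(u)/\eps^2)$ by the same theorem. Since $(u,v)$ is either atomic or admissible, the degree-similarity guarantee (\cref{thm:preclustering}(5) for admissible pairs, \cref{lem:agreeing-similar-size} for atomic pairs) then forces $d(u)=O(d(v)/\eps)$, so $|C'(u)|=O(d(v)/\eps^3)$ and
\[
    \sum_{u\in C}|C'(u)\setminus C| \;\le\; |C|\cdot O(d(v)/\eps^3) \;=\; O(d(v)^2),
\]
finishing the bound.

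\textbf{Main obstacle.} The difficulty is that the statement is simply false for an arbitrary clustering $\calC$: e.g.\ if $\calC$ is the single cluster containing every vertex, then $|\cost(\calC)-\cost(\calC+C)|$ scales like $|C|\cdot n$, far beyond $d(v)^2$. So the real work is recognizing the implicit assumption that, in the iterated-flipping local search, both $\calC$ and $C$ always respect the preclustering produced by \cref{thm:preclustering}, which forces every legal cluster containing a vertex $u$ to have size $O(d(u)/\eps^2)$. Once this size bound is applied to $C$ and to each $C'(u)$, the two-term pair-counting above collapses to $O(d(v)^2)$ essentially for free.
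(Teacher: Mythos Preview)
Your proof is correct and follows essentially the same approach as the paper: both arguments rest on the (implicit) preclustering assumption to get $|C|=O(d(v))$ and $|C'(u)|=O(d(v))$ for every $u\in C$ via degree-similarity and the admissible-degree bound, then bound the number of pairs whose same-cluster status flips. The only cosmetic difference is bookkeeping---the paper moves the vertices of $C$ into the new cluster one at a time (each move touching $O(|C'(u)|+|C|)=O(d(v))$ pairs, with $|C|=O(d(v))$ moves), whereas you count the flipped pairs in one shot as $\binom{|C|}{2}+\sum_{u\in C}|C'(u)\setminus C|$; the resulting $O(d(v)^2)$ bound and the underlying reasons are identical.
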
%

\begin{proof}
    By the degree-similarity (Theorem \ref{thm:preclustering}(3)), all vertices in $C$ have degrees of $O(d(v))$. Then, for any $u\in C$, $|C(u)|=O(d(v))$, so moving $u$ from $C(u)$ to $C$ will only change the cost by $O(d(v))$. On the other hand, $|C|=O(d(v))$. So, moving these vertices to $C$ one by one, the total change is $O(d(v)^2)$.
\end{proof}

\subsection{Finding local optimum via local search}
\label{sec:local-search-implementation}

In this section, we show how to implement our local search algorithm in $O(|D|)$ time. Recall that our goal is to find an $O(\eps)$-good local optimum.

First, note that in the optimal clustering $\calC^*$, some clusters $C_i^*$ might be a singleton cluster. These singleton clusters have some different behavior with non-singleton clusters, e.g. \cref{lem:hit,thm:preclustering}(6). Although intuitively detecting singletons is an easier thing as we just need to check for each vertex whether it should leave its cluster, we do need to deal with it in the different way with non-single clusters. For simplicity, we will present our algorithm and analysis assuming there are no singleton vertices in the optimal clustering, and afterwards show how to easily incorporate singletons into our consideration.

Consider the input $(\mathcal{C},D)$ where each cluster in $\mathcal{C}$ is a singleton or an atom. For any cluster $C\in\mathcal{C}$, if the number of edges between $C$ and $V\setminus C$ is less than $\frac13|C|$, then no other vertices can be admissible to vertices in $C$ and hence we don't need to touch $C$ in the local search. So, in the remaining part of this section, we will ignore all clusters $C$ which has less than $\frac13|C|$ incident edges in $D$, and only consider the remaining part of the graph.
Let $n_D$ denote the number of remaining vertices.
Note that $n_D=O(|D|)$ and they can be easily found in $O(|D|)$ time. 

\begin{algorithm}[H]
    \caption{Local Search($\mathcal{C}, D$)}
    \label{alg:local-search-new}
	\begin{algorithmic}[1]
        \State $(\mathcal{C}', D') \gets (\mathcal{C}, D)$.
        \State Fix some sufficiently small constant $\eps'\ll\eps$.
        \While{$\mathcal{C}'$ improves by at least $\eps'|D|$}
            \For{$i=1,\dots,|D|$}
                \State Sample a pivot $p$ such that each vertex $v$ is sampled with probability $\Theta\left(\frac{1}{|D|\log^2 d(v)}\right)$.
                \State Compute the admissible neighborhood of $p$ with $t = \Theta(\log d(p))$ (\cref{thm:preclustering}).
                \State Generate a cluster $C$ with $\gamma=\Theta(d(p)\log^2 d(p))$ (Lemma \ref{lem:est-imp}). 
                \label{line:generate}
                \State Improve $\calC'$ to $\calC'+C$ if the estimated improvement is positive.
            \EndFor
            \State Compute $|D'|$ from $\mathcal{C}'$ and $(\mathcal{C}, D)$. If $|D'|>|D|$, stop and return $(\mathcal{C}, D)$.
        \EndWhile
        \State Return $(\mathcal{C}', D')$.
	\end{algorithmic}
\end{algorithm}

As said before, we shall increase the probability of hitting large clusters. We do it in a smooth way that a vertex $v$ with degree $d(v)$ is sampled with probability $\Theta\left(\frac{1}{|D|\log^2 d(v)}\right)$, increased from $\Theta\left(\frac{1}{|D| d(v)}\right)$. 
We then prove that this gives us an exponentially small error probability. 

\begin{lemma}
\label{lem:ls-correct}
    When the Local Search algorithm terminates, with probability $p$ where 
    \[
        p=1-\exp\left(-\Omega\left(\frac{\sqrt{|D|}}{\log^3|D|}\right)\right) ~,
    \]
    the final clustering $(\mathcal{C}',D')$ is an $O(\eps)$-good local optimum.
\end{lemma}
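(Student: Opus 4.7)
The plan is to prove the contrapositive via \cref{lem:Delta}: it suffices to show that, with probability $1-\exp(-\Omega(\sqrt{|D|}/\log^3|D|))$, we have $\Delta \le \eps|D|$ at termination. I will argue that if some outer iteration starts with $\Delta > c_0\eps|D|$ for a sufficiently large constant $c_0$, then the actual improvement $I$ in that iteration exceeds $\eps'|D|$ with the target probability, contradicting the terminating criterion. Since the outer loop runs at most $O(1/\eps')$ times, a union bound over iterations preserves the failure bound.

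First I would lower-bound the expected improvement $E[I]$ in one iteration under the hypothesis $\Delta > c_0\eps|D|$. By \cref{lem:hit}, each trial hits a non-singleton optimal cluster $C_i^*$ with probability $\Omega(|C_i^*|/(|D|\log^2|C_i^*|))$, and by \cref{lem:gen-cluster,lem:est-imp} the generated cluster gives real improvement at least $\Delta_i$ with constant probability (the $\exp(-\gamma)$ failure in \cref{lem:est-imp} is folded into the final union bound). Summing over the $|D|$ trials and applying Cauchy--Schwarz in the form $\Delta^2 = \bigl(\sum_i \Delta_i\bigr)^2 \le \bigl(\sum_i \Delta_i/|C_i^*|\bigr)\bigl(\sum_i |C_i^*|\Delta_i\bigr)$, together with $\Delta_i/|C_i^*|=O(|C_i^*|)$ from \cref{lem:low-cost-cluster} and $\sum_i |C_i^*|=O(n_D)=O(|D|)$, yields
\[
E[I] \;\ge\; \Omega\!\left(\sum_i \frac{|C_i^*|\Delta_i}{\log^2|C_i^*|}\right) \;\ge\; \Omega\!\left(\frac{\Delta^2}{|D|\log^2|D|}\right),
\]
which, for the appropriate $\eps$--$\eps'$ relation, is comfortably above $\eps'|D|$.

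Next I would establish concentration of $I$ around $E[I]$. Bucket the optimal clusters by size: call $C_i^*$ \emph{big} if $|C_i^*| \ge \tau := \sqrt{|D|}/\log|D|$, and \emph{small} otherwise. For each big cluster, the hit count across the $|D|$ independent trials is a sum of Bernoulli variables with mean $\Omega(|C_i^*|/\log^2|C_i^*|) \ge \Omega(\sqrt{|D|}/\log^3|D|)$, so Chernoff gives concentration with failure probability $\exp(-\Omega(\sqrt{|D|}/\log^3|D|))$. Combined with an independent Chernoff over the constant-probability successful cluster generations per hit, and a union bound over the $O(|D|/\tau^2)=O(\log^2|D|)$ big clusters, the total big-cluster contribution is at least half its expectation with the target probability. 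For the small-cluster aggregate contribution, each trial's contribution is bounded by $M = O(\tau^2) = O(|D|/\log^2|D|)$, and I would apply a Bernstein-type concentration with variance bound $\sigma^2 \le M \cdot E[I_{\mathrm{small}}]$ so that the resulting $\exp(-\Omega(E[I_{\mathrm{small}}]/M))$ deviation probability matches the target exponent.

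The hard part will be the concentration argument for the small-cluster aggregate. Per-cluster concentration fails because each small cluster's expected hit count is $o(1)$; one must concentrate the sum of per-trial improvements directly. Moreover, since the current clustering updates after each trial, the per-trial contributions are not genuinely independent, so the clean Chernoff/Bernstein reasoning must be replaced by an Azuma/Freedman-type martingale concentration whose bounded-difference constant is $M$ and whose predictable quadratic variation is controlled by the same $M \cdot E[I_{\mathrm{small}}]$ estimate. The numerical balancing between threshold $\tau=\sqrt{|D|}/\log|D|$, per-trial maximum $O(|D|/\log^2|D|)$, and target exponent $\sqrt{|D|}/\log^3|D|$ is exactly what produces the failure probability stated in the lemma, and verifying that the martingale's quadratic variation behaves as if the trials were independent is the technical core of the argument.
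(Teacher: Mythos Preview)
Your plan has the right skeleton but two of the quantitative steps do not close.

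\textbf{Expectation bound.} Your Cauchy--Schwarz detour yields $E[I]\ge\Omega(\Delta^2/(|D|\log^2|D|))$, which for $\Delta=\Theta(\eps|D|)$ gives only $\Omega(\eps^2|D|/\log^2|D|)$. This is \emph{not} $\ge \eps'|D|$ for the constant $\eps'$ fixed in the algorithm, so the while-loop can terminate without contradiction. The fix is trivial---drop Cauchy--Schwarz and use $|C_i^*|/\log^2|C_i^*|=\Omega(1)$ for every non-singleton optimum cluster to get $E[I]\ge\Omega(\Delta)=\Omega(\eps|D|)$---but as written the step fails.

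\textbf{Concentration.} The two-bucket split at $\tau=\sqrt{|D|}/\log|D|$ does not reach the exponent $\sqrt{|D|}/\log^3|D|$. On the small side you have $M=O(\tau^2)=O(|D|/\log^2|D|)$ and $E[I_{\mathrm{small}}]\le |D|$, so Bernstein/Freedman delivers at best $\exp(-\Omega(E[I_{\mathrm{small}}]/M))=\exp(-\Omega(\log^2|D|))$, which is far weaker than the lemma claims. Reoptimising $\tau$ to balance big and small gives $\tau\approx|D|^{1/3}$ and exponent $\approx|D|^{1/3}$, still short. The numerical balancing you assert simply does not hold: you would need $E[I_{\mathrm{small}}]=\Omega(|D|^{3/2}/\mathrm{polylog})$, which is impossible.

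The paper avoids this by partitioning into $\log n_D$ geometric size levels and, crucially, using for level $i$ the per-step cap $y_k=O(\min\{2^{2i},|D|\})=O(2^i\sqrt{|D|})$. The second term comes from the fact that any single improvement is bounded by the current cost $\le|D|$; it is this \emph{double} bound, taken level by level, that turns the ratio $E[X]/y_k$ into $\Omega(\sqrt{|D|}/\log^3|D|)$ for every level with nontrivial mass. The level bookkeeping (inserting $k_i^{(t)}$ elements into list $L_i$ each round) is exactly what replaces your martingale sketch and handles the time-varying $\Delta_i^{(t)}$. A binary big/small split cannot reproduce this: you lose precisely the $2^i$ cancellation between $E[X]$ and $y_k$.
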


\begin{proof}
    Consider each iteration of the while-loop. We are going to bound the probability that $\calC'$ is not an $O(\eps)$-good local optimum in the beginning and we have not improved it by at least $\eps'|D|$.
    Suppose $\calC'$ is not an $O(\eps)$-good local optimum in the beginning, then $\Delta \ge \eps|D|$ in the beginning by Lemma \ref{lem:Delta}.

    When we compute the admissible neighborhood of $p$ by \cref{thm:preclustering}(9), we set $t=\Theta(\log d(p))$. This is more than enough to get a constant probability of success.
    When we compute the estimated improvements by Lemma \ref{lem:est-imp}, we set $\gamma=\Theta(d(p)\log^2 d(p))$. With constant probability, our estimate is large enough and shows the desired improvement. %

    We then show that the total positive improvement we get is large enough. 
    As we are analyzing the behavior of the algorithm over multiple steps, we have to be careful with our notation $\Delta$ and $\Delta_i$. 
    Let $\calC^{(t)}$ be the current clustering before step $t$. 
    We will use $\Delta^{(t)}$ and $\Delta_i^{(t)}$ for the total potential improvement and the potential improvement from $C^*_i$ at step $t$, respectively. Formally speaking, 
    \[
        \Delta_i^{(t)} = \max(\cost(\calC^{(t)}) - \cost(\calC^{(t)} + C^*_i) - \eps^3 d^{\adm}(C^*_i), 0),~ \Delta^{(t)} = \sum_i \Delta_i^{(t)}. 
    \]
    
    Imagine we have $\log {n_D}$ lists $L_i$ ($1\le i\le \log {n_D}$), where $L_i$ is the list of all improvements over the time when we hit a cluster of size in $(2^{i - 1}, 2^i]$. 
    We say a cluster is in level-$i$ if its size is in $(2^{i - 1}, 2^i]$. 
    For all improvement in the same list, they have the similar upper bound, therefore we can get an exponential concentration bound via Chernoff-like inequalities.
    When the number of improvement within this list is large enough, we get a desired bound. 

    Initially all of $L_i$'s are empty. 
    In each of the $|D|$ rounds, we do one of the following (based on the results of this round):    
    \begin{itemize}
        \item If the total positive improvement over previous rounds is at least $\eps'|D|$ , do nothing.
        \item Otherwise, we know $\Delta^{(t)} \ge \eps|D|-\eps'|D| = (\eps-\eps')|D|$, since it is not an $O(\eps-\eps')$-good clustering.
        We partition $\Delta_i$'s by logarithmic sizes of the clusters as follows.
        For each level $i$, we will use $\Delta_{(i)}^{(t)}$ to denote the total improvement for all clusters of $\opt$ in this level. %
        By definition,
        \[
            \Delta_{(i)}^{(t)}=\sum_{j:C_j^*\text{ is level-}i} \Delta_j^{(t)}. 
        \]
        Let $\Delta'^{(t)}=\frac{\Delta^{(t)}}{2\log {n_D}}$ and $k_i^{(t)}=\left\lfloor\frac{\Delta_{(i)}^{(t)}}{\Delta'^{(t)}}\right\rfloor$. Since $\sum_{i=1}^{\log {n_D}}\Delta_{(i)}^{(t)}=\Delta^{(t)}$, we have
        \begin{equation}
        \label{eqn:sum-k}
            \log {n_D} \le \sum_{i=1}^{\log {n_D}}k_i^{(t)} \le 2\log {n_D} ~.
        \end{equation}
        Intuitively, $k_i^{(t)}$ measures how much we would expect to improve at time $t$ within level $i$ clusters. 
        Therefore, for every $i \in \{1, \dots, \log {n_D}\}$, if $k_i^{(t)}>0$, we insert a random variable to $L_i$, which is the (actual) positive improvement we get in this round by hitting level-$i$ clusters (which means it is $0$ if we do not hit level-$i$ clusters or if we get a negative improvement). We call it a normal element. Then we insert $k_i^{(t)}-1$ zeros to $L_i$, make sure that every $L_i$'s has similar expected average value. We call them zero elements. Note that we insert $\log {n_D}$ to $2\log {n_D}$ elements to the lists by \cref{eqn:sum-k}. 
    \end{itemize}
    In the end, if a list has more than $|D|\log {n_D}$ elements, we truncate it so that it has exactly $|D|\log {n_D}$ elements. If a list has less than $|D|\log {n_D}$ elements, we insert special elements with infinite weights until it has $|D|\log {n_D}$ elements. Then, for each $L_i$, we split it into $2\log {n_D}$ segments $L_{i,j}$ ($1\le j\le 2\log {n_D}$), each of length $|D|/2$.

    \begin{claim}
        If the sum of the elements in each $L_{i,j}$ is at least $\frac{\eps'|D|}{\log {n_D}}$, then the total positive improvement to $\calC'$ is at least $\eps'|D|$.
    \end{claim}

    \begin{proof}
        If there are at least $\log {n_D}$ lists which do not contain any special elements, then the sum of the elements in these lists is a lower bound of the total positive improvement we get, and it is at least $\eps'|D|$. 
        Otherwise, we will have at most $\log {n_D}$ lists without any special elements. 
        Since for each $i$, there is at most one list $L_{i,j}$ containing both normal/zero elements and special elements, there will be at most $2\log {n_D}$ lists containing any normal/zero segments. 
        Therefore, there are less than $|D|\log {n_D}$ normal/zero elements in total, which means we have reached the first bullet above, then the total positive improvement is again at least $\eps'|D|$.
    \end{proof}

    Then we show that for each list, the sum is large enough with high probability.

    \begin{claim}
        The sum of the elements in each $L_{i,j}$ is at least $\frac{\eps'|D|}{\log {n_D}}$ with probability $p$ where 
        \[
            p = 1 - \exp\left(-\Omega\left(\frac{\sqrt{|D|}}{\log^3|D|}\right)\right).
        \]
    \end{claim}

    \begin{proof}
        Consider the $k$-th element $X_k$ in $L_{i,j}$. Suppose $X_k$ is a normal element followed by $l$ zero elements. Although the distribution of $X_k$ depends on previous elements in $L_{i,j}$, given any realization of them, we know
        \[\E[X_k]\ge\Omega\left(\frac{2^i}{i^2|D|}\right)\cdot(l+1)\Delta'^{(t)}=\Omega\left(\frac{2^i}{i^2}\cdot\frac{(l+1)}{\log {n_D}}\right),\]
        since:
        \begin{itemize}
            \item The total potential improvement from level-$i$ clusters in that round is at least $(l+1)\Delta'^{(t)}$;
            \item Each level-$i$ cluster will be hit with probability $\Omega\left(\frac{2^i}{i^2|D|}\right)$ (Lemma~\ref{lem:hit});
            \item The admissible neighborhood is correctly computed with constant probability;
            \item The estimated improvement is as desired with constant probability.
        \end{itemize}
        Therefore, if we look at a normal element $X_k$ and all following zero elements together, their average value in expectation is $\Omega\left(\frac{2^i}{i^2\log {n_D}}\right)$. %
        Without loss of generality, we treat special elements as $\Theta\left(\frac{2^i\Delta'^{(t)}}{i^2|D|}\right)$, which only decreases the sum.
        Then, since $L_{i,j}$ can only have at most $2\log {n_D}$ leading zero elements (coming from the end of $L_{i,j-1}$), we have
        \begin{equation}
        \label{eqn:EX}
            \E\left[X\right] \ge (|D|/2-2\log {n_D})\cdot\Omega\left(\frac{2^i}{i^2}\cdot\frac{1}{\log {n_D}}\right)=\Omega\left(\frac{2^i}{i^2}\cdot\frac{|D|}{\log {n_D}}\right)
        \end{equation}
        where $X=\sum_{k=1}^{|D|/2} X_k$.
        We will then look at bounding $X_k\le y_k$ for some value $y_k$. Note that a level-$i$ cluster in $\calC^*$ only contains vertices of degrees $O(2^i)$ by \cref{thm:preclustering}(6). Then each $y_k$ is at most $O(2^{2i})$ by \cref{lem:low-cost-cluster}, and it is also trivially bounded by $\Delta=O(|D|)$, so we have 
        \begin{equation}
        \label{eqn:Xk}
            y_k=O\left(\min\{2^{2i},|D|\}\right)=O\left(2^i\sqrt{|D|}\right).
        \end{equation}
        Comparing \cref{eqn:EX,eqn:Xk}, we can get
        \[y_k=O\left(\frac{\log^3|D|}{\sqrt{|D|}}\right)\cdot\E[X].\]
        Then the claim follows from a Chernoff bound.
    \end{proof}

    Combining these claims, we know that with probability $1-\exp\left(-\Omega\left(\sqrt{|D|}/\log^3|D|\right)\right)$, the total positive improvement to $\calC'$ is at least $\eps'|D|$. On the other hand, by \cref{lem:low-cost-cluster}, the negative improvement of hitting a level-$i$ cluster in $\calC^*$ is also bounded by $O(2^{2i})$. Then, by \cref{lem:est-imp}, the expected negative improvement is $O(2^{2i}e^{-2^i i^2})$. Following an analogous argument to above, the total negative improvement we get can be bounded by $\eps''|D|$ for an arbitrarily small constant $\eps''$, with probability $1-\exp\left(-\Omega\left(\sqrt{|D|}/\log^3|D|\right)\right)$. Setting $\eps''\ll\eps'$, the total improvement would be $(\eps'-\eps'')|D|=\Omega(|D|)$. The lemma then follows from a union bound over all iterations, as the number of iterations is a constant.
\end{proof}

Then we prove the time complexity of our algorithm.

\begin{lemma}
\label{lem:ls-time}
    The Local Search algorithm runs in $O(|D|)$ time in expectation and also with probability $1-\exp\left(-\Omega\left(\frac{\sqrt{|D|}}{\log^3|D|}\right)\right)$.
\end{lemma}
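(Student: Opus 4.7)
The plan is to bound the running time of a single outer iteration of the while-loop in $O(|D|)$ both in expectation and with the claimed concentration, then observe that the outer loop executes only $O(1)$ times. Because we abort whenever $|D'|>|D|$ and each non-terminating outer iteration improves $\cost(\calC')$ by at least $\eps'|D|$, the cost stays in $[0,|D|]$ throughout, so there are at most $O(1/\eps')=O(1)$ outer iterations and a union bound over them preserves any single-iteration high-probability guarantee.

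For the expected work of a single outer iteration, each of the $|D|$ inner rounds samples a pivot $p$ with probability $\Theta(1/(|D|\log^2 d(p)))$ and performs $O(d(p)\log^2 d(p))$ work. The admissible neighborhoods of atoms have been pre-listed in $O(|D|)$ total time by \cref{lem:admissible-to-atom}; singleton neighborhoods are computed in $O(d(p)\log^2 d(p))$ time via \cref{thm:preclustering}(9) with $t=\Theta(\log d(p))$; and the cluster generation costs another $O(\gamma)=O(d(p)\log^2 d(p))$ via \cref{lem:est-imp}. The expected work per round is therefore
\[
    \sum_v \Theta\!\left(\frac{1}{|D|\log^2 d(v)}\right)\cdot O\!\left(d(v)\log^2 d(v)\right) = O\!\left(\frac{1}{|D|}\sum_v d(v)\right).
\]
Singletons contribute $O(\sum_v d_D(v)/|D|)=O(1)$ because $d(v)=d_D(v)$ for singletons of $\mathcal K$; the atom contribution is $O(\sum_K d_D(K)/|D|)=O(1)$ using \cref{thm:preclustering}(3) (which yields $|K|\cdot\dadm(K)=O(d_D(K))$) together with the preclustering-step restriction to clusters with $d_D(K)\ge |K|/3$. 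Multiplying by the $|D|$ rounds and the $O(1)$ outer iterations gives the $O(|D|)$ expectation.

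For the high-probability bound, we mirror the level-based concentration argument in the proof of \cref{lem:ls-correct}, but apply it to per-round running time instead of improvement. Bucket each round by the level $i$ of its pivot, $d(p)\in(2^{i-1},2^i]$, and store the per-round work in a list $L_i$. A level-$i$ round contributes at most $O(2^i i^2)=O(2^i\log^2|D|)\le O(2^i\sqrt{|D|})$ once $|D|$ is larger than a constant, which is exactly the $y_k=O(\min\{2^{2i},|D|\})=O(2^i\sqrt{|D|})$ truncation that drives the correctness proof. A Chernoff bound per bucket (with the list-padding trick of \cref{lem:ls-correct} to ensure uniform lengths) gives deviation probability $\exp(-\Omega(\mu_i/(2^i\sqrt{|D|})))$ for each level; on the levels whose expected contribution $\mu_i$ is large enough to matter, this evaluates to $\exp(-\Omega(\sqrt{|D|}/\log^3|D|))$. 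For the remaining low-expectation levels, their contributions sum to a negligible fraction of $|D|$ by a direct Markov-type argument, since such levels contain only $O(1)$ candidate pivots (a consequence of $\sum_K d_D(K)=O(|D|)$ and $d_D(K)\ge|K|/3$). A union bound over the $O(\log|D|)$ levels and the $O(1)$ outer iterations preserves the overall bound.

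The main obstacle is the concentration step. The worst-case per-sample work is $O(|D|\log^2|D|)$, which is far too large for a naive Chernoff bound and would yield only $\exp(-\Omega(1/\log^2|D|))$. The essential ingredient is the level-wise $O(2^i\sqrt{|D|})$ truncation: level $i$ supports at most that much work per round, and once this replaces the trivial cap inside Chernoff, the ratio $\mu_i/\text{cap}_i$ reaches $\Omega(\sqrt{|D|}/\log^3|D|)$ on the non-negligible levels, exactly matching the $y_k$ identity of \cref{lem:ls-correct}. A secondary subtlety is handling the highest levels, where $2^i$ is close to $|D|$ and candidate pivots are scarce; these are absorbed by observing that their expected contribution is small and applying a Markov bound rather than Chernoff.
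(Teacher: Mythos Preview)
Your expectation argument matches the paper's. The concentration argument, however, has a real gap at the top of the degree spectrum that the level decomposition cannot repair.

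Take a single singleton pivot $v$ with $d(v)=d_D(v)=\Theta(|D|)$ (for instance the center of a star after cleaning). Its per-round sampling probability is $\Theta(1/(|D|\log^2|D|))$ and its work when sampled is $\Theta(|D|\log^2|D|)$, so across the $|D|$ inner rounds it is sampled at least once with probability $\Theta(1/\log^2|D|)$, and that single round already exceeds any $O(|D|)$ time budget. No Markov-type bound turns a $\Theta(1/\log^2|D|)$ event into an $\exp(-\Omega(\sqrt{|D|}/\log^3|D|))$ tail, and your premise that the high levels have ``small expected contribution'' is false here too: this one vertex already contributes $\Theta(|D|)$ in expectation. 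Your level-$i$ cap is $O(2^i i^2)$, which at the top level is $\Theta(|D|\log^2|D|)$, so bucketing by level does not change the picture; the ratio $\mu_i/\text{cap}_i$ you need is simply not attained there.

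The paper does not close this with a sharper probabilistic estimate. It instead observes that one \emph{only needs} to run the cluster-generation step when the pivot is a singleton with $d(v)=O(\sqrt{|D|})$ or an atom $K$ with $d^{\adm}(K)=O(\sqrt{|D|})$, and skips the step otherwise. With this truncation every round costs at most $O(\sqrt{|D|}\log^2|D|)$, and a single Chernoff bound on the sum of $|D|$ rounds with mean $O(|D|)$ gives the claimed tail directly---no level decomposition is needed. The truncation is harmless for the correctness proof: a non-singleton OPT cluster hit from a singleton pivot consists entirely of singletons whose $d_D$-degrees are each $\Omega(|C_i^*|)$, so $\sum d_D\le 2|D|$ forces $|C_i^*|=O(\sqrt{|D|})$; and for atom pivots, $d^{\adm}(K)\le 3d_D(K)/|K|$ with $|K|=\Omega(|C_i^*|)$ yields the same bound.
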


\begin{proof}
    Since the initial cost of our clustering is $|D|$ and in each iteration of the while-loop, the cost will be improved by $\Omega(|D|)$, the number of iterations is bounded by a constant. Consider each iteration:
    \begin{itemize}
        \item There are $|D|$ local search rounds. In each round, each singleton $v$ will be sampled with probability $\Theta\left(\frac{1}{|D|\log^2 d(v)}\right)$.
        If it is sampled, we spend $O(d(v)\log^2 d(v))$ time to compute the admissible neighborhood by \cref{thm:preclustering}(9), and spend $O(d(v)\log^2 d(v))$ time to generate the cluster by \cref{lem:est-imp}. Also, for a singleton $v$, we have $d(v)=d_D(v)$.

        Each atom $K$ will be sampled with probability $\Theta\left(\frac{|K|}{|D|\log^2|K|}\right)$ since any $v\in K$ satisfies $d(v)=\Theta(K)$.
        If it is sampled, we need to spend $O(d^{adm}(K))$ time to compute the admissible neighborhood by \cref{thm:preclustering}(7), and spend $O(d^{adm}(K)\log^2|K|)$ time to generate the cluster by \cref{lem:est-imp}.
        Note that $d^{adm}(K)=\frac{d_D(K)}{|K|}$ by \cref{thm:preclustering}(3).
            
        The expected time complexity is 
        \begin{align*}
            & \sum_{\text{singleton }v} \Theta\left(\frac{1}{|D|\log^2 d(v)}\right)\cdot O(d(v)\log^2 d(v)) \\ & + \sum_{\text{atom }K} \Theta\left(\frac{|K|}{|D|\log^2|K|}\right)\cdot O(d^{adm}(K)\log^2|K|) \\
            =\ & \sum_{\text{singleton }v} O\left(\frac{d_D(v)}{|D|}\right) + \sum_{\text{atom }K} O\left(\frac{d_D(K)}{|D|}\right) \\
            =\ & O\left(1\right).
        \end{align*}

        So the expected time complexity for $|D|$ rounds is $O(|D|)$.
        
        \item Since $\mathcal{C}$ is a collection of atoms and singletons which will never be broken in $\calC'$, it's straight-forward to compute $|D'|$ (and also $D'$ if $|D'|=O(|D|)$) from $\mathcal{C}'$ and $(\mathcal{C}, D)$ in $O(|D|)$ time. %
    \end{itemize} 
    So the total time complexity is $O(|D|)$ in expectation. For the high-probability bound, notice that:
    \begin{itemize}
        \item When we sample a singleton $v$ as the pivot, we only need to try to generate the cluster if $d(v)=O(\sqrt{|D|})$, so the time complexity in this single step is at most $O(\sqrt{|D|}\log^2|D|)$.
        \item When we sample (a vertex in) an atom $K$ as the pivot, we only need to try to generate the cluster if $d^{adm}(K)=O(\sqrt{|D|})$, so the time complexity in this single step is at most $O(\sqrt{|D|}\log^2|D|)$.
    \end{itemize}
    So the time complexity in each round can be upper bounded by $O(\sqrt{|D|}\log^2|D|)$, then high-probability claim follows from a Chernoff bound.
\end{proof}

Now let's consider how to detect singleton clusters. We make the following modification to our algorithm: In each round, with probability $\frac{1}{2}$, we do the original thing stated in \cref{alg:local-search-new}; Otherwise, we sample each singleton $v$ with probability $\frac{d(v)}{|D|\log^2d(v)}$, and spend $O(\log^2 d(v))$ time to estimate whether we should make it a singleton cluster. It's clear that inserting a singleton cluster $\{v\}$ will only change the cost by $O(d(v))$, so in the proof of \cref{lem:ls-correct}, we can define its level by its degree (instead of its size), then the same analysis also works with singleton clusters. For the time complexity, although the sampling probability increases from $\frac{1}{|D|\log^2d(v)}$ to $\frac{d(v)}{|D|\log^2d(v)}$, the cost when it is sampled decreases from $O(d(v)\log^2 d(v))$ to $O(\log^2 d(v))$, so the same analysis also works.

Putting things together, we can prove our main theorem for local search.

\begin{theorem}
    We can find an $O(\eps)$-good local optimum with probability $p$ in worst-case $O(|D|)$ time, where $p = 1 - \exp\left(-\Omega\left(\frac{\sqrt{|D|}}{\log^3|D|}\right)\right)$. 
\end{theorem}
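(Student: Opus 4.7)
The statement follows by combining the correctness bound of \cref{lem:ls-correct} with the running-time bound of \cref{lem:ls-time}, after a standard truncation trick to convert the high-probability time bound into a worst-case one. My plan is as follows.

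First, I would observe that \cref{lem:ls-correct} already gives the desired failure probability $\exp(-\Omega(\sqrt{|D|}/\log^3|D|))$ for the guarantee that the output is an $O(\eps)$-good local optimum, provided the algorithm is allowed to terminate normally. Likewise \cref{lem:ls-time} shows the algorithm terminates within $O(|D|)$ time both in expectation and with probability $1-\exp(-\Omega(\sqrt{|D|}/\log^3|D|))$. The singleton modification described just before the statement (flipping a fair coin each round between the standard pivot and a singleton-detection step that samples each singleton $v$ with probability $d(v)/(|D|\log^2 d(v))$ and spends $O(\log^2 d(v))$ time) was already argued to leave both analyses intact, because the level-based Chernoff argument in \cref{lem:ls-correct} extends to singleton clusters once levels are defined by $d(v)$, and the sampling probability/work product is unchanged in \cref{lem:ls-time}.

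To obtain the \emph{worst-case} $O(|D|)$ bound, I would run \cref{alg:local-search-new} against a deterministic counter: choose an absolute constant $K$ so that \cref{lem:ls-time} gives running time at most $K|D|$ with probability $1-\exp(-\Omega(\sqrt{|D|}/\log^3|D|))$. If the execution ever exceeds $K|D|$ elementary operations, abort and return the input cluster representation $(\mathcal{C},D)$, which is a valid (though non-improving) output. This truncation is always safe, since the algorithm already maintains $(\mathcal{C},D)$ as the fallback solution and performs the $|D'|\leq |D|$ check. The only cost of truncation is that we must add the event ``algorithm was cut off'' to the failure mode.

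Finally, by a union bound over the two bad events — (i) the local search finishes normally but the result is not an $O(\eps)$-good local optimum, and (ii) the local search is truncated — the total failure probability is at most $2\exp(-\Omega(\sqrt{|D|}/\log^3|D|))=\exp(-\Omega(\sqrt{|D|}/\log^3|D|))$. This yields the stated worst-case $O(|D|)$ running time together with the exponentially small failure probability.

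The only non-routine step is making sure the level-based concentration argument used in \cref{lem:ls-correct} continues to work both for the singleton-detection branch and after truncation. For the singleton branch this is immediate from the symmetry between sampling a singleton with probability proportional to $d(v)^{-1}\log^{-2} d(v)$ and its contribution $O(d(v))$ to the potential; for truncation it is immediate because truncation only happens on the rare event already accounted for. No new probabilistic inequality is needed.
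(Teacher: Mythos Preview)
Your proposal is correct and matches the paper's own proof essentially verbatim: run \cref{alg:local-search-new} with a hard $\Theta(|D|)$ time cutoff (falling back to $(\mathcal C,D)$ if exceeded), then union-bound the failure events from \cref{lem:ls-correct} and \cref{lem:ls-time}. The additional remarks you make about the singleton branch and why truncation does not interfere with the concentration argument are fine but not needed, since those points are already absorbed into the statements of the two lemmas.
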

\begin{proof}
    We run the Local Search algorithm, and directly terminate with the old clustering $(\calC,D)$ if we reach a time limit of $\Theta(|D|)$. By an union bound over \cref{lem:ls-correct,lem:ls-time}, we succeed with probability $1-\exp\left(-\Omega\left(\frac{\sqrt{|D|}}{\log^3|D|}\right)\right)$.
\end{proof}

Combining with \cref{thm:iter-ls,thm:dynamic}, we can get a $(2-2/13+O(\eps k))$-approximate dynamic algorithm with constant update time and with exponentially high probability, where $k$ is the approximation factor of the clustering from the last epoch.

The only remaining thing to prove \cref{thm:local-search} is to remove the factor $k$.
It can be removed by repeating the algorithm (including the preclustering part) if we significantly improve our solution. 
That is, for some small constant $\delta>\eps$, if $|D'|<\delta|D|$, then we repeat the algorithm with input $(\calC',D')$, and to have a higher probability guarantee, we multiply the number of rounds in the new repetition by $(|D|/|D'|)^{0.6}$. Note that the exponent of the error probability simply scales with the number of rounds, so the total error probability is still bounded by $\exp\left(-\Omega\left(\frac{\sqrt{|D|}}{\log^3|D|}\right)\right)$ by a union bound. On the other hand, the time complexity is still $O(D)$ since $\sum_{i=0}^{\infty}\delta^{0.4i}|D|=O(|D|)$.
Finally, the hidden constant in the $\Omega$ notation can be moved to the time complexity by having more rounds, then the error probability can be bounded by $\exp\left(-\frac{\sqrt{|D|}}{\log^3|D|}\right)$ in $O(|D|)$ time, finishing the proof of \cref{thm:local-search}.

\section{Dynamic Cluster LP}
\label{sec:LP}

Recently, the authors of \cite{cao2025fastLP} showed a sublinear time algorithm with an expected approximation ratio of $\pureclusterlpratio$ for correlation clustering. 
They managed to solve the standard cluster LP approximately using a similar local search algorithm as \cite{CLMTYZ24} in sublinear time, and also improved the time of the two rounding algorithms in \cite{cao2024understanding} to sublinear due to the sparsity of the output of their LP solver.

The goal of this section is to implement the algorithm from \cite{cao2025fastLP} in $O(|D|)$ time in cluster representation. 
The original algorithm involves two parts, an approximate cluster LP solver and two rounding algorithms. 
The $O(|D|)$ time implementations of the LP solver and the two rounding algorithms are presented in \cref{subsec:solve-lp} and \cref{subsec:rounding}, respectively.

\subsection{Solving the cluster LP}
\label{subsec:solve-lp}

Given a graph $G = (V, E)$, the cluster LP is the natural relaxation of integer programming of the correlation clustering problem, where $z_S$ indicates whether the set $S$ is chosen as a cluster, and $x_{uv}$ indicates if the two vertices $u, v$ are in different clusters. 
\begin{definition}[cluster LP]
    \label{LP:clusterlp}
    \begin{align*}
        \min \quad & \obj(z) = \sum_{(u,v) \in E}x_{uv} + \sum_{(u,v) \notin E}(1 - x_{uv}) \\
        \text{s.t.} \quad & x_{uv} = 1 - \sum_{S \supseteq \{u, v\}}z_S & \forall (u, v) \in \binom{|V|}{2} \\
        \quad & \sum_{S \ni v}z_S = 1 & \forall v \in V \\
        & z_S \ge 0 & \forall S \subseteq V
    \end{align*}
\end{definition}

We follow the same algorithmic framework to solve the \hyperref[LP:clusterlp]{\color{black}cluster LP} as \cite{cao2025fastLP}. 
Followed by \cref{sec:preclustering}, we will assume a preclustering $(\mathcal{K} = \calC, E^{\adm})$ and the corresponding cluster representation $(\calC, D)$ of the graph are given as input, and the output will be an approximate fractional solution. 
Like \cite{cao2025fastLP}, we first turn the \hyperref[LP:clusterlp]{\color{black}cluster LP} into a covering LP such that the optimal solution remains the same and the objective value only increases by $2|D|$. 
Given the cluster representation $(\calC, D)$ of the graph with respect to the preclustering, 
for any vertex $v \in V$, let $\dc(v)$ be the number of edges in $D$ incident to $v$, and $\dc(C) = \sum_{v \in C} \dc(v)$ for any cluster $C \subseteq V$.
For any cluster $C$, we define the cost of it as
\[
    \cost(C) = \frac{1}{2}|\{(u, v) \in E: u \in C, v \in V \setminus C\}| + |\{(u, v) \notin E: u, v \in C\}| ~.
\]
Then for any clustering $\calC$, $\cost(\calC) = \sum_{C \in \calC}\cost(C)$. 
To transform a cluster LP into an equivalent covering LP, we add a vertex cost $\dc$ for each vertex, and set the new cost for each cluster $C$ as $\covers(C) = \cost(C) + \dc(C)$. 

\begin{definition}[covering cluster LP from \cite{cao2025fastLP}]
    \label{LP:coverclusterlp}
    \begin{align*}
        \min \quad & \obj(z) = \sum_{S \subseteq V}\covers(S) \cdot z_S \\
        \text{s.t.} \quad & \sum_{S \ni v}z_S \ge 1 & \forall v \in V \\
        & z_S \ge 0 & \forall S \subseteq V
    \end{align*}
\end{definition}
For any $(z_S)_{S \subseteq V}$, we denote the list of its non-zero entries as $\supp(z)$. 

The next lemma from \cite{cao2025fastLP} shows how to solve the original cluster LP via the covering cluster LP. 
\begin{lemma}[Lemma 11 from \cite{cao2025fastLP}]
\label{lem:cover-to-cluster}
Let $\epsilon, \covereps > 0$ be sufficiently small constants, and $c = \lceil \frac{1}{\covereps} \rceil$. Given a constant $\tmwu \in \mathbb{N}$ and a solution $z$ to the \hyperref[LP:coverclusterlp]{\color{black}covering cluster LP} where,
\begin{itemize}
    \item $z_S \geq \frac{1}{\tmwu}$ for each $S \in \supp(z)$,
    \item all $S \in \supp(z)$ do not split atoms,
    \item for each vertex $v$ the number of sets $S \in \supp(z)$ with $v \in S$ is at most $\tmwu$.
 \end{itemize}
We can then in $O(n)$ time find a solution $\widetilde z$ to the \hyperref[LP:clusterlp]{\color{black}cluster LP} where $\widetilde z_S \geq \frac{1}{c \tmwu}$ for all $S \in \supp(z)$ and $\covers(\widetilde z) \leq (1 + \covereps) \covers(z)$. 
\end{lemma}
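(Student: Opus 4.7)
The plan is to construct $\widetilde z$ explicitly by trimming the excess coverage at each vertex while discretizing so that no sub-cluster has weight below $1/(c\tmwu)$.

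Core construction. For each vertex $v$ the covering LP gives $\alpha_v:=\sum_{S\ni v}z_S\ge 1$. Fix an arbitrary ordering of the sets containing $v$, stack their weights on $[0,\alpha_v]$, and truncate at height $1$: this amounts to choosing for each incidence $(v,S)$ a retained portion $\beta_{S,v}\in [0,z_S]$ with $\sum_{S\ni v}\beta_{S,v}=1$. For each fixed $S\in\supp(z)$, sort $\{\beta_{S,v}\}_{v\in S}$ so the interval $[0,z_S]$ splits into a chain $S=S^{(1)}\supsetneq S^{(2)}\supsetneq\cdots$, where vertex $v$ is dropped at height $\beta_{S,v}$. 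Setting $\widetilde z_{S^{(j)}}$ equal to the vertical length of the $j$-th piece yields a candidate cluster-LP solution with total coverage exactly $1$ at every vertex.

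Weight lower bound. To ensure $\widetilde z_{S^{(j)}}\ge 1/(c\tmwu)=:\delta$, I would round every $\beta_{S,v}$ to the nearest multiple of $\delta$, absorbing the resulting $O(\delta)$ per-vertex slack into the $(1+\covereps)$ factor by a uniform rescale (with a small singleton-padding step, if needed, to restore exact unit coverage). Because each $z_S\ge 1/\tmwu=c\delta$ by hypothesis (i), each original set contributes at most $c$ distinct sub-clusters, and every surviving $\widetilde z_{S^{(j)}}\ge\delta=1/(c\tmwu)$ as required.

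Feasibility and running time follow directly: property (iii) implies $|\supp(z)|$ and the total number of $(v,S)$ incidences are $O(\tmwu n)=O(n)$, and the discretize-and-trim routine touches each incidence a constant number of times, giving $O(n)$ time. The main obstacle is establishing $\covers(\widetilde z)\le (1+\covereps)\covers(z)$. Each $S^{(j)}\subseteq S$, so non-edge violations inside $S^{(j)}$ are already paid for by $S$; the two sources of cost increase are positive edges of $S$ that now split across different $S^{(j)}$'s, and new $\dc$ contributions incurred by the removed vertices. I would charge both against the trimmed mass $\sum_v(\alpha_v-1)$ via an edge-level double-counting argument: each removed $\delta$-unit of an incidence $(v,S)$ raises the cost by at most $O(\cost(S)/|S|+\dc(v))$, and these telescope to at most $\covereps\cdot\covers(z)$ once the scaling $\delta=\Theta(\covereps/\tmwu)$ is exploited. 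Property (ii), which forbids split atoms in $\supp(z)$, is crucial here: it guarantees that trimming only affects non-atom (singleton) incidences, whose per-unit cost is cleanly bounded by $\dc(v)$ plus an $O(1)$-fraction of $\cost(S)$, preventing an $\Omega(1)$ blow-up that would otherwise arise from dismembering tightly packed atomic pieces.
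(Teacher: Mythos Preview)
The paper does not prove this lemma; it is quoted verbatim as ``Lemma 11 from \cite{cao2025fastLP}'' and used as a black box. So there is no in-paper argument to compare your proposal against.

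On the merits of your sketch, the construction (trim excess coverage vertex by vertex, then slice each $S$ into nested pieces $S=S^{(1)}\supsetneq S^{(2)}\supsetneq\cdots$ according to the retained heights $\beta_{S,v}$) does produce exact unit coverage and, with the $\delta$-discretisation, the weight lower bound and the $O(n)$ running time are fine. The real gap is the cost inequality $\covers(\widetilde z)\le(1+\covereps)\covers(z)$. Your charging claim, that ``each removed $\delta$-unit of an incidence $(v,S)$ raises the cost by at most $O(\cost(S)/|S|+\dc(v))$'', is not justified and is false as stated. Removing a single vertex $v$ from $S$ cuts every positive edge from $v$ into $S\setminus\{v\}$, contributing up to $\tfrac12(|S|-1)$ to $\cost(S\setminus\{v\})$; this can be $\Theta(|S|)$ even when $\cost(S)=0$ and $\dc(v)=0$ (take $S$ a perfect clique that is not declared an atom). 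So the per-unit cost of trimming is not controlled by $\cost(S)/|S|+\dc(v)$ in general, and the telescoping you describe does not yield an $O(\covereps)\cdot\covers(z)$ bound.

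Your attempt to rescue this via property~(ii) (``trimming only affects non-atom (singleton) incidences'') also does not close the gap. It is true that all vertices of an atom $K$ lie in exactly the same collection of support sets and hence share the same $\alpha_v$, so an atom is either kept or dropped as a whole during trimming. But dropping an atom $K$ from $S$ still cuts all positive edges between $K$ and $S\setminus K$, and dropping a singleton $v$ still cuts up to $|S|-1$ positive edges; neither of these is bounded by $\dc(v)$ or by $\cost(S)/|S|$. To make a trimming-based argument go through you would need an additional structural fact tying the number of positive edges cut to quantities already paid for by $\covers(z)$ (for instance, relating the trimmed mass at $v$ to $\dc(v)$ via the preclustering guarantees), and that link is missing from the proposal.
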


To solve the \hyperref[LP:coverclusterlp]{\color{black}covering cluster LP} efficiently, \cite{cao2025fastLP} use the multiplicative-weights-update algorithm, and its pseudocode is given in \cref{alg:mw}.
We will follow this framework of their algorithm, and the only difference from their algorithm is the implementation of line~\ref{line:findPoint}, which is a task similar to the generateCluster procedure in the local search algorithm of \cite{CLMTYZ24}. 

In a multiplicative-weights-update algorithm for solving linear programming, we will maintain a weight $p$ over all the inequalities. 
Specific to \hyperref[LP:coverclusterlp]{\color{black}covering cluster LP}, all the inequalities are associated with vertices, therefore $p: V \to \mathrm{R}^+$ will be a weight for vertices. 
For every $S \subseteq V$, we define $p(S) = \sum_{v \in S}p(v)$ as the weight of the set $S$. 

\begin{algorithm}[htb]
	\caption{MWU algorithm for the \coverClusterLP from \cite{cao2025fastLP}}
	\label{alg:mw}
	\begin{algorithmic}[1]
    \State Initialize the weights $w_v^{(1)} = \dc(v)$ for each vertex $v \in V$.    \For{$t=1,\dots,\tmwu$}
        \State Normalize the weights $p^{(t)} = \frac{w^{(t)}}{\sum_v w_v^{(t)}}$. 
        \State Aggregate all constraints into single constraint: $\sum_{S} p^{(t)}(S) \cdot z_S = \sum_{v} p^{(t)}_v \left (\sum_{S: v \in S} z_S \right) \geq 1$. 
        \State Find the point $z^{(t)} \in \left[0, 1/{\covereps}\right]^{2^V}$ that, \label{line:findPoint}
        \begin{itemize}
            \setlength\itemsep{0.0em}
            \item satisfies the single constraint $\sum_{S} \pt(S) \cdot z^{(t)}_S \geq 1$,
            \item has objective value $\covers(z^{(t)}) \leq (1+5\covereps)~\covers(\opt) + \epsilon|D|$,
            \item does not split atoms (i.e., if $\zt_S > 0$ then $K(v) \subseteq S$ for all vertices $v \in S$), and
            \item has disjoint support (i.e., if $\zt_S, \zt_{T} > 0$ for two distinct $S,T \subseteq V$, then $S \cap T = \emptyset$).
        \end{itemize}
        \State The cost of a constraint corresponds to the margin by which it is satisfied or violated, $m_v^{(t)} = \sum_{S: v \in S} \zt_{S} - 1$.
        \State Update the weights $w^{(t+1)}_v =  w^{(t)}_v e^{-\covereps^3 m^{(t)}_v}$.\label{line:update-rule}
    \EndFor
    \State Let $\hat z$ be the average $\frac{1}{\tmwu} \sum_{t=1}^{\tmwu} \zt$.
    \For{each $v$ with $\sum_{S \supseteq K(v)} \hat z_S \leq 1 - 2\covereps$} \label{line:loop_feasible}
    \State Set the atom entry $\hat z_{K(v)}$ to 1.
    \EndFor
       \State $\zmwu \gets \frac{\hat z}{1-2\covereps}$.
 \State \Return $\zmwu$.
	\end{algorithmic}
\end{algorithm}

We will start with a variation of the polynomial time implementation of line~\ref{line:findPoint} (Algorithm 3 in \cite{cao2025fastLP}), and turn it into a $O(|D|)$ implementation using similar tools we developed for the local search algorithm in the \cref{sec:local-search}. 
The pseudocode for this algorithm is given in \cref{alg:disjointfamily-dynamic}. 
The implementation of line~\ref{line:findGoodCluster} is captured by the following lemma, modified from \cite{cao2025fastLP}. 

\begin{algorithm}[htb]
	\caption{Implementation of line~\ref{line:findPoint} in \cref{alg:mw}, modified from Algorithm 6 in \cite{cao2025fastLP}}
	\label{alg:disjointfamily-dynamic}
	\begin{algorithmic}[1]
        \State Let $R$ be the guess for $\covers(\opt)$ such that $R \in [\covers(\opt) , (1 +\covereps) \covers(\opt))$.
        \State $\hat p \gets p, \mathcal{F} \gets \emptyset, \hat{V} \gets V$
        \For{$v \in V$}
            \State If $\frac{\covers(K(v))}{p(K(v))} \leq (1 + 6\gamma)R$, add $K(v)$ to $\mathcal{F}$ and set $\hat p_w = 0$ for all $w \in K(v)$. \label{line:addlargepnodes}
            \State If $p_v \leq \frac{\covereps \dc(v)}{4\dc(V)}$, set $\hat p_w = 0$ for all $w \in K(v)$. \label{line:removesmallpnodes}
        \EndFor
        \For{$i=1,\dots,\Theta(n)$} \label{line:loop_family-poly}
            \State Sampling a vertex $v$ with probability $\Theta\left(\frac{1}{n\log^2 d(v)}\right)$
            \label{line:sample}
            \State Find a new small ratio cluster $C$ that $K(v) \subseteq C \subseteq \Ncand(v)$, with vertex weights $\hat p$ and target ratio $(1+3\covereps)R$, by \cref{lem:est-imp-LP} with $t = \Theta(\dadm(v)\log^2\dadm(v))$. \label{line:findGoodCluster}
            \State Skip the rest of this loop if we failed on the previous step
            \State Remove all vertices $u$'s in $C$ with $\hatp(u) = 0$
            \State Add $C$ to $\mathcal{F}$, set $\hat p_u$ to $0$ for all $u \in C$ and remove vertices in $C$ from $\hat{V}$. 
            \State Stop if $p(\mathcal{F}) > \covereps$. 
        \EndFor
        \If{$p(\mathcal{F}) > \covereps$}
            \State Set $(z_S)_{S \subseteq V} = \Vec{0}$
            \State For each $S \in \mathcal{F}$, set $z_S = \frac{1}{p(\mathcal{F})}$
            \State \Return $z$
        \Else
            \State \Return failure 
        \EndIf
	\end{algorithmic}
\end{algorithm}

\begin{lemma}[Improving Lemma 49 from \cite{cao2024understanding}]
    \label{lem:est-imp-LP}
    For any $t \ge d^{\adm}\log^2 d^{\adm}(v)$, vertex weights $\hat{p}$, target ratio $R$ and vertex $r \in V$, assume there exists a cluster $C$ such that
    \begin{itemize}
        \item $K(r) \subseteq C \subseteq \Ncand(r)$; 
        \footnote{Recall that $\Ncand(r) = K(r) \cup (\Nadm(r) \setminus (\cup_{K \text{ is an atom}}K))$, as introduced in \cref{sec:local-search}. }
        \item $\covers(C) + \covereps^2 d^{\adm}(C) \le R \cdot \hat{p}(C)$; 
        \item For all $v \in C$, $\covers(\{v\}) > R \cdot \hat{p}(v)$. 
    \end{itemize}
    Then in $O(t)$ time, we can generate a cluster $C$ that $K(r) \subseteq C \subseteq \Ncand(r)$ such that $\covers(C) \le R \cdot \hat{p}(C)$ with constant probability, and with probability at most $\exp(-t)$, we report a cluster that $\covers(C) > R \cdot \hat{p}(C) + \covereps^2 d^{\adm}(C)$. 
\end{lemma}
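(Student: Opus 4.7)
The plan is to mirror the strategy used for \cref{lem:est-imp}, which improved the analogous Lemma 55 of \cite{CLMTYZ24} from $O(\gamma \cdot d^{\adm}(v))$ to $O(\gamma)$ time, and adapt it to the covering-LP setting of Lemma 49 in \cite{cao2024understanding}. Concretely, I would begin by recalling the original polynomial-time implementation that establishes Lemma 49: it iteratively builds a candidate cluster $C' \subseteq \Ncand(r)$ anchored at $K(r)$, and at each step decides whether to include a vertex by estimating its marginal contribution to both $\covers(C')$ and $\hat{p}(C')$. The $d^{\adm}(r)$ factor in the naive runtime is precisely the cost of computing those marginals exactly by scanning a vertex's admissible neighborhood.

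The main idea is to replace exact marginal computations by \emph{edge-level} sampling. The key structural property that makes this valid is \cref{thm:preclustering}(5): every admissible pair consists of $\eps$-degree-similar vertices, so all vertices in any $C \subseteq \Ncand(r)$ have degrees within a factor of $\eps$ of $d(r)$. Hence both the negative contributions to $\covers(C)$ (which are counted per incident edge) and the positive $\hat p$-mass (which, once normalized via the weight updates in \cref{alg:mw}, scales with degree) admit unbiased estimators built from $O(t)$ independently sampled edges, with per-sample weights of the same order. Because the relevant aggregate quantities $\covers(C)$ and $\hat{p}(C)$ are $\Theta(d^{\adm}(r)) \cdot (\text{per-edge scale})$, a sample of size $t \ge d^{\adm}(r)\log^2 d^{\adm}(r)$ is enough to split the contributions into $O(\log d^{\adm}(r))$ levels and apply Hoeffding's inequality within each level, exactly as in the proof of \cref{lem:est-imp}.

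From there, the two probability bounds follow by standard arguments. For the ``true-positive'' side, when a cluster $C$ satisfying the three hypotheses exists, the estimator concentrates around its mean within relative error at most $\covereps^2$ with constant probability, so the algorithm outputs a cluster with $\covers(C) \le R \cdot \hat p(C)$ with constant probability. For the ``false-positive'' side, any output cluster $C$ reported by the algorithm passed an estimated-ratio test; a union bound over the $O(\log d^{\adm}(r))$ levels combined with a per-level Hoeffding bound shows that the gap between the estimate and the true value exceeds $\covereps^2 d^{\adm}(C)$ with probability at most $\exp(-\Omega(t))$, yielding the claimed $\exp(-t)$ bound after absorbing constants into $t$.

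The main obstacle I anticipate is the second bullet: formalizing that the additive slack $\covereps^2 d^{\adm}(C)$ is the right normalization for the Hoeffding tail. One has to be careful that even if the algorithm makes some mildly wrong inclusion/exclusion decisions during the iterative construction, the accumulated error on the final returned cluster is still bounded by $\covereps^2 d^{\adm}(C)$ rather than blowing up. This requires a monotonicity-style invariant on the partial clusters and a careful accounting that each level contributes at most its fair share of the additive error, paralleling the argument in \cite{cao2024understanding} Lemma 49 but redone with the coarser edge-based estimator and the $\log^2 d^{\adm}(r)$ slack in the sample size to pay for the level-by-level union bound.
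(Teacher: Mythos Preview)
Your proposal is correct and follows essentially the same approach as the paper: the paper's proof simply notes that everything except the $\exp(-t)$ false-positive tail is already in Lemma~49 of \cite{cao2024understanding}, and that this tail follows by replacing vertex-level estimation with the edge-sampling trick of \cref{lem:est-imp}. Your level decomposition and monotonicity invariant are more machinery than the paper invokes, but they are a reasonable fleshing-out of what the paper compresses into one sentence.
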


\begin{proof}
    Every part else is already contained in the original lemma, except for the bound of probability that a cluster $C$ with $\covers(C) > R \cdot \hatp(C) + \covereps^2 \dadm(C)$ is generated. 
    To achieve this, we can improve the concentration of estimating $\covers(C)$ using a similar approach to \cref{lem:est-imp}. 
\end{proof}

We need an invariant about the vertex weights $p$ throughout \cref{alg:mw}, formulated by the following lemma. 

\begin{lemma}[Similar to Lemma 34 in \cite{cao2025fastLP}]
    \label{lem:invariant}
    At any step of \cref{alg:mw}, for every vertex $v$, 
    \[
        \frac{\covereps\dc(v)}{16\dc(V)} \le p_v \le \frac{16\dc(v)}{\dc(V)}. 
    \]
\end{lemma}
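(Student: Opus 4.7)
The plan is to prove this invariant by induction on the round $t$ of the MWU loop, mirroring the analysis of Lemma 34 in \cite{cao2025fastLP}. The key observation is that the per-round multiplicative update of each weight $w_v^{(t)}$ is uniformly small in a one-sided sense, and by tracking both the change of $w_v^{(t)}$ and the change of the normalizer $\sum_u w_u^{(t)}$, we can control the ratio $p_v^{(t)} = w_v^{(t)}/\sum_u w_u^{(t)}$ relative to $\dc(v)/\dc(V)$.

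First I would establish the base case $t=1$: since $w_v^{(1)} = \dc(v)$, we have $p_v^{(1)} = \dc(v)/\dc(V)$, which clearly lies in $[\covereps/16, 16] \cdot \dc(v)/\dc(V)$. Next I would bound the margin $m_v^{(t)} = \sum_{S\ni v}\zt_S - 1$: each $\zt_S \in [0, 1/\covereps]$, and the disjoint-support guarantee on $\zt$ (enforced in line~\ref{line:findPoint} of \cref{alg:mw}) ensures that at most one $S \in \supp(\zt)$ contains $v$. Hence $m_v^{(t)} \in [-1,\, 1/\covereps - 1]$, so the per-round ratio
\[
\frac{w_v^{(t+1)}}{w_v^{(t)}} \;=\; e^{-\covereps^3 m_v^{(t)}} \;\in\; \bigl[e^{-\covereps^2(1-\covereps)},\; e^{\covereps^3}\bigr].
\]
Iterating, for every $t \le \tmwu$ we get $w_v^{(t)}/\dc(v) \in [e^{-\covereps^2(1-\covereps)(t-1)},\, e^{\covereps^3(t-1)}]$. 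Since $\sum_u w_u^{(t)}$ is a nonnegative combination of the $w_u^{(t)}$'s, the same two-sided bound holds for $\sum_u w_u^{(t)}/\dc(V)$.

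Combining these two bounds (one in the numerator, the reciprocal in the denominator of $p_v^{(t)}$), one obtains
\[
\frac{p_v^{(t)}}{\dc(v)/\dc(V)} \;\in\; \bigl[e^{-\covereps^2(t-1)},\; e^{\covereps^2(t-1)}\bigr]
\]
up to lower-order factors in $\covereps^3$. Since $\tmwu$ is chosen as an $O(1/\covereps^2)$ constant (as in \cite{cao2025fastLP}) with a small enough leading constant, this gives $p_v^{(t)} \le 16\dc(v)/\dc(V)$ on the upper side, and on the lower side $p_v^{(t)} \ge e^{-O(1)}\dc(v)/\dc(V) \ge (\covereps/16)\dc(v)/\dc(V)$ (the lower bound is the more permissive of the two).

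The main (minor) obstacle is verifying that the accumulated constant from $\tmwu$ rounds fits inside the stated absolute constants $16$ and $\covereps/16$. This reduces to checking the leading constant hidden in $\tmwu = O(1/\covereps^2)$, which is exactly the choice already tuned in \cite{cao2025fastLP} so that the MWU guarantees in \cref{alg:mw} hold; no new numerical work is required on our side beyond confirming that the same $\tmwu$ works in our dynamic setting, where nothing about the update rule, the feasible set $[0, 1/\covereps]^{2^V}$, or the disjoint-support property has changed.
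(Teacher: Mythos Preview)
Your drift-only argument is clean, but it rests entirely on the unverified claim that $\tmwu = O(1/\covereps^2)$ with leading constant at most $\ln 16 \approx 2.77$. You defer this to \cite{cao2025fastLP}, yet the paper's own proof of the lemma explicitly singles out lines~\ref{line:addlargepnodes} and~\ref{line:removesmallpnodes} of \cref{alg:disjointfamily-dynamic} as necessary ingredients, which strongly indicates that the naive drift bound over $\tmwu$ rounds is \emph{not} what carries the argument in \cite{cao2025fastLP}. Indeed, with learning rate $\covereps^3$ and width $1/\covereps$, the natural MWU round count is $\Theta(1/\covereps^3)$ or larger; in that regime your accumulated factor $e^{\covereps^2\tmwu}$ blows up and the upper bound $p_v \le 16\,\dc(v)/\dc(V)$ fails outright.

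The mechanism the paper points to is self-correcting rather than cumulative. Line~\ref{line:removesmallpnodes} ensures that once $p_v$ drops below $\covereps\dc(v)/(4\dc(V))$ the vertex is excluded from coverage, so $m_v^{(t)}=-1$ and $p_v$ stops decreasing; the single round of overshoot costs only a factor $e^{-\covereps^2}$, keeping $p_v$ comfortably above $\covereps\dc(v)/(16\dc(V))$. Line~\ref{line:addlargepnodes} plays the symmetric role on the upper side, forcing coverage of any atom whose $p$-weight has grown large relative to its cover cost. This yields the invariant at every round regardless of how large $\tmwu$ is, whereas your argument works only for a narrow window of $\tmwu$ that you have not pinned down. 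You should either establish the required bound on $\tmwu$ explicitly, or rebuild the argument around the two safeguard lines as the paper does.
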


\begin{proof}
    It is guaranteed by the property of multiplicative update method in \cref{alg:mw} and the special handling of vertices with large or small ratio, i.e., line~\ref{line:addlargepnodes} and \ref{line:removesmallpnodes} in \cref{alg:disjointfamily-dynamic}. 
    The proof is the same as the one of Lemma 34 in \cite{cao2025fastLP}, and we will not repeat it here. 
\end{proof}

Next, we will show that \cref{alg:disjointfamily-dynamic} succeeds with high probability and the expected running time is $O(|D|)$. 
The analysis is similar to the proof of \cref{lem:ls-correct} and \cref{lem:ls-time}. 

\begin{lemma}
    \cref{alg:disjointfamily-dynamic} runs in $O(|D|)$ time and succeeds with probability at least $1 - \exp\left(-\tilde{\Omega}(\sqrt{|D|})\right)$. 
\end{lemma}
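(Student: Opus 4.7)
The plan is to adapt the proofs of Lemmas~\ref{lem:ls-correct} and \ref{lem:ls-time} from Section~\ref{sec:local-search}, since Algorithm~\ref{alg:disjointfamily-dynamic} is structurally parallel to Algorithm~\ref{alg:local-search-new}: both sample pivots with probability $\Theta(1/(n\log^2 d(v)))$ and invoke a generate-cluster subroutine whose cost scales with the (admissible) degree. I would therefore split the proof into a running-time part and a success-probability part, each modeled on the corresponding piece in Section~\ref{sec:local-search}.

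\textbf{Running time.} For expected time, I would bound the per-iteration work by summing over sampled pivots. By \cref{lem:est-imp-LP} with $t=\Theta(\dadm(v)\log^2\dadm(v))$, processing a pivot $v$ costs $O(\dadm(v)\log^2\dadm(v))$. For a singleton $v$ the sampling probability is $\Theta(1/(n\log^2 d(v)))$, and for an atom $K$ each of its vertices gives a combined probability $\Theta(|K|/(n\log^2|K|))$. Using \cref{thm:preclustering}(2)--(3) (namely $\dadm(v)=O(d(v)/\eps^2)$ and $\dadm(K)\le 3d_D(K)/|K|$), the expected cost per iteration sums to $O(|D|/n)$; with $\Theta(n)$ iterations and $n=O(|D|)$ this gives expected time $O(|D|)$. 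For the high-probability bound I would truncate: skip any iteration whose pivot has $\dadm(v)$ larger than $\Theta(\sqrt{|D|}/\log^2|D|)$, so each iteration costs at most $O(\sqrt{|D|}\log^2|D|)$, and then apply Chernoff/Hoeffding on the total cost to get the $1-\exp(-\tilde\Omega(\sqrt{|D|}))$ worst-case bound.

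\textbf{Success probability.} I would show two properties: (a) the loop reaches $p(\mathcal{F})>\covereps$, and (b) every $C$ added to $\mathcal{F}$ satisfies $\covers(C)\le(1+O(\covereps))R\cdot p(C)$. Part (b) is a direct union bound: each call to \cref{lem:est-imp-LP} errs with probability $\exp(-t)=\exp(-\tilde\Omega(\dadm(v)))$, and combining over $\Theta(n)$ iterations (while using the truncation from the time argument, so that only pivots with $\dadm(v)=\Omega(\sqrt{|D|}/\log^2|D|)$ could contribute materially) yields the desired $\exp(-\tilde\Omega(\sqrt{|D|}))$ failure bound. For (a), the existence of $\opt$ with $\covers(\opt)\le R$ implies that the total $p$-weight of clusters whose ratio beats the target is at least $1-O(\covereps)$, so \cref{lem:est-imp-LP} finds such a cluster with constant probability whenever the sampled pivot lies in one. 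Hence each iteration increases $p(\mathcal{F})$ by $\Omega(1/n)$ in expectation.

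To upgrade this into an exponential tail bound, I would import the level-based bucketing from the proof of \cref{lem:ls-correct}: partition OPT-clusters into levels $(2^{i-1},2^i]$, maintain one list $L_i$ per level of per-iteration $p$-gains (padded with zero/special elements so that each list has equal expected average), and split each list into segments of length $|D|/2$. Within a level, each per-iteration contribution is bounded by $O(2^i/n)$ while the expected segment sum is $\Omega(2^i/\log^2 n)$, giving by Chernoff a per-segment concentration of $\exp(-\tilde\Omega(\sqrt{|D|}))$. A union bound over $O(\log|D|)$ levels preserves the same order. Combining with the ratio bound in (b) finishes the claim.

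\textbf{Main obstacle.} The delicate step is the concentration argument for (a). A naive Chernoff over the $\Theta(n)$ iterations yields only $\exp(-\Omega(1))$, because one iteration that hits a large level-$i$ cluster can push $p(\mathcal{F})$ past $\covereps$ while contributing variance $\Omega(2^{2i}/n^2)$, dominating the mean $\Omega(2^i/n)$. As in \cref{lem:ls-correct}, the cure is the level-wise bucketing together with the observation that each individual gain $X_k$ is at most $O(2^i/n)=O(2^i\sqrt{|D|})\cdot\text{(per-term mean)}/\sqrt{|D|}$, so the variance-to-mean ratio in each bucket is controlled by $O(\log^3|D|/\sqrt{|D|})$. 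A further subtlety is the early-termination condition $p(\mathcal F)>\covereps$: I would argue that either the loop stops early (which is itself a success) or it runs to completion, in which case the bucketed argument applies to the full budget of iterations.
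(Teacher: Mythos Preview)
Your overall plan—mirror the proofs of \cref{lem:ls-correct} and \cref{lem:ls-time}—is exactly what the paper does, and your running-time argument is fine. But there is a real gap in your success-probability argument for part~(a).

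In local search, \cref{lem:gen-cluster}/\cref{lem:est-imp} promise that the generated cluster \emph{improves the cost by at least $\Delta_i$}; so the per-iteration gain is automatically $\Omega(\Delta_i)$ when you hit $C_i^*$. In the LP setting, \cref{lem:est-imp-LP} only guarantees the \emph{ratio} $\covers(C)\le R\cdot\hatp(C)$; it says nothing about how large $\hatp(C)$ is. Your sentence ``each iteration increases $p(\mathcal F)$ by $\Omega(1/n)$ in expectation'' and the later level-wise expectation bounds both implicitly assume that the found cluster satisfies $\hatp(C)=\Omega(\hatp(C_i^*))$. This is not free. The paper fills this gap with a separate lemma (\cref{lem:large-probability}): using the weight invariant $p_v=\Theta(\dc(v)/\dc(V))$ from \cref{lem:invariant} together with \cref{clm:size}, one shows that any good-ratio cluster $C\subseteq\Ncand(v)$ must have $|\{u\in C:\hatp_u>0\}|\ge\covereps^3 d(v)$ and hence $\hatp(C)=\Theta(p(\Ncand(v)))=\Omega(\hatp(C_i^*))$. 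Without this step the level-bucket Chernoff argument has no lower bound on the per-hit gain to work with.

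Two smaller points. First, your part~(b) ``union bound'' does not work as stated: a pivot $v$ with $\dadm(v)=O(1)$ has $t=O(1)$, so \cref{lem:est-imp-LP} errs with constant probability, and truncation does not save you since these pivots are still processed. The paper instead treats the overcharge $\covers(C)-(1+3\covereps)R\cdot\hatp(C)$ as a bounded random variable with expectation $O(|C_i^*|^2\exp(-|C_i^*|))$ and runs the same level-wise concentration argument on it (exactly parallel to how \cref{lem:ls-correct} handles negative improvements). Second, your per-level bounds are off by a factor of $2^i$: the maximum $p$-gain from a level-$i$ hit is $O(2^{2i}/|D|)$, not $O(2^i/n)$, because $|\Ncand(v)|=O(2^i)$ and each vertex carries $p$-weight $\Theta(2^i/|D|)$ by \cref{lem:invariant}. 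The correct ratio $y_k/\E[X]=O(\log^3|D|/\sqrt{|D|})$ still falls out once you fix this.
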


\begin{proof}
    Let $\mathrm{Small}$ be the set of vertices whose $\hatp$ weight is set to 0 at line~\ref{line:removesmallpnodes}. 
    According to the proof of Lemma 29 in \cite{cao2025fastLP}, we have $p(\mathrm{Small}) \le \frac{\gamma}{4}$. 
    Note that at the beginning of each loop, $p(\mathcal{F}) + p(\mathrm{Small}) + \hatp(V) = 1$. 
    
    Next we will show that if the current $\mathcal{F}$ does not take up to $\covereps$ fraction of the total weight, there will be a large fraction of weight in the graph, such that we can succeed in finding a good ratio cluster if we hit this part. 
    \begin{claim}
        Let $\mathcal{F}' = \{C^*_i \in \opt: \covers(C^*_i) \le \hatp(C^*_i)(1 + 3\covereps)\covers(\opt)\}$. 
        If $p(\mathcal{F}) \le \covereps \le 1/9$, we have $\hatp(\mathcal{F}') \ge \covereps$. 
    \end{claim}
    \begin{proof}
        For any $C^*_i \in \opt \setminus \mathcal{F}'$, we have 
        \[
            \covers(C^*_i) > \hatp(C^*_i)(1 + 3\covereps)\covers(\opt). 
        \] 
        Summing up over all sets in $\opt \setminus \mathcal{F}'$, we have
        \[
            \covers(\opt) > \sum_{C^*_i \in \opt \setminus \mathcal{F}'}\covers(C^*_i) = (\hatp(\opt) - \hatp(\mathcal{F}'))(1 + 3\covereps)\covers(\opt).
        \]
        Since $\hatp(\opt) = 1 - p(\mathcal{F}) - p(\mathrm{Small}) \ge 1 - \frac{5\covereps}{4}$, we know $\hatp(\mathcal{F}') \ge \frac{3\covereps}{1 + 3\covereps} - \frac{5}{4}\covereps \ge \covereps$.
    \end{proof}
    
    Now we will analyze the decrease of $\hatp(V)$ in each step, which also equals to the increase of $p(\mathcal{F})$. 
    We first show a lower bound for hitting each cluster. 

    \begin{claim}
        \label{lem:hit-clusterLP}
        For any non-singleton cluster $C^*_i \in \opt$, it will be hit with probability $\Theta\left(\frac{|C^*_i|}{n\log^2 |C^*_i|}\right)$ at line~\ref{line:sample}. 
    \end{claim}
    
    \begin{proof}
        It directly follows from \cref{lem:hit}. 
    \end{proof}

    Next, we will show a lower bound of decrease of $\hatp(V)$ if a cluster $C^*_i \in \mathcal{F}'$ is hit. 
    
    \begin{claim}[Claim 33 from \cite{cao2025fastLP}]
        \label{clm:size}
        For any cluster $C \in \Ncand(v)$ such that 
        $\frac{\covers(C)}{\hat p(C)} \le (1+5\covereps) R$, then 
        \begin{itemize}
            \item either $|\{u \in C \mid \hat p_u > 0 \}| \ge \covereps^3 d(v)$, 
            \item or $\frac{\covers(\{u\})}{p_u} \le (1+6\covereps)R$ for some $u \in C$.
        \end{itemize}
    \end{claim}
    
    \begin{lemma}
        \label{lem:large-probability}
        Let $v$ be a vertex. 
        For any cluster $C$ such that $K(v) \subseteq C \subseteq \Ncand(v)$ and 
        $\frac{\covers(C)}{\hat p(C)} \le (1+5\covereps) R$, $\hatp(C) = \Theta(p(\Ncand(v)))$. 
    \end{lemma}

    \begin{proof}
        Let $C' = \{u \in C: \hatp(u) > 0\}$. 
        According to monotonicity of $\covers$, $\covers(C') \le \covers(C)$. 
        By definition, $\hatp(C') = \hatp(C)$, therefore
        \[
            \frac{\covers(C')}{\hatp(C')} \le \frac{\covers(C)}{\hatp(C)} \le (1 + 5\covereps) R. 
        \]
        According to \cref{clm:size}, $|C'| \ge \covereps^3 d(u)$. 

        Next we will show that $\dc(C') = \Theta(\dc(\Ncand(v)))$. 
        
        If $v$ is a singleton in the preclustering, then for any $u \in \Ncand(v)$, $\dc(u) = d(u)$ and $d(u) = \Theta(d(v))$ according to \cref{thm:preclustering}(5). 
        Therefore, 
        \[
            \dc(C') = \Theta(|C'| \cdot d(v)) = \Theta(d^2(v)) = \Theta(\dc(\Ncand(v))). 
        \]
        
        If $v$ is contained in an atom $K$, then for any $u \in \Ncand(v) \setminus K$, $|N(u) \cap K| \ge \frac{1}{3}|K|$ and $|K| = \Theta(d(v))$, therefore $\dc(C') \ge \dc(K) = \Theta(|\Ncand(v) \setminus K|\cdot d(v))$. 
        Since 
        \[
            \dc(\Ncand(v)) = \dc(\Ncand(v) \setminus K) + \dc(K)
        \]
        and
        \[
            \dc(\Ncand(v) \setminus K) = \sum_{u \in \Ncand(v) \setminus K}d(u) = \Theta(|\Ncand(u) \setminus K|\cdot d(v)). 
        \]
        Therefore, $\dc(C') = \Theta(\dc(\Ncand(v)))$. 
        
        According to \cref{lem:invariant}, we know for any $v \in C'$, $\hatp(v) = p_v = \Theta\left(\frac{\dc(v)}{\dc(V)}\right)$, therefore, 
        \[
            \hatp(C) = \hatp(C') = p(C') = \Theta\left(\frac{\dc(C')}{\dc(V)}\right) = \Theta\left(\frac{\dc(\Ncand(v))}{\dc(V)}\right) = \Theta(p(\Ncand(v))).
        \]
    \end{proof}

    By \cref{lem:est-imp-LP} and \cref{lem:large-probability}, assume $C^*_i \in \mathcal{F}'$ is hit in line~\ref{line:sample}, we can find a cluster $C$ with ratio at most $(1 + 5\covereps)R$ with constant probability, and if success, $\hatp(C) = \Theta(p(\Ncand(v))) = \Omega(\hatp(C^*_i))$. 
    In this case, $\hatp(V)$ will decrease by $\hatp(C) = \Omega(\hatp(C^*_i))$. 

    Let $\Delta_i = \hatp(C^*_i)$ if $C^*_i \in \mathcal{F}'$ and otherwise $\Delta_i = 0$. 
    Let $\Delta = \sum_i \Delta_i$. 
    We now summarize all conclusions we have so far. 
    \begin{itemize}
        \item If $p(\mathcal{F}) \le \covereps$, $\Delta \ge \covereps$; 
        \item When we hit a cluster $C^*_i$, we will increase $p(\mathcal{F})$ by $\Theta(\Delta_i)$ with constant probability, and the increment of $p(\mathcal{F})$ is $O(|C^*_i|^2/|D|)$ and takes value $O(\Delta_i)$ with constant probability;
        \item When we hit a cluster $C^*_i$, we will find a cluster $C$ that $\covers(C) - (1 + 3\covereps)R\cdot\hatp(C) = O(|C^*_i|^2)$ and $E[\covers(C) - (1 + 3\covereps)R\cdot\hatp(C)] = O(|C^*_i|^2/\exp(|C^*_i|))$; 
        \item A cluster $C^*_i$ is hit with probability $\Theta\left(\frac{|C^*_i|}{n\log^2|C^*_i|}\right)$. 
    \end{itemize}
    Using similar argument as we have in \cref{lem:ls-correct}, \cref{alg:disjointfamily-dynamic} succeeds with probability at least $1 - \exp(-\tilde{\Theta}(\sqrt{|D|}))$, and will output a family $\mathcal{F}$ that $\covers(\mathcal{F}) \le (1 + O(\covereps))R + \epsilon|D|$ for any $\epsilon > 0$. 
\end{proof}

And it's straightforward to check the expected running time of \cref{alg:disjointfamily-dynamic}. 
\begin{lemma}
    \cref{alg:disjointfamily-dynamic} runs in expected $O(|D|)$ time. 
\end{lemma}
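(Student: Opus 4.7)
The plan is to mirror the expected-time analysis of \cref{lem:ls-time}. First, I will observe that an inactive singleton $v$ (one with $\dc(v)=0$) has initial MWU weight $w_v^{(1)}=\dc(v)=0$, and the multiplicative update at \ref{line:update-rule} of \cref{alg:mw} preserves zeros, so $p_v^{(t)}=0$ for every $t$ and such a vertex never participates meaningfully. Hence the preprocessing "For $v\in V$" loop of \cref{alg:disjointfamily-dynamic} can be restricted to the $O(|D|)$ active singletons and atoms and runs in $O(|D|)$ time, and by the same token the "$\Theta(n)$" main loop at \ref{line:loop_family-poly} is understood as $\Theta(n_D)=\Theta(|D|)$ iterations (analogous to the $|D|$-bounded loop of \cref{alg:local-search-new}).

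For the main loop I will bound the expected cost per round by splitting on the pivot type. When a singleton $v$ is sampled (with probability $\Theta(1/(|D|\log^2 d(v)))$), the work is $O(d^{\adm}(v)\log^2 d^{\adm}(v))$ time, consisting of $O(d^{\adm}(v)\log^2 d(v))$ to build its admissible neighborhood via \cref{thm:preclustering}(9) and the same order to generate a candidate cluster via \cref{lem:est-imp-LP}. Over $\Theta(|D|)$ rounds the expected total contribution from $v$ is $\Theta(1/\log^2 d(v))\cdot O(d^{\adm}(v)\log^2 d(v))=O(d^{\adm}(v))=O(d(v))=O(\dc(v))$, invoking \cref{thm:preclustering}(2) together with the identity $d(v)=\dc(v)$ for singletons in $\mathcal{K}$.

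When the pivot lies in an atom $K$, every $v\in K$ satisfies $d(v)=\Theta(|K|)$, so the cumulative probability that some vertex of $K$ is chosen in a single round is $\Theta(|K|/(|D|\log^2|K|))$, and the per-sample cost is $O(d^{\adm}(K)\log^2|K|)$ using \cref{thm:preclustering}(7) to enumerate admissible neighbors in $O(d^{\adm}(K))$ time and then \cref{lem:est-imp-LP} for cluster generation. Summed over $\Theta(|D|)$ rounds, the expected contribution from $K$ is $O(|K|\cdot d^{\adm}(K))$, and by \cref{thm:preclustering}(3) this is $O(\dc(K))$. Aggregating singleton and atom contributions yields $\sum_{v}O(\dc(v))=O(|D|)$, which is the claimed expected running time.

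The main obstacle is purely bookkeeping: I must justify carefully that the "$\Theta(n)$" bound can be interpreted as $\Theta(|D|)$ and that inactive vertices/atoms can be elided from the working set without altering the distribution of the algorithm's output. Once the zero-weight invariance argument above is formalized (mirroring the treatment of inactive clusters in \cref{sec:pivot} and \cref{sec:local-search}), the time bound follows directly from the per-vertex accounting above. No concentration argument is required for the expected-time claim; the high-probability running time bound, if desired, would follow by an analogue of the Chernoff step in \cref{lem:ls-time}, but the statement at hand only asks for the expectation.
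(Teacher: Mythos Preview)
Your proposal is correct and takes essentially the same approach as the paper; in fact, the paper does not give a proof at all for this lemma, stating only that ``it's straightforward to check the expected running time of \cref{alg:disjointfamily-dynamic}.'' Your argument supplies precisely the details the paper omits, by mirroring the per-vertex accounting of \cref{lem:ls-time} and justifying the reduction to the $O(|D|)$ active vertices via the zero-weight invariance of the MWU update, which is the natural (and intended) way to fill this gap.
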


Therefore, we can conclude the part of solving cluster LP in $O(|D|)$ time. 
\begin{theorem}
    \label{thm:fast-cluster-LP-dynamic}
    Given a preclustering $(\mathcal{K} = \mathcal{C}, E^{\adm})$, \cref{alg:mw} runs in expected $O(|D|)$ time.
    For any $\epsilon > 0$, with probability at least $1 - \exp(-\tilde{\Theta}(\sqrt{|D|}))$, \cref{alg:mw} return a $(1 + O(\covereps))$-approximate solution $z$ to cluster LP, stored by a list of non-zero entries, such that
    \begin{enumerate}
        \item $|\supp(z)| \le |V| / \delta$;
        \item For all $S \in \supp(z)$, $z_S \ge \delta$ for some constant $\delta$; 
        \item For each $v \in V$, there are at most $1/\delta$ sets $S$ that $v \in S$ and $S \in \supp(z)$; 
        \item All $S \in \supp(z)$ does not split or join atoms; 
        \item $\obj(z) \le \cost(\opt) + \epsilon|D|$. 
    \end{enumerate}
    where $\delta \in (0,1)$ is a constant. 
\end{theorem}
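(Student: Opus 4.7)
The plan is to show this theorem by combining the analysis of Algorithm~\ref{alg:disjointfamily-dynamic} (which already implements line~\ref{line:findPoint} of Algorithm~\ref{alg:mw} in expected $O(|D|)$ time with success probability $1-\exp(-\tilde\Theta(\sqrt{|D|}))$) with the standard MWU analysis from \cite{cao2025fastLP}, and then plugging the resulting covering-LP solution into \cref{lem:cover-to-cluster} to recover a cluster-LP solution satisfying the five listed structural properties.

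First I would handle the running time. The outer loop of Algorithm~\ref{alg:mw} runs for $\tmwu = O(1/\covereps^2)$ iterations, which we treat as a constant. Each iteration does three kinds of work: (i) normalizing weights and aggregating constraints, (ii) running Algorithm~\ref{alg:disjointfamily-dynamic} to find $\zt$, and (iii) updating weights along line~\ref{line:update-rule}. Part (ii) takes expected $O(|D|)$ time by the preceding analysis. For (i) and (iii) one must be careful since naively these touch all of $V$; however, by \cref{lem:invariant} the weights $p_v$ lie in a narrow window around $\dc(v)/\dc(V)$, so one can maintain the normalization implicitly and only update weights for vertices that appeared in some set $S \in \supp(\zt)$. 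Because $\supp(\zt)$ consists of disjoint sets with size bounded by the admissible neighborhoods of sampled pivots, the total update work per iteration is $O(|D|)$. Summing over $O(1)$ iterations gives expected $O(|D|)$ total time, and the final application of \cref{lem:cover-to-cluster} adds only $O(n) = O(|D|)$ time (since any vertex not incident to $D$ lies in an inactive atom that needs no processing).

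Next I would argue correctness and the approximation guarantee. Conditioned on every invocation of Algorithm~\ref{alg:disjointfamily-dynamic} succeeding, each $\zt$ satisfies the aggregated constraint $\sum_S \pt(S) \zt_S \geq 1$, has cost bounded by $(1+O(\covereps))\covers(\opt) + \epsilon|D|$, does not split atoms, and has disjoint support. The standard MWU convergence analysis (identical to \cite{cao2025fastLP}) then shows that the average $\hat z = \frac{1}{\tmwu}\sum_t \zt$ nearly satisfies every covering constraint; the rounding step at line~\ref{line:loop_feasible} fixes any remaining under-covered vertex by bumping up $\hat z_{K(v)}$, inflating the cost by at most a $1/(1-2\covereps)$ factor. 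Hence $\zmwu$ is a $(1+O(\covereps))$-approximate solution to the covering cluster LP with cost at most $\cost(\opt) + O(\epsilon|D|)$ (the additive $\dc$ terms cancel because $\cost + \dc = \covers$ and the covering-LP optimum is $\cost(\opt) + \dc(V) \leq 2|D|$). Feeding $\zmwu$ into \cref{lem:cover-to-cluster} with $\tmwu = O(1)$ yields a cluster-LP solution $z$ with $z_S \geq \delta$ for a constant $\delta = 1/(c\tmwu)$, every vertex covered by at most $\tmwu = O(1)$ sets, no split atoms, and $\obj(z) \leq (1+\covereps)\covers(\zmwu) \leq \cost(\opt) + \epsilon|D|$ after absorbing constants into $\epsilon$. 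This gives properties (1)--(5) in the theorem statement.

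Finally, the probability bound follows by a union bound over the $O(1)$ MWU rounds: each call to Algorithm~\ref{alg:disjointfamily-dynamic} fails with probability $\exp(-\tilde\Theta(\sqrt{|D|}))$, so the whole MWU procedure succeeds with probability $1 - \exp(-\tilde\Theta(\sqrt{|D|}))$. The main obstacle in this plan is the careful bookkeeping needed to make the per-iteration overhead $O(|D|)$ rather than $O(n)$: one must argue that normalization of weights, maintenance of $\dc(V)$, and the weight update in line~\ref{line:update-rule} can all be deferred or localized to the vertices actually touched by $\supp(\zt)$, exploiting the disjoint-support property of Algorithm~\ref{alg:disjointfamily-dynamic} and the fact that inactive clusters (cores) contribute uniformly to every quantity and can be represented by a single per-cluster multiplier. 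The rest of the proof essentially reassembles pieces that are already established in the earlier sections.
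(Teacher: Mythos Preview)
Your proposal is correct and follows essentially the same approach as the paper: the paper does not give a standalone proof of this theorem but rather states it as an immediate consequence of the preceding lemmas on Algorithm~\ref{alg:disjointfamily-dynamic} together with the MWU framework of \cite{cao2025fastLP} and \cref{lem:cover-to-cluster}. Your write-up spells out the running-time bookkeeping and the union bound over $\tmwu=O(1)$ rounds more explicitly than the paper does, but the underlying argument is the same.
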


\subsection{Rounding the cluster LP}
\label{subsec:rounding}

In this subsection, we will show how to implement the two rounding algorithms from \cite{cao2024understanding} in $O(|D|)$ time in cluster representation. 
The input is the graph representation and a fractional solution to the \hyperref[LP:clusterlp]{\color{black}cluster LP} produced by \cref{thm:fast-cluster-LP-dynamic}, and the output will be the best of the two rounded clusterings and its corresponding cluster representation. 

First we look at the cluster-based rounding from \cite{cao2025fastLP}, restated in \cref{alg:sublinear-cluster-based}. 
Similar to \cref{alg:pivot}, we first need to contract all the inactive vertices in any active atoms, as they will always be in the same cluster for any $S \in \supp(z)$. 
After that, we just run \cref{alg:sublinear-cluster-based} directly, where line~\ref{lst:enumeration-cluster-based} can be implemented in $O(|D|)$ time in total as we only have $O(|D|)$ vertices after contraction and each of them is contained in at most $1/\delta$ sets in the $\supp(z)$ by \cref{thm:fast-cluster-LP-dynamic}(3). 
Therefore, the total running time of \cref{alg:sublinear-cluster-based} is $O(|D|)$, as desired. 

\begin{algorithm}[htb]
    \caption{Cluster-Based Rounding from \cite{cao2025fastLP}}
    \label{alg:sublinear-cluster-based}
    \begin{algorithmic}[1]
        \For{$S \in \supp(z)$}
         \State $k_S
    =  \lfloor\frac{n^c}{z_s}\log \frac{1}{p_s} \rfloor$, where $p_s$
    is uniformly chosen from $(0, 1)$ 
      \EndFor
        \For{$v \in V$}
        \State $k_v = \min \{ k_S \mid S \ni v\}$ \label{lst:enumeration-cluster-based}
        \EndFor
        \State Put all nodes with same $k_v$ value into same cluster and return
    \end{algorithmic}    
\end{algorithm}

Now we look at the pivot-based rounding from \cite{cao2025fastLP}, restated in \cref{alg:correlated-rounding}. 

Similar to \cref{alg:pivot}, we first contract all the cores, and implement line~\ref{lst:cluster-based-weighted-sampling} by weighted sampling without replacement. 
When we get a core at line~\ref{lst:cluster-based-weighted-sampling}, we just replace it by any vertex $u$ in the core. 
Since any $S \ni u$ in $\supp(z)$ contains the entire core, $x_{uv} = 0$ for any $v$ inside the core, therefore we will remove the entire core in the end of this step and cluster them together. 
By \cref{lem:weighted-sampling}, the total running time here is $O(|D|)$, if we do not need to break any core. 

Note that we might break a core and put vertices of a core into different clusters, but we will only break a core at line~\ref{lst:enumeration-pivot-based}. 
When some non-neighbors of $v$ from a core $X$ is selected, but not the whole core $X$, we have to break the core $X$. 
In this case, the cost of generated clustering increased by at least $O(|X|)$, as $X$ is fully connected in the graph. 
We can list out all remaining vertices in $O(|X|)$ time by enumerating over $X$ and remove the element for $X$ in the sets. 
For each remaining vertex, we insert it as a unit weight element into the sampling structure. 
After that, when we need samples, we will first sample if we should sample from the original elements or the inserted elements with probability proportional to the total weights, and generate the sampling accordingly in each part. 
The time cost here is charged to the increment of the cost. 
Therefore, the total extra running time for breaking cores is bounded by the cost of the final clustering. 

Let's consider the time complexity of iterating through all relevant vertices at line \ref{lst:enumeration-pivot-based}. 
For every $v \in V' \cap N^-(u)$ where $1 - x_{uv} > 0$, by \cref{thm:fast-cluster-LP-dynamic}, there exists $S \in \supp(z)$ such that $u, v \in S$ and $1 - x_{uv} \ge z_S \ge \delta$. 
Therefore, there will be at most $\obj(z)/\delta$ non-edges considered throughout the algorithm. 
And we assume $\obj(z) \le |D|$, otherwise $\calC$ will be a better solution than $z$. 
In this case, we will visit at most $|D|/\delta$ non-edges in total. 

What remains is to bound the total visits to edges. 
Here we use the same strategy as the one used in \cref{alg:pivot}. 
We enumerate the vertices in $C(v)$, which is the cluster of $v$ in $\calC$, and also the vertices that share an edge with $v$ in $D$. 
All the neighbors of $v$ are contained in this set, so we can just enumerate all neighbors of $v$ in $V'$ in this way. 
Similar to \cref{alg:pivot}, if all the neighbors of $v$ are removed after this step, then the total time cost here is $O(|D|)$. 
If a neighbor of $v$ does not go into the same cluster as $v$, the cost of the clustering will increase by at least one, and we will charge the next visit of this neighbor to the increment of the cost. 

Since the rounding algorithm can just stop when the current cost is greater than $|D|$ and report $(\calC, D)$, the total running time for visiting neighbors of $v$ are $O(|D|)$, therefore the whole algorithm can be implemented in $O(|D|)$ time. 

\begin{algorithm}[htb]
    \caption{Modified Pivot-Based Rounding with Threshold $1/3$ from \cite{cao2025fastLP}}
    \label{alg:correlated-rounding}
    \begin{algorithmic}[1]
        \State $\calC \gets \emptyset, V' \gets V$
        \While{$V' \neq \emptyset$}
            \State Randomly choose a pivot $u \in V'$ %
            \label{lst:cluster-based-weighted-sampling}
            \State $C \gets \{v \in V'\cap N^+(u): x_{uv} \leq \frac{1}{3}\}$
            \Comment{$N^+(u)$ is set of neighbors of $u$}
            \For{every $v \in V' \cap N^-(u)$}
            \Comment{$N^-(u)$ is set of non-neighbors of $u$}
            \State Independently add $v$ to $C$ with probability $1 - x_{uv}$ \label{lst:enumeration-pivot-based} \EndFor
            \State Randomly choose a set $S \ni u$, with probabilities $z_S$
            \State $C \gets C \cup (S \cap V' \cap N^+(u))$, $\calC \gets \calC \cup \{C\}$, $V' \gets V' \setminus C$
        \EndWhile
        \State \Return $\calC$
    \end{algorithmic}
    \label{algo:pivot}
\end{algorithm}

Combined with \cref{thm:fast-cluster-LP-dynamic}, we obtain the following result. 
\begin{theorem}
    \label{thm:dynamic-1.437}
    Given a graph represented by a clustering and violation pair $(\calC, D)$, in expected $O(|D|)$ time, one can generate a clustering $\calC'$ that $E[\cost(\calC')] \le \pureclusterlpratio \cdot \cost(\opt)$. 
\end{theorem}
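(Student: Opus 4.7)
The plan is to compose the three ingredients already established: the preclustering construction of \cref{thm:preclustering}, the fast cluster LP solver of \cref{thm:fast-cluster-LP-dynamic}, and the two rounding schemes (\cref{alg:sublinear-cluster-based,alg:correlated-rounding}) adapted to the cluster representation. Concretely, starting from $(\calC,D)$, I first invoke the preclustering procedure with a sufficiently small constant $\eps$ to obtain $(\mathcal{K},E^{\adm})$ in $O(|D|)$ time. Since every non-singleton cluster in $\mathcal{K}$ is an atom and $\cost(\mathcal{K})=O(|D|)$ by the cleaning lemma, the preclustering guarantees an accepted clustering of cost at most $\cost(\opt)+O(\eps\cdot|D|)$, which lets us compare our output against $\opt$ with only an $O(\eps|D|)$ additive slack.

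Next I run \cref{alg:mw} on $(\mathcal{K},E^{\adm})$ using our implementation of the disjoint-family step (\cref{alg:disjointfamily-dynamic}) as the inner routine for line~\ref{line:findPoint}. By \cref{thm:fast-cluster-LP-dynamic}, this produces, in expected $O(|D|)$ time, a $(1+O(\covereps))$-approximate solution $z$ to the cluster LP whose support has the structural properties (bounded support size, atom-respecting, each vertex in at most $1/\delta$ sets). I then apply both rounding algorithms: \cref{alg:sublinear-cluster-based} and \cref{alg:correlated-rounding}, each of which I have argued runs in $O(|D|)$ time on the cluster representation by first contracting cores of active atoms, using the weighted-sampling data structure of \cref{lem:weighted-sampling} for the pivot choice, bounding the total work on non-edges via $|\{(u,v): 1-x_{uv}>0\}|\le \obj(z)/\delta$, and charging any core-breaking work to the resulting cost increase. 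Finally, I use \cref{alg:symmetric-difference} to compute the violations of each candidate output, compare them against $|D|$, and return the best among $\calC$ and the two rounded clusterings.

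The approximation guarantee follows directly from \cite{cao2025fastLP}: the better of the two roundings applied to a $(1+O(\covereps))$-approximate cluster LP solution yields a clustering whose expected cost is at most $1.437\cdot\cost(\opt)$ on the preclustered instance, and the preclustering guarantee absorbs the $O(\eps|D|)$ additive error by choosing $\eps$ small enough so that $1.437+O(\eps)$ still rounds up to $1.437$ after the original analysis is performed with a marginally smaller internal constant. Because we always fall back to $(\calC,D)$ if the rounded solution is worse, running times can be clipped at $\Theta(|D|)$ without affecting the expectation bound on cost; the low-probability event that the LP solver fails (with probability $\exp(-\tilde{\Omega}(\sqrt{|D|}))$) contributes a negligible amount to the expected cost since in that case we simply return $\calC$ whose cost is $|D|$, and this event is far too rare to shift the expectation.

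The main obstacle I anticipate is the bookkeeping in the pivot-based rounding: we must make sure that walking through $N^-(u)$ and $N^+(u)$ using only $D$ together with the cluster adjacency does not secretly cost more than $O(|D|)$, that core-breaking events are amortized against cost increases, and that the sampling structure supports insertions (when a core is fractured). A second, more subtle point is verifying that all approximation and expectation statements from \cite{cao2025fastLP}, which are stated against the true $\opt$ of the (post-preclustering) instance, transfer cleanly to comparisons against $\cost(\opt)$ of the original graph, which requires carefully invoking the third item of \cref{thm:preclustering} about accepted clusterings and absorbing the $O(\eps\cdot\cost(\mathcal{K}))$ slack into the final constant.
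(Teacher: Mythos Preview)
Your proposal is correct and follows essentially the same approach as the paper: the theorem is stated there as the summary of Section~\ref{sec:LP}, obtained by composing the preclustering of \cref{thm:preclustering}, the $O(|D|)$-time cluster LP solver of \cref{thm:fast-cluster-LP-dynamic}, and the two rounding algorithms (\cref{alg:sublinear-cluster-based,alg:correlated-rounding}) implemented on the cluster representation via core contraction, weighted sampling, and charging core-breaking work to cost increases. The obstacles you flag (bounding the $N^+(u)$/$N^-(u)$ enumeration, amortizing fractured cores, and absorbing the $O(\eps|D|)$ preclustering slack) are exactly the points the paper handles in its rounding subsection.
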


\bibliography{reference}

\appendix

\section{Improved Local Search}
In this appendix, we are going to show how to improve the approximation ratio of the original approximation ratio of $1.847$ from the local search of \cite{CLMTYZ24} to an approximation ratio of $1.8$ in expectation. This is independent of the rest of the paper, and not needed for achieving the modifications of neither the local search nor solving the cluster LP. One of the central steps in this local search is a pivot step, where one creates a new solution by pivoting on three previous locally optimal solutions. We are going to replace this step with a randomized pivot inspired by the pivot of \cite{ACN08}. This can in addition also be used in the dynamic algorithm, though it does mean that we only get the approximation in expectation instead of with high probability. It is however possible to combine the two approaches, such that the final clustering has an approximation ratio of $1.8$ in expectation and $1.847$ with high probability. This is easily achievable due to the fact that both approaches construct a set of clustering where there exists at least one clustering with the desired property. So one can simply take the union of these sets.

To use the local search we need a notion for what a locally optimal cluster is. Furthermore, the local search does not actually output a locally optimal clustering, just one that is almost locally optimal.

The way we modify \cref{alg:flipping-local-search} is by modifying the computation of $\mathcal{C}_i''$:
\begin{theorem}
\label{thm:improved-approx-ratio}
  For every $0 \leq \alpha < \frac{1}{5}$
  there exists a positive integer $k$ and a real $\delta_0 > 0$
  such for every $0 < \delta \le \delta_0$, running the Iterated-flipping Local Search Algorithm on $(\mathcal{C},D)$ with the updated pivot 
  returns a $(2-\alpha + \delta)$-approximate clustering in expectation for the input $G$ if run on parameters $\eps = \delta/2k$, $\beta = 0.5$, and $s$, where $(\mathcal{C},D)$ is a clustering and symmetric difference, $k$ is the approximation factor of $(\mathcal{C},D)$.
\end{theorem}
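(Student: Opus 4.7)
The plan is to leave the iterated-flipping structure of \cref{alg:flipping-local-search} intact and only modify the combination step $\calC_i'' \gets \mathrm{Pivot}(\calC_{i-1}',\calC_i,\calC_i')$ in line~8. In place of the deterministic majority-type pivot of Lemma~18 in \cite{CLMTYZ24}, I will use a randomized pivot in the spirit of Ailon--Charikar--Newman \cite{ACN08}: iteratively pick an unclustered vertex $p$ uniformly at random, form a new cluster $S \ni p$ by adding each remaining vertex $u$ independently with a probability $q_{\sigma(p,u)}$ that depends only on the ``signature'' $\sigma(p,u) \in \{0,1\}^3$ recording whether $p$ and $u$ are co-clustered in $\calC_{i-1}'$, in $\calC_i$, and in $\calC_i'$, and recurse on $V\setminus S$. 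The eight coefficients $\{q_\sigma\}_{\sigma \in \{0,1\}^3}$ are free parameters to be optimized; note that the deterministic \cite{CLMTYZ24} pivot is the special case in which all $q_\sigma$ are $0$ or $1$ according to the majority vote.

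Given this design, I would run a standard per-triple charging argument analogous to the pivot analysis of \cite{ACN08,CMSY15}: for every ordered triple $(p,u,v)$ of vertices with $p$ the first of the three to be selected as a pivot, the expected contribution of the pair $(u,v)$ to $\cost(\calC_i'')$ conditional on this event is a fixed polynomial in $\{q_\sigma\}$ depending only on $\sigma(p,u)$, $\sigma(p,v)$, and $\sigma(u,v)$, together with whether $(u,v)\in E$. This collapses $\E[\cost(\calC_i'')]$ into a finite sum indexed by these ``configurations''. On the lower-bound side, the $\eps$-good local optimality of $\calC_{i-1}', \calC_i, \calC_i'$, together with the weight-doubling schedule of the outer loop, yields linear inequalities on $\cost(\opt)$ of the form used in \cite{CLMTYZ24}; each configuration contributes a known quantity to these inequalities, modulo an additive slack of order $\eps|D|$ coming from the $\eps$-good-local-optimum relaxation. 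Combining the upper and lower bounds reduces the approximation factor to the value of a finite linear program in the variables $\{q_\sigma\}$.

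The main obstacle is pushing the optimal value of this LP below $2-\alpha$ for $\alpha$ arbitrarily close to $1/5$. Concretely, one must exhibit coefficients $\{q_\sigma\}$ satisfying all configuration constraints with objective at most $2-\alpha$; I expect the tight case to be a specific ``cyclic-disagreement'' triple in which no two of the three clusterings agree with the optimum, which is exactly the configuration where the \cite{CLMTYZ24} majority pivot is weakest, and where symmetric randomization between the three clusterings breaks the barrier. Showing $\alpha$ can be taken up to (but not including) $1/5$ is then a finite linear-programming verification whose boundary corresponds to this worst-case signature becoming simultaneously tight in the coverage inequality and the approximation inequality. Once $\alpha$ is chosen, fixing the number of outer iterations $k$ and taking $\eps = \delta/(2k)$ ensures the cumulative additive error across the $k$ phases is at most $\delta \cdot \cost(\opt)$, absorbing the slack into the $+\delta$ term and yielding the claimed $(2-\alpha+\delta)$-approximation in expectation.
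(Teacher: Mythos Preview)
Your direction---replace the majority pivot by a randomized pivot whose inclusion probability depends on the three-clustering signature, then run a triangle analysis---is exactly the paper's idea, and your proposal would succeed. The execution in the paper differs from yours in two notable ways, though.

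First, the paper does not keep eight free parameters $q_\sigma$. It collapses the signature to the count $d_i(u,v)\in\{0,1,2,3\}$ of clusterings separating $u,v$, defines a fixed per-edge budget $b_i(u,v)$ from $d_i$, and uses a single hardcoded pivot rule (add a $+$-neighbor with probability $1$ if $d_i=0$ and $1/4$ otherwise; add a $-$-neighbor with probability $3/4$ if $d_i=0$ and $0$ otherwise). The triangle analysis then shows the clean inequality $\E[\cost(\calC_i'')]\le \tfrac32\sum b_i$. Your more general $\{q_\sigma\}$ parametrization is fine and could only help, but the paper gets by with this coarser choice.

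Second, the paper does \emph{not} fold the local-optimum inequalities into a single finite LP yielding the approximation factor. Instead it is modular: after the pivot step gives $\tfrac23\cost(\calC_i'')\le$ (sum of cross/in terms), the argument reverts verbatim to the iterated \cite{CLMTYZ24} machinery---apply their Lemma~25 to each of $\calC_{i-1}',\calC_i,\calC_i'$, combine, and define the scalar potential $b_i=\bigl(2|E\setminus\mathcal{E}(\calC_i')\setminus\mathcal{E}(\calC^*)|+|\mathcal{E}(\calC_i')\setminus E|+|\mathcal{E}(\calC_i)\setminus E|\bigr)/\cost(\calC^*)$. The contradiction hypothesis forces $b_i$ to drop by $\tfrac15-\hat\alpha$ per phase, so after $s=1+\lceil 2/(\tfrac15-\alpha)\rceil$ phases one finds a $j_0$ violating this, and the algebra lands on $\alpha+\tfrac{36}{17}\delta>\tfrac15$. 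The number of outer iterations $s$ and the threshold $\tfrac15$ thus come from this potential-decrease argument, not from a static configuration LP; your description of ``a finite LP in $\{q_\sigma\}$'' understates this part. (Also, in the theorem statement $k$ is the approximation factor of the input $(\calC,D)$, not the number of phases---that is $s$.)
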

This can then be used both with a preclustering as the input to get an approximation algorithm, but also with a previous clustering, as would be the case in the dynamic setting.

For the purpose of this statement, we will define
\[ \delta_0 = \frac{17}{36} \left(\frac{1}{5}-\alpha\right) \quad\mathrm{and}\quad s = 1 + \left\lceil \frac{2}{\frac{1}{5}-\alpha} \right\rceil. \]

Let $\hat{\alpha} := (\alpha + 1/5)/2$ and let $0 < \delta\le \delta_0$. We are going to aim for solutions of $(2-\hat{\alpha})\cost(\mathcal{C}^*)$ which together with the definition of $\delta_0$ is enough to get a $(2-\alpha)$-approximation. Much of the proof will be the same as \cite{CLMTYZ24}, though with a new way to perform the central pivot step.

We start with a statement that has essentially been proven before:
\begin{lemma}[Lemma 25 of \cite{CLMTYZ24}]\label{lem:iterated-w-increase}
Let $\mathcal{C}_1,\ldots,\mathcal{C}_\ell$ be clusterings
and 
\[w := w_0 + \sum_{i=1}^\ell \beta (E \setminus \mathcal{E(C}_i)). \]
(That is, for every $1 \leq i \leq \ell$, we add a weight of $\beta$ to every edge
 connecting two distinct clusters of $\mathcal{C}_i$.)
Let $\mathcal{C}$ be a $\frac{\delta}{2k}$-good local optimum of $G$ with weights $w$ with respect to the clustering $(\mathcal{C},D)$. Let $\mathcal{C}^\ast$ be the clustering guaranteed by \cref{thm:preclustering} to exist, where $(\mathcal{C},D)$ is a clustering and symmetric difference pair with approximation factor $k$.
If $\cost(\mathcal{C}) > (2-\hat{\alpha}) \cost(\mathcal{C}^\ast)$, then 
\begin{align*}
    \left(\hat{\alpha} + \delta\right) \cost(\mathcal{C}^\ast)) &+ 2\beta \left(\sum_{i=1}^\ell|E \setminus \mathcal{E(C}_i) \setminus \mathcal{E(C^\ast)}|\right) \ge\\
    &\beta\sum_{i=1}^\ell |E \setminus\mathcal{E(C)} \setminus \mathcal{E(C}_i)|\\
    &+|E \setminus\mathcal{E(C)} \setminus \mathcal{E(C^\ast)}| \\
    &+|\mathcal{E(C)}\setminus E|.
    \end{align*}
\end{lemma}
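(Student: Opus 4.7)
The plan is to apply the weighted local-optimality condition to $\mathcal{C}$ with respect to $w$, expand the resulting sum by an edge-by-edge accounting, and then invoke the hypothesis $\cost(\mathcal{C}) > (2-\hat{\alpha})\cost(\mathcal{C}^\ast)$ to supply the $\hat{\alpha}\cost(\mathcal{C}^\ast)$ coefficient. This closely parallels the proof of Lemma~25 in \cite{CLMTYZ24}, adapted to the present definition of a $\frac{\delta}{2k}$-good local optimum.

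First, since $\mathcal{C}$ is a $\frac{\delta}{2k}$-good local optimum for $w$, the definition immediately yields
\[
  \sum_{C \in \mathcal{C}^\ast} \bigl(\cost_w(\mathcal{C}) - \cost_w(\mathcal{C}+C)\bigr) \;\le\; \tfrac{\delta}{2k}|D|.
\]
Because $(\mathcal{C},D)$ is $k$-approximate in the unweighted graph and $|D|=\cost(\mathcal{C})$, the right-hand side is at most $\tfrac{\delta}{2}\cost(\mathcal{C}^\ast)$, which supplies the $\delta\cos
t(\mathcal{C}^\ast)$ term on the LHS of the target inequality.

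Next, for each $C\in\mathcal{C}^\ast$ I would expand $\cost_w(\mathcal{C})-\cost_w(\mathcal{C}+C)$ by splitting the cost change into four classes along two axes: positive edge vs.\ non-edge, and internal to $C$ vs.\ crossing $C$'s boundary. When summing over $C\in\mathcal{C}^\ast$, each unordered pair is charged once if its endpoints lie in the same cluster of $\mathcal{C}^\ast$ and twice if they lie in distinct clusters; this double-counting is the origin of the coefficient $2$ on $\sum_i |E\setminus\mathcal{E(C}_i)\setminus\mathcal{E(C^\ast)}|$. Substituting $w = w_0 + \sum_{i=1}^{\ell}\beta\,\mathbf{1}[e\notin\mathcal{E(C}_i)]$ into the weighted pieces splits each contribution into an unweighted summand and one $\beta$-multiple per index $i$, producing exactly the symmetric-difference terms that appear in the statement.

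The hypothesis enters in the final step: the unweighted contributions collected above reassemble into an expression involving $\cost(\mathcal{C})$, at which point the assumption $\cost(\mathcal{C}) > (2-\hat{\alpha})\cost(\mathcal{C}^\ast)$ lets me lower-bound this copy by $(2-\hat{\alpha})\cost(\mathcal{C}^\ast)$ and move a residual $2\cost(\mathcal{C}^\ast)$ across using the trivial identity $\cost(\mathcal{C}^\ast) = |E\setminus\mathcal{E(C^\ast)}|+|\mathcal{E(C^\ast)}\setminus E|$, yielding the $\hat{\alpha}\cost(\mathcal{C}^\ast)$ coefficient on the LHS. The main obstacle will be the detailed edge bookkeeping needed to track all four edge classes, the double-counting across clusters of $\mathcal{C}^\ast$, and the decomposition of $w$ into its $w_0$ and $\beta$ parts simultaneously; ensuring that the coefficients $1$, $\beta$, and $2\beta$ land on the correct symmetric-difference terms with matching signs is the step most prone to algebraic error.
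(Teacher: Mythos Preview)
The paper does not supply its own proof of this lemma; it is imported as ``Lemma~25 of \cite{CLMTYZ24}'' with the remark that the statement ``has essentially been proven before,'' and no argument is given in the text. Your outline reproduces exactly the standard argument from that source---apply the local-optimality inequality summed over the clusters of $\mathcal{C}^\ast$, expand $\cost_w(\mathcal{C})-\cost_w(\mathcal{C}+C)$ edge by edge using $w=w_0+\beta\sum_i\mathbf{1}[e\in E\setminus\mathcal{E}(\mathcal{C}_i)]$, and then invoke $\cost(\mathcal{C})>(2-\hat{\alpha})\cost(\mathcal{C}^\ast)$ to convert the residual unweighted cost into the $\hat{\alpha}\cost(\mathcal{C}^\ast)$ term---so there is nothing to compare. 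One cosmetic point: from $\tfrac{\delta}{2k}|D|$ and the $k$-approximation you obtain $\tfrac{\delta}{2}\cost(\mathcal{C}^\ast)$, not $\delta\cost(\mathcal{C}^\ast)$; this is harmless since the stated inequality only gets weaker with the larger coefficient, but your sentence ``supplies the $\delta\cost(\mathcal{C}^\ast)$ term'' slightly overstates what the step yields.
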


Now for $1\le i \le s$ and for all edges $(u,v)$ define the distance $d_i(u,v)$ to be the number of times $u$ and $v$ are separated in $\mathcal{C}_{i-1}'$, $\mathcal{C}_{i}$ and
$\mathcal{C}_{i}'$, so $d_i(u,v) \in [0, 1, 2, 3]$. It is clear that $d_i$ is a metric, that is, for any three nodes $u, v, w$, we have $d_i(u,v) \leq d_i(u,w) + d_i(v,w)$. We set up a budget function for each edge using $d_i$, define $b_i(u,v)$ and $x_i(u,v)$ as for edges
\begin{equation*}
\forall (u,v) \in E: b^{+}_i(u,v) = x_i(u,v) = \begin{cases}
0 & d_i(u,v) = 0\\
0 & d_i(u,v) = 1 \\
1 & d_i(u,v) = 2 \\
3 & d_i(u,v) = 3
\end{cases}
\end{equation*}
and for non-edges as
\begin{equation*}
\forall (u,v) \in \binom{V}{2}\setminus E: b^{-}_i(u,v) = 1 - x_i(u,v) = \begin{cases}
3 & d_i(u,v) = 0\\
2 & d_i(u,v) = 1 \\
1 & d_i(u,v) = 2 \\
0 & d_i(u,v) = 3 \\
\end{cases}
\end{equation*}
then we have
\begin{align}
    \sum_{(u,v) \in E^{+}} b^{+}_i(u,v) + \sum_{(u,v) \not\in E} b^{-}_i(u,v) \leq & \quad |\mathcal{E}(\mathcal{C}_{i-1}')\setminus E| + |\mathcal{E}(\mathcal{C}_{i})\setminus E| + |\mathcal{E}(\mathcal{C}_{i}')\setminus E| \nonumber\\
    &+ |E \setminus \mathcal{E}(\mathcal{C}_{i-1}') \setminus \mathcal{E}(\mathcal{C}_{i})| \nonumber\\ &+ |E \setminus \mathcal{E}(\mathcal{C}_{i-1}') \setminus \mathcal{E}(\mathcal{C}_{i}')| \nonumber\\ 
    &+ |E \setminus \mathcal{E}(\mathcal{C}_{i}) \setminus \mathcal{E}(\mathcal{C}_{i}')| \label{eq:budget-bound}
\end{align}
Now consider the following pivot algorithm: we randomly choose a pivot $u$ and then for any edge $(u,v) \in E$, we add $v$ to $u$'s cluster if $x(u,v) = 0$, add $v$ to $u$'s clustering with probability $1/4$ if $x(u,v) \geq 1$; for any edge $(u,v) \not\in E^{-}$, we will add $v$ to $u$'s cluster if $x(u,v) < 0$ and add $v$ to $u$'s cluster with probability $3/4$ if $x(u,v) = 0$. Then we remove the $u$'s cluster and repeat the whole process until all nodes have been clustered.\footnote{Given the symmetric difference $D$, it should be clear that this can be done in $O(|D|)$ time.} We can show that the cost of the pivot algorithm can be bounded by 
\begin{align}
\label{eq:localsearch}
    \cost(C_i'')\le 1.5\left(\sum_{(u,v) \in E} b^{+}_i(u,v) + \sum_{(u,v) \not\in E} b^{-}_i(u,v)\right)
\end{align}

One can use a standard triangle analysis to show \cref{eq:localsearch}. For any triangle $(u,v,w)$, let $\cost((u,v,w))$ and $\mathrm{lp}((u,v,w))$ be the cost of the algorithm and the standard linear programming. Then it can be to shown that $\cost(u,v,w) - 1.5\mathrm{lp}(u,v,w) \leq 0$ for all possible triangles $(u,v,w)$.

Combining \cref{eq:budget-bound} and \cref{eq:localsearch} we get
\begin{align}
    \frac{2}{3}\cost(C_i'')\leq & \quad |\mathcal{E}(\mathcal{C}_{i-1}')\setminus E| + |\mathcal{E}(\mathcal{C}_{i})\setminus E| + |\mathcal{E}(\mathcal{C}_{i}')\setminus E| \nonumber\\
    &+ |E \setminus \mathcal{E}(\mathcal{C}_{i-1}') \setminus \mathcal{E}(\mathcal{C}_{i})| \nonumber\\ &+ |E \setminus \mathcal{E}(\mathcal{C}_{i-1}') \setminus \mathcal{E}(\mathcal{C}_{i}')| \nonumber\\ 
    &+ |E \setminus \mathcal{E}(\mathcal{C}_{i}) \setminus \mathcal{E}(\mathcal{C}_{i}')| \label{eq:cpp-bound}
\end{align}

\begin{proof}[Proof of \cref{thm:improved-approx-ratio}]
    Using \cref{eq:localsearch} the proof is essentially equivalent to that of \cite{CLMTYZ24}.
    
    For the purpose of contradiction we assume that for all $\mathcal{C}\in \{\mathcal{C}_i,\mathcal{C}_i',\mathcal{C}_i''\}$, we have $\cost(\mathcal{C}) > (2-\hat{\alpha})\cost(\mathcal{C}^*$.
    
    First, we apply \cref{lem:iterated-w-increase} to $\mathcal{C}_0'$ to get
    \begin{equation}\label{eq:c0-bound}
        \left(\hat{\alpha} + \delta\right) \cost(\mathcal{C}^\ast))\ge
        |E \setminus\mathcal{E(C}_0') \setminus \mathcal{E(C^\ast)}|
        +|\mathcal{E(C}_0')\setminus E|.
    \end{equation}
    
    For any $1\le i \le s$, applying \cref{lem:iterated-w-increase} to $\mathcal{C}_i$ and using $\beta=\frac{1}{2}$ gives
    \begin{align}
        \left(\hat{\alpha} + \delta\right) \cost(\mathcal{C}^\ast)) &+ |E \setminus \mathcal{E(C}_{i-1}') \setminus \mathcal{E(C^\ast)}| \ge\nonumber\\
        &0.5|E \setminus\mathcal{E}(\mathcal{C}_i) \setminus \mathcal{E(C}_{i-1}')|
        +|E \setminus\mathcal{E}(\mathcal{C}_i) \setminus \mathcal{E(C^\ast)}|
        +|\mathcal{E}(\mathcal{C}_i)\setminus E|.\label{eq:ci-bound}
    \end{align}
    Similarly applying \cref{lem:iterated-w-increase} to $\mathcal{C}_i'$ we have:
    \begin{align}
        \left(\hat{\alpha} + \delta\right) \cost(\mathcal{C}^\ast)) &+ |E \setminus \mathcal{E(C}_{i-1}') \setminus \mathcal{E(C^\ast)}|+|E \setminus \mathcal{E(C}_i) \setminus \mathcal{E(C^\ast)}| \ge\nonumber\\
        &0.5 |E \setminus\mathcal{E}(\mathcal{C}_{i-1}') \setminus \mathcal{E(C}_i')|
        +0.5 |E \setminus\mathcal{E}(\mathcal{C}_i) \setminus \mathcal{E(C}_i')|
        +|E \setminus\mathcal{E}(\mathcal{C}_i') \setminus \mathcal{E(C^\ast)}|
        +|\mathcal{E}(\mathcal{C}_i')\setminus E|.\label{eq:cip-bound}
    \end{align}
    To bound the right-hand side of \cref{eq:cpp-bound} we use twice \cref{eq:ci-bound} and twice \cref{eq:cip-bound} to get
    \begin{align}
        \left(4\hat{\alpha} + 4\delta\right) \cost(\mathcal{C}^\ast)) &+ 4|E \setminus \mathcal{E(C}_{i-1}') \setminus \mathcal{E(C^\ast)}| \ge\nonumber\\
        &|E \setminus\mathcal{E}(\mathcal{C}_i) \setminus \mathcal{E(C}_{i-1}')|
        + |E \setminus\mathcal{E}(\mathcal{C}_{i-1}') \setminus \mathcal{E(C}_i')|
        + |E \setminus\mathcal{E}(\mathcal{C}_i) \setminus \mathcal{E(C}_i')|\nonumber\\
        &+2|E \setminus\mathcal{E}(\mathcal{C}_i') \setminus \mathcal{E(C^\ast)}|
        +2|\mathcal{E}(\mathcal{C}_i')\setminus E|+2|\mathcal{E}(\mathcal{C}_i)\setminus E|
    \end{align}
    Combining with \cref{eq:cpp-bound} we get 
    \begin{align}
        \left(4\hat{\alpha} +4\delta\right) \cost(\mathcal{C}^\ast)) &+|\mathcal{E}(\mathcal{C}_{i-1}')\setminus E|+ 4|E \setminus \mathcal{E(C}_{i-1}') \setminus \mathcal{E(C^\ast)}| \ge\nonumber\\
        &\frac{2}{3}\cost(C_i'')+2|E \setminus\mathcal{E}(\mathcal{C}_i') \setminus \mathcal{E(C^\ast)}|
        +|\mathcal{E}(\mathcal{C}_i')\setminus E|+|\mathcal{E}(\mathcal{C}_i)\setminus E|\label{eq:combined-it-bound}
    \end{align}
    Using the assumption that $(2-\hat{\alpha})\cost(\mathcal{C}^\ast)<\cost(C_i'')$ we rewrite \cref{eq:combined-it-bound} to
    \begin{align}
        \left(\frac{14}{3}\hat{\alpha} +4\delta-\frac{4}{3}\right) \cost(\mathcal{C}^\ast)) &+|\mathcal{E}(\mathcal{C}_{i-1}')\setminus E|+ 4|E \setminus \mathcal{E(C}_{i-1}') \setminus \mathcal{E(C^\ast)}| \ge\nonumber\\
        &2|E \setminus\mathcal{E}(\mathcal{C}_i') \setminus \mathcal{E(C^\ast)}|
        +|\mathcal{E}(\mathcal{C}_i')\setminus E|+|\mathcal{E}(\mathcal{C}_i)\setminus E|\label{eq:combined-it-bound-cost}
    \end{align}
    From this expression, we motivate defining 
    \[b_i := \frac{2|E \setminus\mathcal{E}(\mathcal{C}_i') \setminus \mathcal{E(C^\ast)}| + |\mathcal{E}(\mathcal{C}_i')\setminus E|+|\mathcal{E}(\mathcal{C}_i)\setminus E|}{\cost(\mathcal{C}^\ast))}\]
    Using this definition we can rewrite \cref{eq:combined-it-bound-cost} to
    \begin{equation}\label{eq:bim1-bi-relation}
        b_{i-1} + \frac{14}{3}\hat{\alpha} + 4\delta - \frac{4}{3} > b_i-b_{i-1}
    \end{equation}
    By defining $\mathcal{C}_0$ to be a clustering of only singletons combined with \cref{eq:c0-bound} we have 
    \begin{equation}\label{eq:b0-bound}
        b_0\le 2\alpha+2\delta \le 1
    \end{equation}
    Since every $b_i$ is nonnegative and by the choice of $s$ there exists an index $j$ such that $b_j > b_{j-1}-(\frac{1}{5}-\hat{\alpha})$. Let $j_0$ be the smallest such index. Now by \cref{eq:b0-bound} and the definition of $j_0$, for every $0\leq i< j_0$ we have
    \[b_i\leq 2\alpha + 2\delta\]
    Combining this with \cref{eq:bim1-bi-relation} we get the equation
    \[\frac{20}{3}\hat{\alpha} + 6\delta - \frac{4}{3} > \hat{\alpha}-\frac{1}{5}\]
    which can be rewritten to
    \[\hat{\alpha} + \frac{18}{17}\delta > \frac{1}{5}.\]
    By plugging in the definition of $\hat{\alpha}$ we get 
    \[\alpha + \frac{36}{17}\delta > \frac{1}{5}\]
    This is however a contradiction of $\alpha < \frac{1}{5}$ and the choice of $\delta < \delta_0$.
\end{proof}

\end{document}